\newtheorem{theorem}{Theorem}[section]
\newtheorem{lemma}[theorem]{Lemma}
\newtheorem{corollary}[theorem]{Corollary}
\newtheorem{definition}{Definition}[section]
\newenvironment{theoremR}[1]
  {\def\repthmref{#1}\theoremRinner (restated)}{\endtheoremRinner}
	\newenvironment{theoremRF}[1]
  {\def\repthmref{#1}\theoremRinner (restated and rephrased)}{\endtheoremRinner}
\newcommand{\poly}{\operatorname{\text{{\rm poly}}}}
\newcommand{\polylog}{\operatorname{\text{{\rm polylog}}}}
\renewcommand{\stretch}{\ensuremath{\mathrm{stretch}}}
\newcommand{\namedref}[2]{\hyperref[#2]{#1~\ref*{#2}}}
\newcommand{\sectionref}[1]{\namedref{Section}{#1}}
\newcommand{\definitionref}[1]{\namedref{Definition}{#1}}
\newcommand{\theoremref}[1]{\namedref{Theorem}{#1}}
\newcommand{\figureref}[1]{\namedref{Figure}{#1}}
\newcommand{\lemmaref}[1]{\namedref{Lemma}{#1}}
\newcommand{\corollaryref}[1]{\namedref{Corollary}{#1}}
\newcommand{\equalityref}[1]{\hyperref[#1]{Equality~\eqref{#1}}}
\newcommand{\inequalityref}[1]{\hyperref[#1]{Inequality~\eqref{#1}}}
\newcommand{\algorithmref}[1]{\hyperref[#1]{Algorithm~\ref{#1}}}
\newcommand{\vf}{\boldsymbol{f}}
\newcommand{\vb}{\boldsymbol{b}}
\newcommand{\vy}{\boldsymbol{y}}
\newcommand{\cT}{\mathcal{T}}
\newcommand{\cF}{\mathcal{F}}
\newcommand{\cR}{\mathcal{R}}
\newcommand{\cC}{\mathcal{C}}
\newcommand{\cV}{\mathcal{V}}
\newcommand{\cG}{\mathcal{G}}
\newcommand{\cE}{\mathcal{E}}
\newcommand{\cL}{\mathcal{L}}
\newcommand{\cA}{\mathcal{A}}
\newcommand{\cP}{\mathcal{P}}
\newcommand{\cD}{\mathcal{D}}
\newcommand{\cH}{\mathcal{H}}
\newcommand{\cS}{\mathcal{S}}
\newcommand{\cJ}{\mathcal{J}}
\newcommand{\Congest}{\textsf{CONGEST}\xspace}
\newcommand{\set}[1]{\left\{#1\right\}}
\newcommand{\Cp}{\mathrm{cap}}
\newcommand{\Reals}{\mathbb{R}}
\newcommand{\Nats}{\mathbb{N}}
\newcommand{\eps}{\varepsilon}
\newcommand{\maxnorm}[1]{\left\|#1\right\|_\infty}
\newcommand{\Set}[1]{\left\{ #1 \right\}}
\newcommand{\E}{\mathbb{E}}
\newcommand{\TT}{\mathbb{T}}
\DeclareMathOperator{\Smax}{smax}
\DeclareMathOperator{\sgn}{sgn}
\DeclareMathOperator{\rload}{rload}
\renewcommand{\paragraph}{%
  \@startsection{paragraph}{4}%
  {\z@}{1.2ex \@plus .5ex \@minus .1ex}{-.7em}%
  {\normalfont\normalsize\bfseries}%
}
\def\top{\intercal}
\newcommand{\setR}{\mathbb{R}}
\begin{document}

\title{\bf Near-Optimal Distributed Maximum Flow}

\author{
Mohsen Ghaffari\\
Massachusetts Institute of Technology\\
Cambridge MA, USA\\
{\small \texttt{ghaffari@csail.mit.edu}}
\and
Andreas Karrenbauer\\
MPI for Informatics\\
Saarbr\"ucken, Germany\\
{\small \texttt{karrenba@mpi-inf.mpg.de}}
\and
Fabian Kuhn\\ 
Univerity of Freiburg\\
Freiburg, Germany\\
{\small \texttt{kuhn@cs.uni-freiburg.de}}
\and
Christoph Lenzen\\ 
MPI for Informatics\\
Saarbr\"ucken, Germany\\
{\small \texttt{clenzen@mpi-inf.mpg.de}}
\and
Boaz Patt-Shamir\\
Tel Aviv University\\
Tel Aviv, Israel\\
{\small \texttt{boazps@post.tau.ac.il}}
}
\date{}

\maketitle

\begin{abstract}
We present a near-optimal distributed algorithm for $(1+o(1))$-approximation of
single-commodity maximum flow in undirected weighted networks that runs in $(D+
\sqrt{n})\cdot n^{o(1)}$ communication rounds in the \Congest model. 
Here, $n$ and $D$ denote the number of nodes and the network diameter,
respectively. This is the first improvement over the trivial bound of $O(n^2)$,
and it nearly matches the $\tilde{\Omega}(D+ \sqrt{n})$ round complexity lower 
bound. 

The development of the algorithm contains two results of
independent interest:
\begin{compactenum}[(i)]
  \item A $(D+\sqrt{n})\cdot n^{o(1)}$-round distributed construction 
  of a
  spanning tree of average stretch $n^{o(1)}$.
  \item A $(D+\sqrt{n})\cdot n^{o(1)}$-round distributed construction 
  of an
  $n^{o(1)}$-congestion approximator consisting of the cuts induced by $O(\log
  n)$ virtual trees. The distributed representation of the cut approximator
  allows for evaluation in $(D+\sqrt{n})\cdot n^{o(1)}$ rounds.
\end{compactenum}
All our algorithms make use of randomization and succeed with high probability.
\end{abstract}

\listoftodos

\section{Introduction}

Computing a maximum flow is a fundamental task in network
optimization. While the problem has a decades-old history rich with
developments and improvements in the sequential setting, little is
known in the distributed setting. In fact, prior to this work, the
best known distributed time complexity in the standard \Congest model
remained at the trivial bound of $O(m)$, which is the time needed to
collect the entire topology and solve the problem locally. For
undirected networks, this paper improves this unsatisfying state to
near-optimality:
\begin{theorem}\label{theorem:main_informal}
  On undirected weighted graphs, a $(1+\eps)$-approximation of a
  maximum $s$-$t$ flow can be computed in
  $(D+ \sqrt{n})\cdot n^{o(1)} \eps^{-3}$ rounds of the \Congest model
  with high probability.
\end{theorem}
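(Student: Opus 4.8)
The strategy is to run Sherman's gradient-descent framework for approximate maximum flow inside the \Congest model. An $s$-$t$ flow of value $F$ and congestion at most $1$ exists if and only if the unit $s$-$t$ demand $\vb_{st}$, scaled by $F$, can be routed in $G$ with congestion at most $1$; hence $F^\star = 1/\Opt_G(\vb_{st})$, where $\Opt_G(b)$ denotes the minimum congestion of routing demand $b$, and a $(1+\eps)$-approximate maximum flow is obtained from a $(1+\eps)$-approximate estimate of $\Opt_G(\vb_{st})$ together with a near-optimal routing. Sherman's method produces both given an $\alpha$-congestion approximator, i.e.\ a linear map $R$ with $\maxnorm{Rb}\le\Opt_G(b)\le\alpha\maxnorm{Rb}$ for all $b$: one performs non-Euclidean (mirror) gradient descent on a softmax surrogate of the congestion, using an inner primitive $\AR$ that either approximately routes a demand or certifies that it is infeasible. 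After $n^{o(1)}\cdot\poly(\alpha)\cdot\eps^{-3}$ iterations this yields a flow that meets the demand up to a residual of small $\ell_1$-norm and has congestion $(1+\eps)\,\Opt_G(\vb_{st})$; the cubic dependence on $\eps^{-1}$ combines the $\tilde O(\eps^{-2})$ inner gradient steps with an $O(\eps^{-1})$-round outer loop that drives the residual demand down. Each iteration consists of a constant number of applications of the incidence operator $B$ and of $B^\top$, a constant number of applications of $R$ and $R^\top$, coordinatewise vector operations, and the evaluation of a handful of global scalars.

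For the congestion approximator we plug in the object of item~(ii) of the introduction: the cuts induced by $O(\log n)$ virtual trees $\cT_1,\dots,\cT_k$, scaled by their capacities, stacked as the rows of $R$. This $R$ is an $n^{o(1)}$-congestion approximator and is built on top of the distributed low-average-stretch spanning tree of item~(i). The decisive extra feature, also established there, is that each $\cT_i$ carries a distributed representation --- a low-congestion, low-dilation embedding into $G$ together with a small-diameter skeleton / set of low-congestion shortcuts --- so that a node- or edge-indexed vector can be aggregated up and pushed back down all $k$ trees, i.e.\ $R$ and $R^\top$ evaluated, in $(D+\sqrt n)\cdot n^{o(1)}$ rounds.

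It remains to implement the iteration distributedly and to clean up the output. Edge-supported vectors (flows, gradients) are stored with each edge holding its own entry and vertex-supported vectors (demands, potentials) with each vertex holding its own entry; then $B,B^\top$ cost one round of neighbor exchange, the softmax gradient, the step, and the $\ell_\infty$/$\ell_1$-type reweightings are coordinatewise, and the few global scalars (current potential, step length, current congestion estimate) are obtained by one $\tilde O(\sqrt n+D)$-round aggregation; the only genuinely global linear-algebraic operation, applying $R$ and $R^\top$, is the one handled above. Multiplying the $n^{o(1)}\eps^{-3}$ iterations by the $(D+\sqrt n)\cdot n^{o(1)}$ per-iteration cost gives the claimed round complexity. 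To turn the near-feasible output into an exact flow, route its residual demand along a single distributed low-stretch spanning tree --- cheap, since this is a tree routing of a demand of tiny norm --- obtaining an $s$-$t$ flow that meets the demand exactly with congestion at most $(1+O(\eps))\,\Opt_G(\vb_{st})$, hence value at least $(1-O(\eps))F^\star$; rescaling $\eps$ by a constant yields the statement. All randomization lives in the one-time construction of the trees and of the shortcut/skeleton infrastructure, each succeeding with high probability; everything thereafter is deterministic.

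The main obstacle is the distributed construction and evaluation of the congestion approximator, i.e.\ items~(i) and~(ii): the classical sequential low-stretch-tree and R\"acke-style cut-decomposition constructions are heavily recursive and communicate in patterns that do not fit $\tilde O(\sqrt n+D)$-round \Congest, so one must simultaneously arrange for low average stretch, for a bounded Steiner (``virtual'') tree structure that embeds into $G$ with small congestion and dilation, and for enough locality --- via low-congestion shortcuts or small-diameter skeletons --- that both building $R$ and every later evaluation of $R,R^\top$ run in $(D+\sqrt n)\cdot n^{o(1)}$ rounds. Granting items~(i) and~(ii), the remaining work --- the distributed bookkeeping of the gradient descent and the final tree-rounding --- is a comparatively routine adaptation of the sequential analysis.
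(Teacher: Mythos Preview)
Your proposal is correct and follows essentially the same approach as the paper: distribute Sherman's gradient-descent scheme, use the $O(\log n)$ virtual trees of item~(ii) as the congestion approximator, implement $R$ and $R^\top$ by converge- and downcasts on those trees, and handle the remaining per-iteration work with local exchanges plus BFS aggregations. Two small deviations worth noting: the paper routes the final residual on a \emph{maximum-capacity} spanning tree (computed via Kutten--Peleg MST) rather than a low-stretch tree, and the $\eps^{-3}$ in the paper comes directly from Sherman's bound of $\tilde O(\alpha^2\eps^{-3})$ iterations of $\mathsf{AlmostRoute}$ rather than from an $\eps^{-2}\times\eps^{-1}$ split --- the outer loop in the paper is only $O(\log m)$ calls with fixed $\eps=1/2$.
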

This round complexity almost matches the $\tilde{\Omega}(D+ \sqrt{n})$
lower bound of Das Sarma et al.~\cite{DasSarma-11}, which holds for
any non-trivial approximation.  Before we proceed, let us formalize
the model and the problem.

\todo{I guess we need a bit more exposition in the introduction of the journal
version.}

\subsection{Model and Problem}
\paragraph{Model.}
We use the standard \Congest model of synchronous
computation~\cite{Peleg:book}.  We are given a simple, connected,
weighted graph $G=(V,E,\Cp)$, where $\Cp:E\to \Nats$,
$\Cp(e)\in \poly n$, are the edge capacities.\footnote{As merely an
  approximate flow is required, we can reduce the general case to this
  setting in $\tilde{O}((\sqrt{n}+D)\log C)$ rounds, where $C$ is an
  upper bound on the ratio between the largest and smallest capacity.}
\todo{Journal version: Add a lemma and/or a short appendix section
  explaining this footnote somewhere.}  By $D$, we denote the (hop)
diameter of $G$. Each of the $n:=|V|$ nodes hosts a processor with a
unique identifier of $O(\log n)$ bits, and over each of the $m:=|E|$
edges $O(\log n)$ bits can be sent in each synchronous round of
communication; we assume that nodes have access to infinite strings of
independent unbiased random bits. We say that an event occurs
\emph{with high probability} (w.h.p.), if it happens with probability
$1-n^{-c}$ for any desired constant $c>0$ specified
upfront.\footnote{Taking the union bound over polynomially many events
  does not affect this property. We will use this fact frequently and
  implicitly throughout the paper.} Initially, each node only knows
its identifier, its incident edges, and their capacities.

\paragraph{Problem.}
We fix an arbitrary orientation of the edges. In the following, we write
$(u,v)\in E$ if $\{u,v\}\in E$ is directed from $u$ to $v$. An instance of the
(single-commodity) \emph{max flow} problem is given by, in addition to
specifying $G$, designating a source $s\in V$ and a sink $t\in V$. A (feasible)
\emph{flow} is a vector $\vf \in \Reals^E$ satisfying:
\begin{compactenum}
\item capacity constraints (edges): $\forall e\in E:~|f_e|\leq 
\Cp(e)$\,;
\item preservation constraints (nodes): $\forall u\in 
V\setminus\{s,t\}:~\sum_{(u,v)\in E}f_e - \sum_{(v,u)\in
E}f_e = 0$\,; and
\item $\sum_{(s,u)\in E}f_e - \sum_{(u,s)\in E}f_e = - \sum_{(t,u)\in
E}f_e + \sum_{(u,t)\in E}f_e = F\in \Reals$\,.
\end{compactenum}
Here, $F$ is the \emph{value} of $\vf$. A \emph{max flow} is a flow of
maximum value. For $\eps>0$, a \emph{$(1+\eps)$-approximate max flow}
is a flow whose value is by at most a factor $1+\eps$ smaller than
that of a max flow. In this work, we focus on solving the problem of
finding a $(1+\eps)$-approximate max flow in the above model, where it
suffices that each node $u$ learns $f_e$ for its incident edges
$\{u,v\}\in E$.

\subsection{Related Work}\label{sec:related}

\todo{Need to extend the related work for journal version. (E.g., a bit on
centralized algorithms leading to Sherman, cite KLOS and Peng's improvement to
$\tilde{O}(m)$.)}

Network flow, being one of the canonical and most useful optimization
problems, has been the target of innumerable research efforts since
the 1930s~\cite{schrijver} (see, e.g., the classic book
\cite{AhujaMO-93} and the recent survey~\cite{GT14}).  For the
general, directed case, the fastest known sequential algorithm is by
Goldberg and Rao and it solves the max flow problem in time
$\tilde{O}(m\cdot \min\set{m^{1/2},n^{1/2}})$. Particularly relevant
from the point of view of the present paper are recent efforts to
obtain fast algorithms to compute (approximate) max flow solutions in
the undirected case. Using the graph sparsification technique of
Bencz\'ur and Karger \cite{BK15}, any graph can be partitioned into
$k=\tilde{O}(m\eps^2/n)$ sparse graphs with $\tilde{O}(n/\eps^2)$
edges such that the max flow problem can be approximately solved by
combining max flow solutions for each of these sparse graphs. Using
the algorithm of Goldberg and Rao, this results in an algorithm with
running time $\tilde{O}(mn^{1/2})$. In \cite{CKMST11}, Christiano et
al.\ improved this running time to $\tilde{O}(mn^{1/3})$ by applying
the almost linear-time Laplacian solver of Spielman and Teng
\cite{ST06} to iteratively minimize a softmax approximation of the
edge congestions. Kelner et al.~\cite{KLOS14} and
Sherman~\cite{Sherman13} independently published two algorithms which
allow to compute a $(1+\eps)$-approximation to an undirected max flow
problem in time almost linear in $m$. 
Finally, Peng~\cite{Peng14} proved the first running time in $O(m \polylog(n))$.

However, to the dismay of many, and despite the fact that the word 
``network'' even appears in the problem's name, only little progress 
was made over the years from the standpoint of distributed algorithms.
For example, Goldberg and Tarjan's push-relabel algorithm, which is 
very local and simple to implement 
in the \Congest model, requires $\Omega(n^2)$ rounds to converge, where 
$n$ is 
the number of nodes. This is very 
disappointing, because in the \Congest model, any problem whose input 
and output can be encoded with $O(\log n)$ bits per edge, can be 
trivially solved 
in $O(m)$ rounds, where $m$ is the number of edges, by collecting all 
input at a single node, 
solving it there, and distributing the results back.

Early attempts focused, as customary in those days, on reducing the
number of messages in asynchronous executions. 
For example, Segall~\cite{Segall82} 
gives an $O(nm^2)$-messages, 
$O(n^2m)$-time algorithm for exact max flow, and Gafni and Marberg~\cite{MG87} give an an algorithm whose message and time complexities
are $O(n^2m^{1/2})$.
Awerbuch has attacked the problem repeatedly with the following 
results. In an early work~\cite{A85:SynchronizerUse} he adapts Dinic's
centralized algorithm using a synchronizer, giving rise to an algorithm whose
time and message complexities are $O(n^3)$.
With Leighton, in \cite{AL-94} they give an algorithm for 
solving multicommodity flow  approximately in $O(\ell m\log m)$ rounds, 
where $\ell<n$ is the length of the longest flow path. Later he 
considers the model where each flow path (variable) has an ``agent'' 
which can find the congestion of all links on its path in 
\emph{constant 
time}. In this model, he shows with Khandekar \cite{AK-09}  how 
to approximate any positive LP (max flow with given routes 
included) to within 
$(1-\epsilon)$ in time polynomial in $\log(mn A_{\max}/\epsilon)$
(here $n$ is the number of variables, which is at least the
number of paths considered). The 
same model is used with Khandekar and Rao in \cite{AKR12}, where 
they show how to approximate multicommodity flow to within 
$(1-\epsilon)$ in 
$O(\ell\log n)$ rounds. Using a straightforward implementation of this 
algorithm in the \Congest model results in an $\tilde O(n^2)$-time 
algorithm.

Thus, up to the current paper, there was no distributed 
implementation 
of a max-flow algorithm which always requires a sub-quadratic number of 
rounds. Even an $O(n)$-time algorithm would have been considered a significant 
improvement, even for the $0/1$ capacity case.

\subsection{Organization of this Article}

Our result builds heavily on a few major breakthroughs in the
understanding of max flow in the centralized setting, most notably the
almost linear-time approximation algorithm for the undirected max flow
problem by Sherman \cite{Sherman13}, as well as a few other
contributions.
We first give an overview of the key concepts in Sections
\ref{sec:overview}--\ref{sec:approximator}. We carefully revisit
Sherman's approach~\cite{Sherman13} and the main building blocks he
relies on in \sectionref{sec:overview}. This sets the stage for
shedding light on the challenges that must be overcome for its
distributed implementation and presentation of our results
in \sectionref{sec:contribution}.  There, we also provide a top-level
view of the components of the algorithm, alongside pointers to the
detailed proofs in Sections \ref{sec:cluster}--\ref{sec:gradient}
showing that we can implement each of them by efficient distributed
algorithms. In \sectionref{sec:approximator}, we outline the
distributed construction of an \emph{$n^{o(1)}$-congestion
  approximator}, which is our key technical contribution; the role of
a congestion approximator is to \emph{estimate} the congestion induced
by optimally routing an arbitrary demand vector very quickly, which
lies at the heart of the algorithm. All the details of our distributed
algorithm and all the proofs appear in Sections
\ref{sec:cluster}--\ref{sec:gradient}.


\section{Overview of the Centralized Framework}\label{sec:overview}

Sherman's approach~\cite{Sherman13} is based on \emph{gradient
descent} (see, e.g.,~\cite{nesterov2004introductory-book}) for \emph{congestion minimization} with a clever 
dualization of the flow conservation constraints. The flow problem is
re-formulated as a \emph{demand vector} $\vb \in \Reals^n$ such that $\sum_{i\in
V} b_i=0$. In the case of the $s$-$t$ flow problem, we have a positive $b_s$
and negative $b_t$ with the same absolute value and the demand is zero
everywhere else. The objective is to find a flow $\vf^*$ that meets the given
demand vector, i.e., the total excess flow in node $i$ is equal to $b_i$, and
minimizes the maximum \emph{edge congestion}, which is the ratio of the flow
over an edge to its capacity. Formally:
\begin{equation}\label{congestion minimization}
\text{minimize } \maxnorm{C^{-1}\vf} \text{ subject to } B\vf=\vb\,,
\end{equation}
where $C=(C_{ee'})_{e,e'\in E}$ is an $m\times m$ diagonal matrix with
\begin{equation*}
C_{ee'}=\left\{\begin{matrix}
\Cp(e) & \mbox{if }e=e'\\
0 & \mbox{else\,,}
\end{matrix}\right.
\end{equation*}
and $B=(B_{ve})_{v\in V,e\in E}$ is an $n\times m$ matrix with
\begin{equation*}
B_{ve}=\left\{\begin{matrix}
1 & \mbox{if }e=(u,v)\mbox{ for some }u\in V\\
-1 & \mbox{if }e=(v,u)\mbox{ for some }u\in V\\
0 & \mbox{else\,.}
\end{matrix}\right.
\end{equation*}
Note that given a general (i.e., unconstrained) flow vector $\vf\in\Reals^m$,
$(B\vf)_v$ is exactly the excess flow at node $v$. Hence, by the max-flow
min-cut theorem, if we can solve problem~\eqref{congestion minimization}, a
simple binary search will find an approximate max flow.

Instead of directly solving this constrained system, Sherman allows for general
flows and adds a penalty term for any violation of flow constraints, i.e.,
$$\text{minimize } \maxnorm{C^{-1}\vf} + 2\alpha \maxnorm{R(\vb-B\vf)}\,,$$ where
$\alpha\geq 1$ and the matrix $R$ are chosen so that the optimum of this
unconstrained optimization problem does not violate the flow constraints. As we
are interested in an approximate max flow, we can compute an approximate
solution and argue that the violation of the flow constraints will be small,
too. Then one simply re-routes the remaining flow in a trivial manner, e.g.\ on
a spanning tree, to obtain a near-optimal solution. Finally, to ensure that the
objective function is differentiable (i.e., a gradient descent is actually
possible), $\maxnorm{\cdot}$ is replaced by the so-called soft-max. 

\paragraph{The Congestion Approximator $R$.}
The \emph{congestion} of an edge $e$ (for a given flow $\vf$) is 
defined as the ratio
$|f_e|/\Cp(e)$. When referring to the congestion of a cut in a given 
flow, we mean 
the ratio between the net flow crossing the cut to the total capacity 
of the cut.
Suppose for a moment that $\alpha=1$
and $R$ contains one row for each cut of the graph, chosen such that 
each entry of the vector $R B \vf$ equals the congestion of the
corresponding cut. In particular, $R$ would correctly reproduce the 
congestion of min 
cuts (which give rise to maximal congestion). 
Moreover, the vector $R\vb$ describes the inevitable congestion of the 
cuts for any feasible flow.
Thus, the components of $R(\vb-B\vf)$ are the residual congestions to 
be dealt with to make $\vf$ feasible 
(neglecting possible cancellations).
The max-flow min-cut theorem and the factor of $2$ in the second term 
of the
objective function imply that it always improves the value of the
objective function to route the demands arising from a violation of flow
constraints optimally.
Moreover, the gradient descent concentrates on the most congested edges 
and those that are
contained in cuts with the top residual congestion. In particular, 
flow is pushed over the edges into the cut with the highest residual congestion 
to satisfy its demand until other cuts become more important in the second part of the objective. 
The first part of the objective impedes flow on edges the more they are 
congested 
(on an absolute scale and relative to others).
Thus,
approximately minimizing the objective function is equivalent to simultaneously
approximating the minimum congestion and having small violation of flow
constraints; solving up to polynomially small error and naively resolving the
remaining violations then yields sufficiently accurate results.

Unfortunately, trying to make $R$ capture congestion \emph{exactly} is far too
inefficient. Instead, one uses an \emph{$\alpha$-congestion approximator}, that
is a matrix $R$ such that for any demand vector $\vb$, it holds that
\begin{equation*}
\maxnorm{R\vb}\leq \operatorname{opt}(\vb) \leq \alpha\maxnorm{R\vb},
\end{equation*}
where $\operatorname{opt}(\vb)$ is the maximum congestion caused on any cut by
optimally routing $\vb$. Since the second term in the objective function is
scaled up by factor $\alpha$, we are still guaranteed that optimally routing any
excess demands improves the objective function. However, this implies that the
second term of the objective function may dominate its gradient 
and thus emphasis is shifted rather to feasiblity than optimality.
Sherman proves that this slows down the
gradient descent by at most a factor of $\alpha^2$, i.e., if $\alpha \in
n^{o(1)}$, so is the number of iterations of the gradient descent algorithm that
need to be performed.

\paragraph{Congestion Approximators: R\"acke's Construction.}
For any spanning tree $T$ of $G$, deleting an edge partitions the nodes into two
connected components and thus induces an (edge) cut of $G$. Note that on $T$,
this cut contains only the single deleted edge, and in terms of congestion any
cut of $T$ is dominated by such an edge-induced cut: For any cut, the maximum
congestion of an edge is at least the average congestion of the cut, and in $T$,
there is a cut containing only this edge.

These basic properties motivate the question of how well the cut 
structure of an
arbitrary graph can be approximated by trees. Intuitively, the goal is
to find a tree $T$ (not necessarily a subgraph) spanning all nodes
with edge weights such that routing \emph{any} demand vector in $G$ and 
in $T$
results in roughly the same maximal congestion. Because routing flows on
trees is trivial, such a tree $T$ would give rise to an efficient
congestion approximator $R$:
$R$ would consist of one row for each cut induced by an edge $(u,v)$ of $T$ with
capacity $C$, where the matrix entry corresponding to node $w$ is $1/C$ if $w$
is on $u$'s ``side'' of the cut and $0$ otherwise; multiplying a demand vector
with the row then yields the flow that needs to pass through $(u,v)$ divided by
the capacity of the cut.

In a surprising result~\cite{raecke08}, R\"acke showed that, using
multiplicative weight updates (see e.g.~\cite{arora2012multiplicative,
plotkin1995fast, young2001sequential}) one can construct a distribution of
$\tilde{O}(m)$ trees so that (i) in each tree of the distribution, each cut has
at least the same capacity as in $G$ and (ii) given any cut of $G$ of
total capacity $C$, sampling from the distribution results in a tree $T$ where
this cut has \emph{expected} capacity $O(\alpha C)$; here $\alpha$ is the
approximation ratio of a \emph{low average stretch spanning tree} algorithm
R\"acke's construction uses as subroutine. Note that this bound on the
expectation implies that for any cut of capacity $C$, there must be a tree in
the distribution for which the cut has capacity $O(\alpha C)$. Hence, the cuts
given by \emph{all} trees in the distribution give rise to an
$O(\alpha)$-congestion approximator $R$ with $\tilde{O}(m n)$ rows.

\paragraph{Low Average Stretch Spanning Trees.}
In order to perform R\"acke's construction, one requires an efficient algorithm
for computing low average stretch spanning trees. More precisely, given a
graph $G=(V,E,\ell)$ with polynomially bounded lengths $\ell:E\to \Nats$, the
goal is to construct a spanning tree $T$ of $G$ so that
\begin{equation*}
\sum_{\{u,v\}\in E}d_T(u,v)\leq \alpha \sum_{\{u,v\}\in E}\ell(\{u,v\})\,,
\end{equation*}
where $d_T(u,v)$ is the sum of the lengths of the unique path from $u$ to $v$ in $T$
and $\alpha$ is the \emph{stretch} factor.

Sherman's algorithm builds on a sophisticated low average stretch spanning tree
algorithm that achieves $\alpha \in O(\log n \log^2 \log n)$
within $\tilde{O}(m)$ centralized steps~\cite{abraham2008nearly}. We use
a simpler approach providing $\alpha \in 2^{O(\sqrt{\log n \log \log
n})}$~\cite{alon1995graph} that has been shown to parallelize well, i.e., has an
efficient implementation in the PRAM model~\cite{blelloch14}.

\paragraph{Congestion Approximators: Madry's Construction.}
R\"acke's construction has the drawback that one needs to sequentially compute a
linear number of trees, which is prohibitively expensive from our point 
of view. Madry generalized R\"acke's approach to a
construction that results in a distribution over $\tilde{O}(m/j)$ so-called
\emph{$j$-trees}~\cite{Madry10}, where $j$ is a parameter. A $j$-tree 
consists of a \emph{forest} of $j$ connected components (trees) and a 
\emph{core graph}, which is an arbitrary 
connected graph with $j$ nodes: one from each tree (see 
\figureref{fig:j-tree}).
\setlength{\columnsep}{15pt}
\begin{wrapfigure}{r}{0.30\textwidth}
	\vspace{-15pt}
	\begin{center}
  	\includegraphics[width=40 mm]{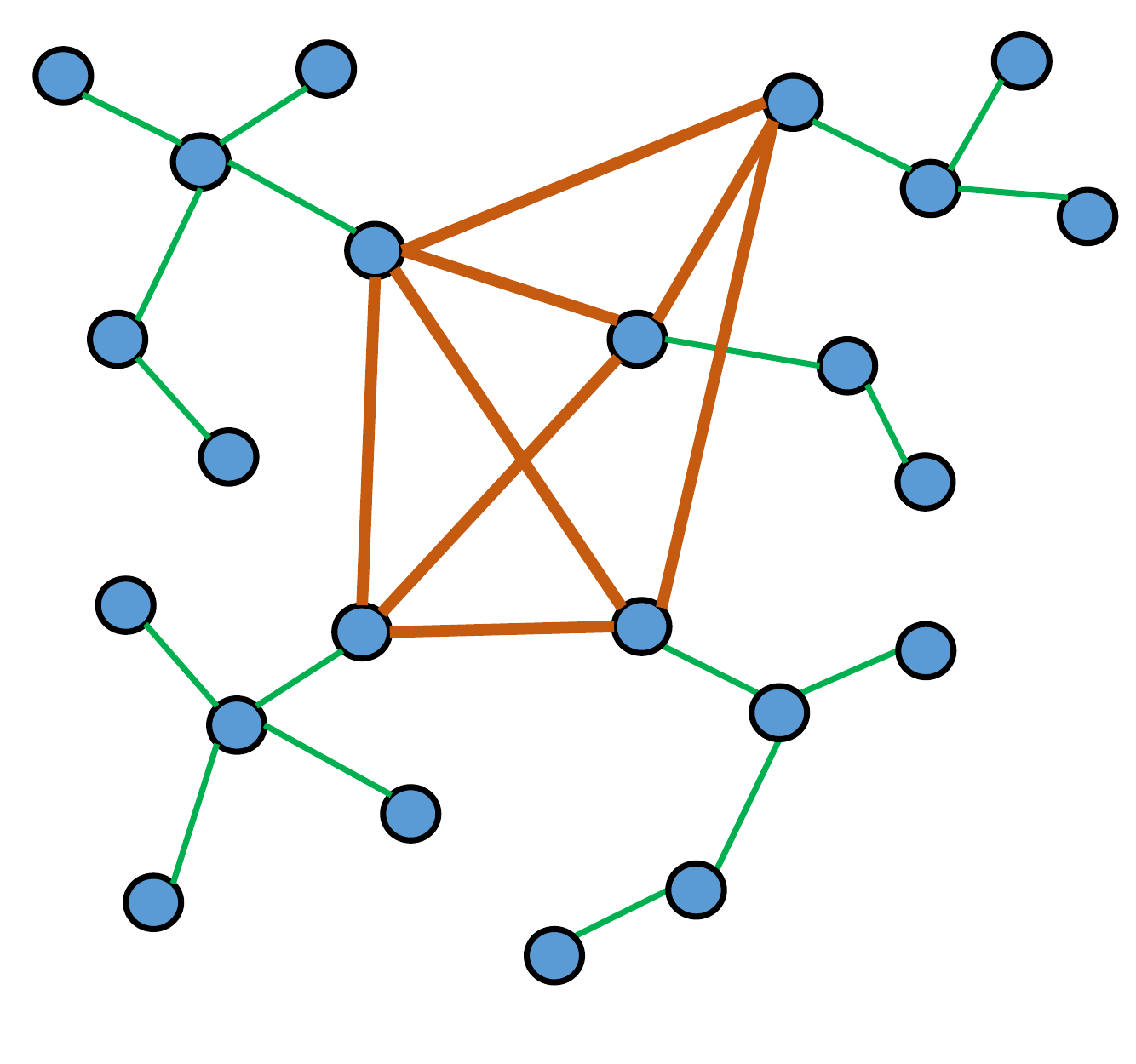}
  \end{center}
	\vspace{-25pt}
	\caption{\small A $j$-tree for $j=5$. The core links are depicted in brown.}
	\vspace{-10pt}
	\label{fig:j-tree}
\end{wrapfigure}

The properties of the distribution are the same as for R\"acke's: sampling from
the distribution preserves cut capacities up to an expected $O(\alpha)$-factor,
where $\alpha$ is the stretch of the utilized spanning tree algorithm. Likewise,
using all (dominant) cuts of all $j$-trees in the distribution to construct $R$
yields an $O(\alpha)$-congestion approximator. Note that any cut in a
$j$-tree is \emph{dominated} by either a cut induced by an edge of the 
forest, or by a cut of the core, in the following sense:
Consider any demand vector and any ``mixed'' cut. If there is an edge in the
forest crossing the cut that has at least the same congestion as the whole cut,
then the cut induced by the forest edge dominates the mixed cut. Otherwise, we
can remove all forest edges from the mixed cut without reducing its congestion.
As routing demands in the forest part of the graph is trivial, Madry's
construction can be seen as an efficient reduction of the problem size.

\paragraph{Congestion Approximators: Combining Cut Sparsifiers with Madry's
Construction.}
Using $j$-trees, Sherman derives a suitable congestion approxmiator, 
i.e.,
one with $\alpha\in n^{o(1)}$ that can be constructed and evaluated in
$\tilde{O}(m+n^{1+o(1)})$ rounds, as follows. First, a \emph{cut sparsifier} is
applied to $G$. A $(1+\eps)$-sparsifier computes a subgraph of
$G$ with modified edge weights so that the capacities of all cuts are 
preserved
up to factor $1+\eps$. It is known how to compute a $(1+o(1))$-sparsifier
with $\tilde{O}(n)$ edges in $\tilde{O}(m)$ steps using
randomization~\cite{BK15}. As the goal is merely to compute a
congestion approximator
with $\alpha\in n^{o(1)}$, the multiplicative $1+o(1)$ approximation error is
negligible. Hence, this essentially breaks the problem of computing a congestion
approximator down to the same problem on sparse graphs.

Next, Sherman applies Madry's construction with $j=n/\beta$, where
$\beta=2^{\sqrt{\log n}}$. This yields a distribution of $\tilde{O}(\beta)$ many
$n/\beta$-trees. The issue is now that the cores are arbitrary graphs, implying
that it may be difficult to evaluate congestion for cuts in the cores. However,
the number of nodes in the core is $n'=n/\beta$. Thus, recursion does the trick:
apply the cut sparsifier to the core, use Madry's construction on the resulting
graph (with $j'=n'/\beta=n/\beta^2$), rinse and repeat. In total, there are
$\log_{\beta} n = \sqrt{\log n}$ levels of recursion until the core becomes
trivial, i.e., we arrive at a tree. For each level of recursion, the
approximation ratio deteriorates by a multiplicative $\alpha\in \polylog n$,
where $\alpha$ is the stretch factor of the low-stretch spanning tree algorithm,
and a multiplicative $1+o(1)$, for applying the cut sparsifier.
This yields an $\alpha'$-congestion approximator with
\begin{equation*}
\alpha'\in ((1+o(1))\alpha)^{\sqrt{\log n}}\subset 2^{O(\sqrt{\log n}\log \log
n)}\subset n^{o(1)}\,.
\end{equation*}
While the total number of constructed trees is still
$\tilde{O}(\beta^{\log_{\beta} n})=\tilde{O}(n)$, the number of nodes in a graph
(i.e., a core from the previous level) on the $i^{th}$ level of recursion is
only $n/\beta^{i-1}$. The cut sparsifier ensures that the number of edges in
this graph is reduced to $\tilde{O}(n/\beta^{i-1})$ \emph{before} recursing.
Since the number of edges in the core is (trivially) bounded by the number of
edges of the graph in Madry's construction, the total number of 
sequential computation
steps for computing the distribution is thus bounded by
\begin{equation*}
\tilde{O}(m)+\sum_{i=1}^{\log_{\beta} n} \tilde{O}(\beta^i \cdot n/\beta^{i-1})
\subset \tilde{O}(m+n^{1+o(1)})\,.
\end{equation*}

\paragraph{Step Complexity of the Flow Algorithm.}
The above recursive structure can also be exploited to \emph{evaluate} the
$\alpha'$-congestion approximator Sherman uses in $n^{1+o(1)}$ steps. As
mentioned earlier, the cuts of a $j$-tree are dominated by those induced by
edges of the forest and those which are crossed by core edges only
(cf.~\figureref{fig:j-tree}). In the forest component, routing demands is
unique, takes linear time in the number of nodes (simply start at the leaves),
and results in a modified demand vector at the core on which is recursed.

Sherman proves that his algorithm obtains a $(1+\eps)$-approximate flow
in $O(\eps^{-3}\alpha^2 \log^2 n)$ gradient descent steps, 
provided
$R$ is an $\alpha$-congestion approximator.\footnote{Sherman points out that
using Nesterov's accelerated gradient descent method~\cite{nesterov2005smooth},
this can be improved to $O(\eps^{-2}\alpha \log^2 n)$. For both his and
our results, this difference is insubstantial, as $\alpha\in
n^{o(1)}\Leftrightarrow \alpha^2 \in n^{o(1)}$.} It is straightforward to see
(cf.~\sectionref{subsec:almostRoute}) that each of these steps requires $O(m)$
computational steps besides doing two matrix-vector multiplications with $R$ and
$R^{\top}$, respectively. Using the above observation and plugging in the time
to construct the (implicit) representation of $R$, one arrives at a total step
complexity of $\tilde{O}(m n^{o(1)})$.

\todo{Would we like to explain the quasi-centralized near-linear time algorithm
first?}

\section{Distributed Algorithm: Contribution and Key 
Ideas}\label{sec:contribution}

\paragraph{The Distributed Toolchain.}
For a distributed implementation of Sherman's approach, many subproblems need
to be solved (sufficiently fast) in the \Congest model. We summarize them in the
following list, where stars indicate that these components are readily available
from prior work.
\begin{compactenum}
\item [*] Decomposing trees into $O(\sqrt{n})$ components of strong diameter
$O(\sqrt{n})$, within $\tilde{O}(\sqrt{n}+D)$ rounds. This can, e.g., be done by
techniques pioneered by Kutten and Peleg for the purpose of minimum-weight
spanning tree construction~\cite{Kutten-Peleg}.
\item [*] Constructing cut sparsifiers. Koutis~\cite{Koutis14} provides a
solution that completes in $\polylog n$ rounds of the \Congest model. In
\sectionref{sec:sparsifiers}, we give a simulation result for use in the
recursive construction.
\item Constructing low average stretch spanning trees on multigraphs
(\sectionref{sec:spanning}).
\item Applying Madry's construction in the \Congest model, even when recursing
in the context of Sherman's framework (\sectionref{sec:jtrees}).
\item Sampling from the recursively constructed distribution
(\sectionref{sec:jtrees}).
\item Avoiding the use of the entire distribution for constructing the
congestion approximator (see below).
\item Performing a gradient descent step. This involves, e.g., matrix-vector
multiplications with $R$, $R^{\top}$ and $C^{-1}$, evaluation of the soft-max,
etc.\ (\sectionref{sec:gradient}).
\end{compactenum}

\bigskip
\noindent We next present some additional description for the items 1 to 5 in
the above list.

\paragraph{1. Low Average Stretch Spanning Trees.}
In \sectionref{sec:spanning}, we prove the following theorem.
\begin{theorem}\label{theorem:spanning}
Suppose $H$ is a multigraph obtained from $G$ by assigning arbitrary
edge lengths in $[2^{n^{o(1)}}]$ to the edges of $G$ (known to incident nodes) and
performing an arbitrary sequence of contractions. Then we can compute a
spanning tree of $H$ of expected stretch $2^{O(\sqrt{\log n \log \log n})}$
within $(\sqrt{n}+D)n^{o(1)}$ rounds.
\end{theorem}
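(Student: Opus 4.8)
The plan is to combine a known low-average-stretch construction that is amenable to parallelization (Alon--Karp--Peleg--West, which yields stretch $2^{O(\sqrt{\log n \log\log n})}$ and parallelizes in the PRAM model by Blelloch et al.) with the by-now-standard distributed toolkit of Kutten--Peleg-style decompositions and cluster graphs. The two obstacles that distinguish \theoremref{theorem:spanning} from just invoking an off-the-shelf distributed spanning-tree routine are (i) we must handle a \emph{multigraph} $H$ obtained from $G$ by contractions, so the ``nodes'' of $H$ are actually clusters of $G$-nodes of possibly large diameter, and communication inside a cluster is not free; and (ii) the edge lengths can be as large as $2^{n^{o(1)}}$, so distances do not fit in $O(\log n)$ bits and we cannot naively run a weighted BFS. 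I would address (i) by maintaining throughout a \emph{low-diameter representation} of $H$: each contracted node is represented by a Steiner tree of depth $(\sqrt n + D)n^{o(1)}$ (obtained via the Kutten--Peleg decomposition of the original communication graph together with a global BFS tree of $G$), so that any aggregation or broadcast within one super-node, or a constant number of rounds of ``message passing on $H$,'' costs only $(\sqrt n + D)n^{o(1)}$ rounds in the \Congest model on $G$.

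The core algorithmic step is the AKPW-style ball-growing / star-decomposition recursion. In each phase one groups the current graph into clusters of bounded radius (with respect to a geometrically increasing length scale), contracts each cluster, and recurses; the union of the intra-cluster BFS trees across all $O(\log n)$ phases is the output spanning tree, and the stretch bound $2^{O(\sqrt{\log n \log\log n})}$ follows from the AKPW analysis (with the randomized padded-decomposition variant guaranteeing the bound in expectation). Distributively, each clustering phase is a bounded-radius clustering problem on $H$, which I would solve by the standard exponential-start-time / random-shift technique (à la Miller--Peng--Xu or Elkin--Neiman): each super-node picks a random start delay, a BFS wave propagates through $H$ at the relevant length scale, and each super-node joins the cluster of the wave that reaches it first. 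Because $H$ has at most $n$ super-nodes and the BFS is over a graph whose ``diameter in hops of $G$'' is at most $(\sqrt n+D)n^{o(1)}$ per edge-message-exchange, each phase costs $(\sqrt n+D)n^{o(1)} \cdot (\text{radius bound})$ rounds; the radius per phase is $\polylog n$ in the AKPW/APW scheme, so $O(\log n)$ phases give the claimed $(\sqrt n+D)n^{o(1)}$ total.

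For the large-length issue (ii), I would bucket the edge lengths of $H$ into $O(\log(2^{n^{o(1)}})) = n^{o(1)}$ geometric classes $[2^i, 2^{i+1})$ and process classes in increasing order, as is standard: at scale $2^i$ we only need to track distances up to $O(2^i \cdot \polylog n)$ within a single clustering phase, and it suffices to know each such distance up to a constant factor, so $O(\log\log n + \log\polylog n)$ bits per message suffice and the \Congest bandwidth constraint is respected. The geometric scales also keep the number of phases at $n^{o(1)}$, which is absorbed into the final bound. The computation of stretch and of which edges to include is purely local given the cluster hierarchy, so no extra communication is needed once the clusterings are built.

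\medskip

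\noindent\textbf{The main obstacle} I expect is the bookkeeping in step (i): ensuring that after a sequence of contractions the super-nodes still admit Steiner-tree representatives of depth $(\sqrt n + D)n^{o(1)}$ in $G$, \emph{and} that intra-super-node communication during many successive clustering phases does not accumulate congestion beyond $n^{o(1)}$ slowdown. Concretely, one must argue that the Steiner trees of distinct super-nodes (within one phase) are edge-disjoint enough, or schedule their communication so that the per-round congestion is $n^{o(1)}$; this is where the $\sqrt n$-component Kutten--Peleg decomposition (combined with pipelining along the global BFS tree for the ``outside-the-decomposition'' part) is essential, and it mirrors the difficulties overcome in distributed MST. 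Once that scheduling lemma is in place, the rest is a faithful distributed simulation of the parallel AKPW construction with a round blow-up of $(\sqrt n+D)n^{o(1)}$.
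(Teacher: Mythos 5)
Your proposal follows essentially the same route as the paper's proof: run the Alon--Karp--Peleg--West low-average-stretch construction through Blelloch et al.'s parallelization, where each phase performs a random-start-time bounded-radius clustering, and simulate the resulting computation on the contracted multigraph via cluster-graph machinery (a Kutten--Peleg-style split into $O(\sqrt n)$ small pieces combined with pipelining over a global BFS tree). That is precisely the paper's plan, packaged concretely as Algorithm \textsc{SplitGraph} called by Algorithm \textsc{Partition} and simulated through \lemmaref{lemma:clustergraphsimulation}.

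Two of the ``obstacles'' you anticipate do not actually arise, and recognizing this simplifies the argument. First, the large edge lengths are digested entirely by bucketing into geometric weight classes \emph{and then running unweighted BFS on the union of active classes within each phase}; the AKPW scheme never computes weighted distances inside a phase, so there is no need to worry about encoding $2^{n^{o(1)}}$-scale distances or constant-factor rounding --- the weight range only enters through a $\log\Delta = n^{o(1)}$ factor on the number of outer iterations. Second, because $H$ is obtained from $G$ by contractions, each super-node is a connected subgraph of $G$ and automatically admits a spanning tree of its induced subgraph; the cluster spanning trees are then edge-disjoint across clusters by construction, so there is no separate scheduling/congestion lemma to prove for intra-cluster Steiner trees. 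The simulation lemma handles intra-cluster communication by broadcasting on the induced spanning tree when the cluster has at most $\sqrt n$ nodes, and by pipelining over a global BFS tree otherwise (there are at most $\sqrt n$ large clusters, giving $O(D+\sqrt n)$ per simulated round), which already resolves exactly the bottleneck you flag as your main concern.
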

To obtain this theorem, we translate a PRAM algorithm by Blelloch et
al.~\cite{blelloch14} to the \Congest model. The main issue when transitioning
from the PRAM to the \Congest model is that in the PRAM model, information about
distant parts of the graph may be readily accessed. In the \Congest model, we
handle this by pipelining long-distance communication
over a global breadth-first-search (BFS) tree of $G$; communication 
over $O(\sqrt{n})$
hops is handled using the edges that have already been selected for inclusion
into the spanning tree and spanning trees of the contracted regions of $G$.

\paragraph{2. Implementing Madry's Scheme.}
This is  technically the most challenging part. Also here, we have to 
overcome
the difficulty of potentially needing to communicate a large amount of
information over many hops; doing this naively results in too much contention
and thus slow algorithms. We approach this by modifying Madry's construction so
that:
\begin{compactitem}
\item Instead of ``aggregating'' edges so that the core becomes a graph, we
admit a multigraph as core.
\item We do not explicitly construct the core. Instead, we simulate both the
sparsifier and the low average stretch spanning tree algorithm using the
abstraction of \emph{cluster graphs} (see \sectionref{sec:cluster}).
\item In doing so, we maintain that every core edge is also a graph edge. This
enables to handle all communication over this edge by using the corresponding
graph edge.
\item In this context, clusters are the forest components rooted at core nodes.
We will maintain that forest components have depth $\tilde{O}(\sqrt{n})$. While
this is not strictly necessary, it simplifies the description of the
corresponding distributed algorithms, as the communication within each cluster
can then be performed via its (previously constructed) spanning tree.
\item The cluster hierarchy that is established during the construction allows
for a straightforward recursive evaluation of the corresponding congestion
approximator.
\end{compactitem}
\sectionref{sec:jtrees} gives the details of the construction.

\paragraph{3. Sampling from the Distribution.}
This is now straightforward, because for each sample, on each level of the
recursion we need to construct only $n^{o(1)}$ different $j$-trees for some $j$.
This is also discussed in \sectionref{sec:jtrees}, in which the formal version
of the following theorem is proved.
\begin{theorem}[Informal]\label{theorem:sample}
Within $\tilde{O}((\sqrt{n}+D)\beta)$ rounds of the \Congest model, we can
sample a virtual tree from the distribution used in Sherman's framework, where
$\tilde{O}(\beta)$ is the number of $j$-trees in the distribution constructed
when recursing on a core. The distributed representation allows to evaluate the
dominant cuts of the tree when using it in a congestion approximator within
$\tilde{O}(\sqrt{n}+D)$ rounds.
\end{theorem}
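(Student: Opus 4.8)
The plan is to establish the two claims of \theoremref{theorem:sample} separately: first that a single virtual tree can be sampled in $\tilde{O}((\sqrt{n}+D)\beta)$ rounds, and second that the resulting distributed representation supports evaluation of its dominant cuts in $\tilde{O}(\sqrt{n}+D)$ rounds. For the sampling part, I would recurse on the level structure described in \sectionref{sec:overview}: at the top level we have the (sparsified) graph $G$ on $n$ nodes, and Madry's construction with $j=n/\beta$ yields a distribution over $\tilde{O}(\beta)$ many $n/\beta$-trees. To sample one virtual tree, it suffices to (a) sample which $j$-tree of the $\tilde{O}(\beta)$ we take on the current level, (b) commit to its forest part (clusters of depth $\tilde{O}(\sqrt{n})$, by the invariant maintained in the implementation of Madry's scheme), and (c) recurse on the core, which is a multigraph on $n/\beta$ nodes obtained by sparsification. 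By the analysis sketched in the excerpt there are $\log_\beta n = \sqrt{\log n}$ levels of recursion. The key point making this cheap is that, unlike in Sherman's centralized construction where the full $\tilde{O}(n)$-tree distribution must be built, for a \emph{single} sample we only need to construct the $n^{o(1)}$ (in fact $\tilde{O}(\beta)$) candidate $j$-trees on the levels we actually descend through — this is exactly the remark made just before the theorem statement. Each level's construction invokes \theoremref{theorem:spanning} (low average stretch spanning tree on the contracted multigraph) plus the cut sparsifier of Koutis / \sectionref{sec:sparsifiers}, each running in $(\sqrt{n}+D)n^{o(1)}$ rounds, and the multiplicative-weights bookkeeping of Madry's scheme over $\tilde{O}(\beta)$ trees per level adds only an $\tilde{O}(\beta)$ factor. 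Summing over $\sqrt{\log n}$ levels (and noting that the core on level $i$ has $n/\beta^{i-1}$ nodes, so its diameter and $\sqrt{\cdot}$ parameters only shrink) keeps the total at $\tilde{O}((\sqrt{n}+D)\beta)$.

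For the evaluation part, I would exploit the cluster hierarchy that the construction leaves behind: a virtual tree sampled as above is a nested sequence of forests, where each forest component on a level is a cluster of depth $\tilde{O}(\sqrt{n})$ rooted at a node that participates in the next-level core. As observed in the overview, any cut of a $j$-tree is dominated either by a cut induced by a single forest edge or by a cut crossed only by core edges; so to evaluate $R\vb$ (and $R^\top$) it suffices to, on each level, route the demand vector uniquely through the forest — linear in the cluster size, hence doable in $\tilde{O}(\sqrt{n})$ rounds along the already-constructed spanning trees of the clusters — accumulate the resulting net flow across each forest edge (this gives the forest-induced dominant-cut congestions), and pass the residual demand at the cluster roots up to the core, on which we recurse. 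Because every core edge is maintained to be an actual graph edge (the third bullet in the Madry-implementation list), all inter-cluster communication on every level uses real edges and incurs no additional overhead beyond the $\tilde{O}(\sqrt{n}+D)$ needed to aggregate within clusters and broadcast over a global BFS tree. Again there are only $\sqrt{\log n}$ levels, giving $\tilde{O}(\sqrt{n}+D)$ total. The transpose multiplication $R^\top$ is handled symmetrically by pushing values down the forests.

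The main obstacle, and where the real work lies, is not in this top-level accounting but in making precise the interaction between the recursion and the distributed primitives it calls — in particular, ensuring that applying \theoremref{theorem:spanning} and the sparsifier \emph{to a core} rather than to $G$ itself is legitimate in the \Congest model. The core is a virtual multigraph whose nodes are spread across $G$ and whose edges are real graph edges, but the spanning-tree algorithm of \theoremref{theorem:spanning} is already stated for multigraphs "obtained from $G$ by contractions," so the abstraction of \emph{cluster graphs} from \sectionref{sec:cluster} is exactly what lets us treat the level-$i$ core as such a contracted multigraph and reuse the primitive unchanged. The second delicate point is maintaining the depth-$\tilde{O}(\sqrt{n})$ invariant on forest components across all $\sqrt{\log n}$ levels while Madry's scheme is running; this is what the implementation modifications in the Madry-scheme bullet list are designed to guarantee, and the formal proof in \sectionref{sec:jtrees} must verify that these modifications do not damage the cut-approximation guarantee (expected $O(\alpha)$ capacity blow-up per level, hence $\alpha^{\sqrt{\log n}} \subseteq n^{o(1)}$ overall). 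Everything else — the round counts, the union bound over the $n^{o(1)}$ random choices for the w.h.p. guarantee — is routine once these structural facts are in place.
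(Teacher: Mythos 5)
Your proposal follows the same overall plan as the paper's formal version of this theorem (\theoremref{theorem:distribution}) — recursive level-by-level sampling, cluster graphs as the simulation abstraction for the core, per-level calls to \theoremref{theorem:spanning} and the sparsifier, and a recursive convergecast/downcast over the nested cluster hierarchy for evaluation — but there is a genuine error in your choice of recursion depth that would break the approximation guarantee. You take $\beta=2^{\sqrt{\log n}}$, giving $\log_\beta n = \sqrt{\log n}$ levels (Sherman's centralized choice), and then assert that the final blow-up $\alpha^{\sqrt{\log n}}$ lies in $n^{o(1)}$. But the distributed low-stretch spanning tree construction only achieves $\alpha \in 2^{O(\sqrt{\log n\log\log n})}$, not $\polylog n$, so $\alpha^{\sqrt{\log n}} = 2^{O(\log n\cdot\sqrt{\log\log n})}$, which is \emph{not} $n^{o(1)}$. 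This is exactly why the paper sets $\beta := 2^{\log^{3/4} n}$, giving only $O(\log^{1/4} n)$ recursion levels and a total blow-up of $2^{O(\sqrt{\log n\log\log n}\,\cdot\,\log^{1/4} n)}\subseteq n^{o(1)}$; the paper flags this explicitly at the start of the sampling subsection. The parameter has to be chosen up front, so this is a structural error, not a cosmetic one.

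Two smaller inaccuracies. First, you claim the per-level cost shrinks because "the core on level $i$ has $n/\beta^{i-1}$ nodes, so its diameter and $\sqrt{\cdot}$ parameters only shrink"; they do not, since all communication is still over the $n$-node graph $G$ and simulating a cluster-graph round costs $O(\sqrt{n}+D)$ by \lemmaref{lemma:clustergraphsimulation} regardless of how small the core is. The total stays within $(\sqrt{n}+D)n^{o(1)}$ only because there are $o(\log n)$ levels, each of cost $(\sqrt{n}+D)n^{o(1)}$. Second, you omit the paper's termination step: once $|\cV_i|$ drops to $n^{1/2+o(1)}$, the remaining cluster graph is gathered at all nodes via a BFS tree and the rest of the $j$-tree recursion is carried out locally, rather than pursuing the cluster-graph simulation down to a single node. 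With these corrections the rest of your accounting and the evaluation argument match the paper.
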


\paragraph{4. Avoiding the use of the entire distribution for constructing the
congestion approximator.}
While Sherman can afford to use all trees in the (recursively constructed)
distribution, the above theorem is not strong enough to allow for fast
evaluation of all $\tilde{\Theta}(n)$ trees. As Madry points out~\cite{Madry10},
it suffices to sample and use $O(\log n)$ $j$-trees from the distribution he
constructs to speed up any $\beta$-approximation algorithm for an ``undirected
cut-based minimization problem'', at the expense of an increased approximation
ratio of $2\alpha \beta$, where $\alpha$ is the approximation ratio of the
congestion approximator corresponding to the distribution of $j$-trees. The
reasoning is as follows:
\begin{compactitem}
\item The number of cuts that need to be considered for such a problem is
polynomially bounded.
\item The expected approximation ratio for any fixed cut when sampling from the
distribution is $\alpha$. By Markov's bound, with probability at least $1/2$ it
is at most $2\alpha$.
\item For $O(\log n)$ samples, the union bound shows that w.h.p.\ \emph{all}
relevant cuts are $2\alpha$-approximated.
\item Applying a $\beta$-approximation algorithm relying on the samples
only, which can be evaluated much faster, results in a
$2\alpha\beta$-approximation w.h.p.
\end{compactitem}
Recall that the problem of approximating a max flow was translated to
minimizing congestion for demands $F$ and $-F$ at $s$ and $t$ and performing
binary search over $F$. The max-flow min-cut theorem implies the respective
congestion to be the function of a single cut, which can be used to verify that
the problem falls under Madry's definition.

Unfortunately, applying the sampling strategy as indicated by Madry is
infeasible in Sherman's framework. As the goal is a
$(1+\eps)$-approximation, applying it to the above problem directly will
yield a too inaccurate approximation. Alternatively, we can apply it in the
construction of a congestion approximator. However, a congestion approximator
must return a good approximation for \emph{any} demand vector. There are
exponentially many such vectors even if we restrict $\vb\in \{-1,0,1\}^V$, and
we are not aware of any result showing that the number of min-cuts corresponding
to the respective optimal flows is polynomially bounded.

We resolve this issue with the following simple, but essential insight, at the
expense of squaring the approximation ratio of the resulting congestion
approximator.
\begin{lemma}\label{lemma:approximator}
Suppose we are given a distribution of $\poly n$ trees so that given any cut of
$G$ of capacity $C$, sampling from the distribution results in a tree whose
corresponding cut has at least capacity $C$ and at most capacity $\alpha C$ in
expectation. Then sampling $O(\log n)$ such trees and constructing a congestion
approximator from their single-edge induced cuts results in a
$2\alpha^2$-congestion approximator of $G$ w.h.p.
\end{lemma}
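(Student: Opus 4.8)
The plan is to establish the two defining inequalities of a $2\alpha^2$-congestion approximator: the lower bound $\maxnorm{R\vb}\le\operatorname{opt}(\vb)$ will hold deterministically for every draw of the trees, while $\operatorname{opt}(\vb)\le 2\alpha^2\maxnorm{R\vb}$ will hold w.h.p.\ \emph{simultaneously for all} demand vectors $\vb$. For a bipartition $(S,V\setminus S)$ write $\vb(S)=\sum_{v\in S}b_v$, let $\Cp_G(S)$ be the capacity of the corresponding cut of $G$, and for a weighted tree $T$ on $V$ let $\Cp_T(S)$ be the total weight of the $T$-edges crossing $S$. I will use the standard facts that $\operatorname{opt}(\vb)=\max_S|\vb(S)|/\Cp_G(S)$ by max-flow--min-cut, that the analogous identity holds in any tree, and that since routing in a tree is forced, $\operatorname{opt}_T(\vb)=\max_{e\in T}|\vb(S_e)|/C_T(e)$, where deleting $e$ induces the side $S_e$ and $C_T(e)$ is its weight. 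Recall also that the approximator built from a set of trees has one row $\tfrac1{C_T(e)}\mathbf{1}_{S_e}$ per tree $T$ and edge $e\in T$, so applying $\maxnorm{\cdot}$ to $\vb$ simply returns $\max_{T,e}|\vb(S_e)|/C_T(e)$.

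The first step is to note that the matrix $\hat R$ obtained from the \emph{entire} support of the distribution is already an exact $\alpha$-congestion approximator. The lower bound is immediate: each row stems from some $(T,e)$, and the first hypothesis gives $C_T(e)=\Cp_T(S_e)\ge\Cp_G(S_e)$, so $|\vb(S_e)|/C_T(e)\le|\vb(S_e)|/\Cp_G(S_e)\le\operatorname{opt}(\vb)$. For the upper bound, let $S^\ast$ attain $\operatorname{opt}(\vb)$; the second hypothesis says $\E[\Cp_T(S^\ast)]\le\alpha\Cp_G(S^\ast)$, so some tree $T'$ in the support satisfies $\Cp_{T'}(S^\ast)\le\alpha\Cp_G(S^\ast)$, whence $\operatorname{opt}_{T'}(\vb)\ge|\vb(S^\ast)|/\Cp_{T'}(S^\ast)\ge\operatorname{opt}(\vb)/\alpha$; writing $\operatorname{opt}_{T'}(\vb)$ as a maximum over single edges of $T'$ exhibits a row of $\hat R$ witnessing $\maxnorm{\hat R\vb}\ge\operatorname{opt}(\vb)/\alpha$.

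Next I fix, once and for all (before sampling), the family $\cC$ of all bipartitions that appear as single-edge cuts of \emph{some} tree in the support; since the support has $\poly n$ trees, $|\cC|\le\poly n$. Now draw $k=\Theta(\log n)$ i.i.d.\ trees $T_1,\dots,T_k$. For a \emph{fixed} $S\in\cC$, Markov's inequality applied to the nonnegative random variable $\Cp_T(S)$ gives $\Pr[\Cp_{T_i}(S)\ge2\alpha\Cp_G(S)]\le 1/2$, so this fails for all $i$ with probability at most $2^{-k}$; a union bound over the $\poly n$ partitions in $\cC$ shows that w.h.p.\ the event $\cE$ holds, namely that for every $S\in\cC$ there is an index $i(S)$ with $\Cp_{T_{i(S)}}(S)<2\alpha\Cp_G(S)$. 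Conditioning on $\cE$ finishes the proof. The lower bound $\maxnorm{R\vb}\le\operatorname{opt}(\vb)$ is the same computation as for $\hat R$ (it uses only the first hypothesis). For the upper bound, set $c:=\maxnorm{R\vb}$; then $\operatorname{opt}_{T_i}(\vb)\le c$ for every $i$, so by max-flow--min-cut applied in the tree $T_i$ we get $|\vb(S)|\le c\,\Cp_{T_i}(S)$ for \emph{every} bipartition $S$. Starting from the first step, $\operatorname{opt}(\vb)\le\alpha\maxnorm{\hat R\vb}=\alpha\max_{T,e}|\vb(S_e)|/C_T(e)\le\alpha\max_{S\in\cC}|\vb(S)|/\Cp_G(S)$, the last step again using $C_T(e)\ge\Cp_G(S_e)$ and $S_e\in\cC$. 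For each $S\in\cC$ plug in $i=i(S)$: $|\vb(S)|\le c\,\Cp_{T_{i(S)}}(S)<2\alpha c\,\Cp_G(S)$, hence $|\vb(S)|/\Cp_G(S)<2\alpha c$, which gives $\operatorname{opt}(\vb)<2\alpha^2 c=2\alpha^2\maxnorm{R\vb}$.

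The conceptual obstacle — and the source of the squared ratio — is that a congestion approximator must be accurate on all of the exponentially many demand vectors, so one cannot simply union-bound over the min-cuts that are ``relevant'' for them, the way Madry does for a single cut-based minimization problem. The remedy is to route the entire argument through the \emph{fixed}, polynomially-sized cut family $\cC$ used by the full-support approximator: this makes the union bound succeed and effectively shows $\maxnorm{\hat R\vb}\le 2\alpha\maxnorm{R\vb}$ for all $\vb$ at once, and combining this with the independently established fact that $\hat R$ is $\alpha$-accurate contributes the second factor of $\alpha$. The one technical point that needs care is that a cut $S\in\cC$ is in general \emph{not} a single-edge cut of the tree $T_{i(S)}$, so the bound $|\vb(S)|\le c\,\Cp_{T_{i(S)}}(S)$ must be obtained from max-flow--min-cut in $T_{i(S)}$ (equivalently, from the domination of tree cuts by single-edge cuts) rather than directly from a row of $R$.
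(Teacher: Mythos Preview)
Your proof is correct and follows essentially the same strategy as the paper's: first argue that the edge-induced cuts of the full support form a polynomial family $\cC$ such that for every demand some cut in $\cC$ has $G$-congestion at least $\operatorname{opt}(\vb)/\alpha$, then use Markov plus a union bound over $\cC$ to show that $O(\log n)$ samples approximate every cut in $\cC$ up to a further factor $2\alpha$. Your write-up is in fact more careful than the paper's at the last step---you make explicit that $S\in\cC$ need not be a single-edge cut of the sampled tree $T_{i(S)}$ and therefore invoke max-flow--min-cut in $T_{i(S)}$ to obtain $|\vb(S)|\le c\,\Cp_{T_{i(S)}}(S)$, a point the paper leaves implicit in the phrase ``the congestion on each of these cuts will be approximated''.
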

\begin{proof}
Recall that a cut approximator estimates the maximum congestion when optimally
routing an arbitrary demand. Consider any demand vector and denote by $C$ the
capacity of the corresponding cut that is most congested when routing the
demand. As sampling from the distribution yields approximation factor $\alpha$
in expectation, there must be \emph{some} tree $T$ in the distribution whose
corresponding cut has capacity at most $\alpha C$. However, this means that when
routing the demand via $T$, there is some edge in $T$ that experiences at least
$1/\alpha$ times the maximum congestion when routing the demand optimally in
$G$. As the capacity of the edge is at least that of the corresponding cut in
$G$, it follows that the corresponding cut of $G$ has congestion at least
$1/\alpha$ of that of the min-cut when routing the demand.

As there are $\poly n$ trees, each of which has $n-1$ edges, this shows that for
any demand vector there is one of polynomially many cuts of $G$ that experience
at least $1/\alpha$ times the maximum congestion when optimally routing the
demand vector. By Markov's bound and the union bound, w.h.p.\ the congestion on
each of these cuts will be approximated up to another factor of $2\alpha$ when
using $O(\log n)$ samples.
\end{proof}

\todo{Do we want to elaborate on how to obtain faster centralized algorithms
using this trick, too? While not a new result, it might be interesting from the
angle of providing a different technique.}

\paragraph{5. Performing a gradient descent step.}
Most of the high-level operations required for executing a gradient descent
algorithm are straightforward to implement using direct communication between
neighbors or broadcast and convergecast operations on a BFS tree. The most
involved part is multiplying the (implicitly constructed) congestion
approximator $R$ with an arbitrary demand vector $\vb$, and multiplying the
transposed of the approximator matrix, $R^{\top}$, with a given vector that
specifies a \emph{cost} for each edge of the trees.

Multiplying by $R$ is done by exploiting that routing on trees is trivial and
using standard techniques: during the construction, we already decomposed each
tree into $O(\sqrt{n})$ components of strong diameter $O(\sqrt{n})$, which
can be used to solve partially by contracting components, making the resulting
tree of $O(\sqrt{n})$ nodes globally known, then determine modified demand
vectors for the components out of the now locally computable partial solution,
and finally resolve these remaining demands within each component.
Multiplication with $R^{\top}$ is implemented using similar ideas. We refer to
\sectionref{sec:gradient} for a detailed discussion of these procedures.
Plugging the building blocks outlined in this section into this machinery, we
obtain our main result \theoremref{theorem:main_informal}.

\section{Outline of Distributed Congestion Approximator
Construction}\label{sec:approximator}

In this section, we outline how to adapt Madry's construction to its
recursive application in the distributed setting. In \sectionref{sec:jtrees},
we formally prove that we achieve the same guarantees as Madry's
distribution~\cite{Madry10} in each recursive step and that our distributed
implementation is fast. Here, we focus on presenting the main ideas of the
required modifications to Madry's scheme and its distributed implementation; to
this end, it suffices to consider the construction of a single step of the
recursion.
\paragraph{Centralized Algorithm.}
As a starting point, let us summarize the main steps of one iteration of the
centralized construction. We state a slightly simplified variant of Madry's
construction, which offers the same worst-case performance and is a better
starting point for what follows. From the previous step of constructing the
distribution, an edge length function $\ell_e$ is known (in the distributed
setting, this knowledge will be local). Given $j\leq n-1$, the following
construction yields a $\Theta(j)$-tree.
\begin{compactenum}
\item Compute a spanning tree $\cT$ of $G$ of stretch $\alpha$.
\item For each edge $e=\{v,w\}\in E$ of the graph $G$, route $\Cp(e)$ units of
a commodity $\operatorname{com}_e$ from $v$ to $w$ on (the unique path from $v$
to $w$ in) $\cT$.\footnote{The difference to a single commodity is simply
that flows in opposing directions do not cancel out. This means that \emph{any}
given feasible (i.e., congestion-$1$) flow in $G$ can be routed on $T$ with at
most the congestion of this multi-commodity flow.} Denote by $\vf$ the vector
of the sum of absolute flows passing through the edges of $\cT$. Recall that
$\max_{e\in E}\{\Cp(e)\}\in \poly n$ and thus $\maxnorm{\vf}\in \poly(n)$.
\item For $e\in \cT$, define the \emph{relative load} of $e$ as
$\rload(e):=|f_e|/\Cp(e)\in \poly n$. We decompose the edge set of $\cT$ into 
$O(\log n)$ subsets $\cF_i$, $i\in \{1,\ldots,\lceil \log
(\maxnorm{\vf}+1)\rceil\}$, where $e\in \cT$ is in $\cF_i$ if $\rload(e)\in
(R/2^i,R/2^{i-1}]$ for $R:=\max_{e\in \cT}\{\rload(e)\}$. As $\cT$ has $n-1\geq
j$ edges, there must be some $\cF_i$ with $\Omega(j/\log n)$ edges; let $i_0$ be
minimal with this property. Define $\cF:=\{e\in \cT\,|\,\rload(e)>2^{i_0-1}\}$.
Note that $|\cF|\leq j$.
\item $\cT\setminus \cF$ is a spanning forest of at most $j+1$ components.
Define $H$ as the graph on node set $V$ whose edge set is the union of
$\cT\setminus \cF$ and all edges of $G$ between different components of
$(V,\cT\setminus \cF)$.
\item For components $C$ and $C'$ of $(V,\cT\setminus \cF)$, pick arbitrary
$v\in C$ and $w\in C'$ and denote by $p(C,C')\in C$ the last node from $C$ on
the $v$-$w$ path in $\cT$; note that $p(C,C')$ does not depend on the choice of
$v$ and $w$. Denote by $P$ the set of such \emph{portals}. Replace all edges
between different components $C,C'$ of $(V,\cT\setminus \cF)$ by parallel edges
$\{p(C,C'),p(C',C)\}$ (of the same weight).
\item In the resulting multigraph, iteratively delete nodes from $V\setminus
P$ of degree $1$ until no such node remains. Note that the leaves of the induced
subtree of $\cT$ must be in $P$, showing that the number of remaining nodes in
$V\setminus P$ of degree larger than $2$ is bounded by $|P|-1< 2j$. Add all
such nodes to $P$.
\item For each path with endpoints in $P$ and no inner nodes in $P$, delete an
edge of minimum capacity and replace it by an edge of the same capacity between
its endpoints.
\item Re-add the nodes and edges of $\cT\setminus \cF$ that have been deleted in
Step~$6$.
\item For any $p,q\in P$, merge all parallel edges $\{p,q\}$ into a single one
whose capacity is the sum of the individual capacities. The result is a
$j'$-tree for $j'=|P|<4j$.
\end{compactenum}
In his paper, Madry provides a scheme for updating the edge lengths between
iterations so that this construction results in a distribution on
$\tilde{O}(m/j)$ $\Theta(j)$-trees that approximate cuts up to an expected
$O(\alpha)$-factor, where $\alpha$ is the stretch of the spanning tree
construction. Updating the edge length function poses no challenges, so we will
focus on the distributed implementation of the above steps in this section.

\paragraph{Differences to the Centralized Algorithm.}
Before we come to the distributed algorithm, let us first discuss a few changes
we make to the algorithm in centralized terms. These do not affect the reasoning
underlying the scheme, but greatly simplify its distributed implementation.
\begin{compactitem}
\item We will omit the last step of the algorithm and instead operate on cores
that are multigraphs. This changes the computed distribution, as we formally use
a different graph as input to the recursion. However, R\"acke's arguments (and
Madry's generalization) work equally well on multigraphs, as one can see by
replacing each edge of the multigraph by a path of length $2$, where both edges
have the same capacity as the original edge. This recovers a graph of $2m$ edges
from a multigraph of $m$ edges without affecting the cut structure, and the
resulting trees can be interpreted as trees on the multigraph by contraction of
the previously expanded edges. Similarly, both the low average stretch spanning
tree construction and the cut sparsifier work on multigraphs without
modification.
\item After computing the spanning tree, we will immediately delete a subset of
$\tilde{O}(\sqrt{n})$ edges to ensure that the new clusters will have low-depth
spanning trees. The deleted edges are replaced by all edges of $G$ crossing the
corresponding cuts and will end up in the core. The same procedure is, in fact,
applied to all edges selected into $\cF$ in Step 3 of the centralized routine;
Madry's arguments show that removing \emph{any} subset of edges of $\cT$ and
replacing it this way can only improve the quality of cut approximation. The
main point of his analysis is that choosing $\cF$ in the way he does guarantees
that, in terms of constructing the final distribution of $j$-trees, progress
proportional to the number of edges in $\cR_{i_0}$ is made. We will apply the
construction to cores of size $n'\gg \tilde{O}(\sqrt{n})$, which implies that
removing the additional edges has asymptotically no effect on the progress
guarantee.
\item In the counterpart to Step~6 in Madry's routine, also nodes from $P$ may
be removed if their degree becomes $1$. Also here, there is no asymptotic
difference in the worst-case performance of our routine from Madry's.
\end{compactitem}

To simplify the presentation, in this section we will assume that all trees
involved in the construction have depth $\tilde{O}(\sqrt{n})$. This means that
we can omit the deletion of $\tilde{O}(\sqrt{n})$ additional edges and further
related technicalities. The general case is handled by standard techniques for
decomposing trees into $O(\sqrt{n})$ components of depth $\tilde{O}(\sqrt{n})$
and relying on a BFS tree to communicate ``summaries'' of the components to all
nodes in the graph within $\tilde{O}(\sqrt{n}+D)$ rounds (full details are given
in \sectionref{sec:jtrees}). This approach was first used for MST
construction~\cite{Kutten-Peleg}; we use a simpler randomized variant
(cf.~\lemmaref{lemma:smallstrongdiameter}).

\paragraph{Cluster Graphs.}
Recall that we will recursively call (a variant of) the above
centralized procedure on the core. We need to simulate the algorithm on the core
by communicating on $G$. To this end, we will use \emph{cluster graphs}
(see~\sectionref{sec:cluster}), in which $G$ is decomposed into components that
play the role of core nodes. We will maintain the following invariants during
the recursive construction:

\begin{compactenum}
\item There is a one-to-one correspondence between core nodes and
\emph{clusters}.
\item Each cluster $c$ has a rooted spanning tree of depth
$\tilde{O}(\sqrt{n})$.
\item No other edges exist inside clusters. Contracting clusters yields the
multigraph resulting from the above construction without Step~$9$.
From now on, we will refer to this multigraph as the core.
\item All edges in the (non-contracted) graph are also edges of $G$, and their
endpoints know their lengths from the previous.
\end{compactenum}


\paragraph{Overview of the Distributed Routine.}
We follow the same strategy as the centralized algorithm, with the modifications
discussed above. This implies that the core edges for the next recursive call
will simply be the graph edges between the newly constructed clusters. The
following sketches the main steps of the distributed implementation of our
overall approach.

\begin{figure}[t]
	\centering
	\includegraphics[width=0.5\textwidth]{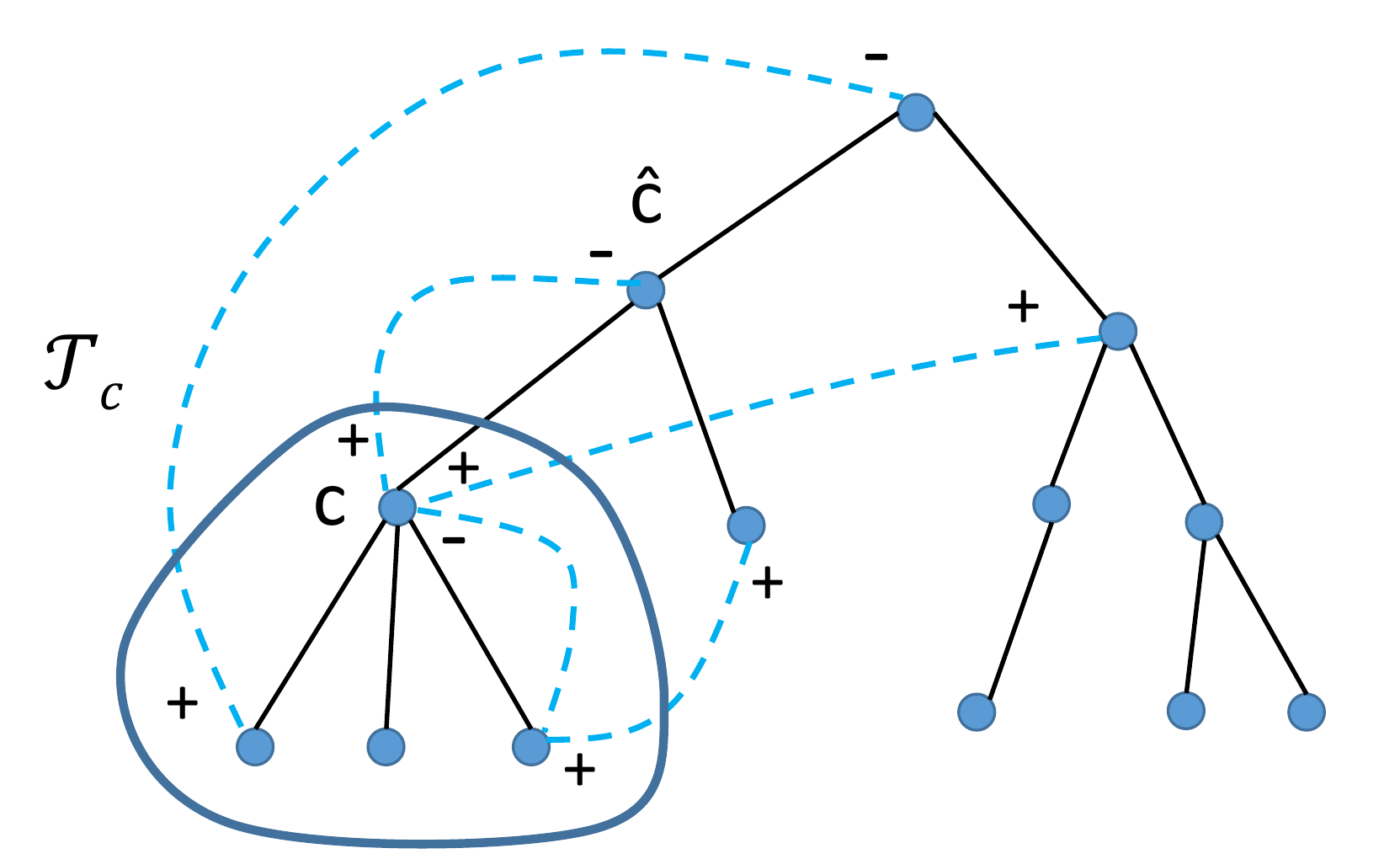}
	\caption{\small Illustration of the underlying idea of the 
	aggregation scheme for the
		cut capacities. The cut corresponding to edge $(c,\hat{c})$ of 
		the tree has a
		total capacity given by all graph edges leaving the subtree 
		$\cT_c$. By
		labeling the endpoint of graph edge by ``+'' if it leaves the 
		subtree and by
		``-'' if it connects to a descendant, the cut capacity is thus 
		the sum of all
		capacities of edges labeled ``+'' minus all those of edges 
		labeled ``-'' within
		$\cT_c$.}
	\label{fig:plusMinus}
\end{figure}

\begin{compactenum}
\item Compute a spanning tree $\cT$ of stretch $\alpha$ of the core. This is
done by the spanning tree algorithm of \theoremref{theorem:spanning}, which can
operate on the cluster graph.
\item For each edge $e\in \cT$, determine its absolute flow $|f_e|$ (and
thus $\rload(e)=|f_e|/\Cp(e)$) as follows (cf.~\figureref{fig:plusMinus}).
\begin{compactenum}
\item [(*)] For each cluster $c$, consider the cut induced by the edge to
its parent. For each ``side'' of the cut, we want to determine the total
capacity of all edges incident to nodes of $c$ that connect to the respective
side of the cut. Denote by $c_+$ the total ``outgoing'' capacity of cluster $c$
towards the root's side and by $c_-$ the ``incoming'' capacity.
\item Each cluster $c$ learns its ancestor clusters in the spanning tree of
$C$.
\item Observe that for a cluster $c$, an edge does contribute to $c_-$ if and
only if it connects to a node within its subtree $\cT_c$. From the previous
step, this information is known to one of the endpoints of the edge. We
communicate this and determine in each cluster $c$ the values $c_+$ and $c_-$ by
aggregation on its spanning tree.
\item Suppose $e\in \cT$ is the edge from cluster $c$ to its
parent. Using aggregation on the spanning tree of $C$, we compute
\begin{equation*}
|f_e| = \sum_{c'\in \cT_c} c'_+ - c'_-\,.
\end{equation*}
\end{compactenum}
\item Determine the index $i_0$ (as in Step 3 of the centralized routine). Given
that $\rload(e)$ for each $e\in \cT$ is locally known, this is
performed in $\tilde{O}(D)$ rounds using binary search in combination with
converge- and broadcasts on a BFS tree. We set $\cF:= \{e\in
\cT\,|\,\rload(e)>2^{i_0-1}\}$.
\item Define $P$ as the set of clusters incident to edges in $\cF$. A simple
broadcast on the cluster spanning trees makes membership known to all nodes of
each cluster $c\in P$.
\item Iteratively mark clusters $c\notin P$ with at most one unmarked
neighboring cluster, until this process stops. Add all unmarked clusters that
retain more than $2$ unmarked neighboring clusters to $P$.
\item For each path with endpoints in $P$ whose inner nodes are unmarked
clusters not in $P$, find the edge $e\in \cT \setminus \cF$ of minimal capacity
and add it to $\cF$. This disconnects any two clusters $c,c'\in P$, $c\neq c'$,
in $\cT\setminus \cF$.
\item Each component of $\cT\setminus \cF$ and the spanning trees of
clusters induce a spanning tree of the corresponding component of $G$. Each such
component is a new cluster. Make the identifier of the unique $c\in P$ of each
cluster known to its nodes and delete all edges between nodes in the cluster
that are not part of its spanning tree.
\end{compactenum}
If all trees have depth $\tilde{O}(\sqrt{n})$, all the above steps can be
completed in $\tilde{O}(\sqrt{n}+D)$ rounds. Clearly, the first $3$ stated
invariants are satisfied by the given construction. As mentioned earlier, it is
also straightforward to update the edge lengths, i.e., establish the fourth
invariant. Once the distribution on the current level of recursion is computed,
one can hence sample and then move on to the next level.

For the detailed description of the algorithm, the recursion, a formal statement
of \theoremref{theorem:sample}, and the respective proofs, we refer to
\sectionref{sec:jtrees}.

\section{Cluster Graphs}\label{sec:cluster}

On several levels, our distributed congestion approximator
construction is done in a hierarchical way. As a consequence many of
the distributed computations used by our algorithm have to be run on a
graph induced by clusters of the network graph. In order to be able to
deal with such cluster graphs in a systematic way, we formally define
cluster graphs and we describe how to simulate distributed
computations on a cluster graph by running a distributed algorithm on
the underlying network graph.

\begin{definition}[Distributed Cluster Graph]\label{def:clustergraph}
  Given a $n$-network graph $G=(V,E)$, a distributed $N$-node cluster
  graph $\cG=(\cV,\cE,\cL,\mathfrak{T},\psi)$ of size $n$ is
  defined by a set of $N$ clusters $\cV=\set{S_1,\dots,S_N}$
  partitioning the vertex set $V$, a set (or multi-set) of edges
  $\cE\subseteq \binom{\cV}{2}$, a set of cluster leaders $\cL$, a set
  of cluster trees $\mathfrak{T}$, as well as a function $\psi$ that
  maps the edges $\cE$ of the cluster graph to edges in $E$. Formally,
  the tuple $(\cV,\cE,\cL,\mathfrak{T},\psi)$ has to satisfy the
  following conditions.
  \begin{enumerate}[(I)]
  \item The clusters $\cV = (S_1, \dots, S_N)$ form a partition of the
    set of vertices $V$ of the network graph, i.e., $\forall i\in [N]:
    S_i\subseteq V$, $\forall 1\leq i<j\leq N: S_i\cap S_j=\emptyset$,
    and $\bigcup_{i=1}^N S_i = V$.
  \item For each cluster $S_i$, $|S_i\cap \cL|=1$. Hence, each cluster
    has exactly one cluster leader $\ell_i\in \cL\cap S_i$. The ID of
    the node $\ell_i$ also serves as the ID of the cluster $S_i$ and
    for the purpose of distributed computations, we assume that all
    nodes $v\in S_i$ know the cluster ID and the size $n_i:=|S_i|$ of
    their cluster $S_i$.
  \item Each cluster tree $T_i=(S_i,E_i)$ is a rooted spanning tree of
    the subgraph $G[S_i]$ of $G$ induced by $S_i$. The root of $T_i$
    is the cluster leader $\ell_i\in S_i\cap \cL$. We assume that each
    node of $u\in S_i\setminus \set{\ell_i}$ knows its parent node
    $v\in S_i$ in the tree $T_i$.
  \item The function $psi:\cE\to E$ maps each edge $\set{S_i,S_j}\in
    \cE$ to an (actual) edge $\set{v_i,v_j}\in E$ connecting the
    clusters $S_i$ and $S_j$, i.e., it holds that $v_i\in S_i$ and
    $v_j\in S_j$. The two nodes $v_i$ and $v_j$ know that the edge
    $\set{v_i,v_j}$ is used to connect clusters $S_i$ and $S_j$. If
    the cluster graph is weighted, the two nodes $v_i$ and $v_j$ also
    know the weight of the edge $\set{S_i,S_j}$.
  \end{enumerate}
\end{definition}

\noindent Note that (III) in particular implies that the subgraph of
$G$ induced by each cluster $S_i$ is connected. When dealing with a
concrete distributed cluster graph $\cG$, we use $\cG_\cV$, $\cG_\cE$,
$\cG_\cL$, $\cG_\mathfrak{T}$, and $\cG_psi$ to denote the corresponding sets
of clusters, edges, etc. Further, when only arguing about the cluster
graph and not its mapping to $G$, we only use the pair $(\cV,\cE)$ to
refer to it. In the following, we say that a cluster $S\in \cV$
\emph{knows} something if all nodes $v\in S$ know it. That is, e.g.,
the last part of condition (II) says that every cluster knows its ID
and its size.

We next define a weak version of the (synchronous) \Congest model and
we show that algorithms in this model can be efficiently simulated in
distributed cluster graphs.

\begin{definition}[$B$-Bounded Space \Congest Model]\label{def:boundedspacecongest}
  Let $B=\Omega(\log n)$ be a  given parameter. The \emph{$B$-Bounded Space
    \Congest 
  model} is a computation model which restricts the $\Congest(B)$
model by requiring, for
    any $d\ge0$, that each step of a node $v$ in  the $B$-Bounded Space
can be emulated in $O(d)$ rounds by any tree $T(v)$ of depth $d$, where
the edges incident on $v$ are incident on nodes of $T(v)$ in the emulation.
\end{definition}

The definition of the $B$-Bounded Space model is  directed
toward emulation. The definition immediately implies that if each node
is emulated by a tree, then emulating a global step in time
proportional to the maximal tree depth (which could be $\Omega(n)$) is
trivial. However, the 
following lemma shows that this can actually be done in time
$O(D + \sqrt{n})$. 
\begin{lemma}\label{lemma:clustergraphsimulation}
  Given an underlying $n$-node graph $G=(V,E)$ and a cluster
  graph $\cG=(\cV,\cE,\cL,\mathfrak{T},psi)$, a $t$-round distributed
  algorithm $\cA$ in $\cG$ in the $B$-bounded space \Congest model can
  be simulated in the (ordinary) \Congest model in $G$ with messages
  of size at most $B$ in $O\big((D + \sqrt{n})\cdot t\big)$ rounds,
  where $D$ is the diameter of $G$.
\end{lemma}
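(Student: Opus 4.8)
The plan is to combine two standard ingredients: (a) a routing/scheduling result that allows $O(\sqrt n)$-hop communication tasks to be completed quickly when the total load is bounded, and (b) the use of a global BFS tree of $G$ to handle the "long-range" coordination. The key structural fact to exploit is that the cluster trees $T_i$ partition $V$: each vertex lies in exactly one cluster tree, and each edge of $\mathcal G$ is mapped by $psi$ to a distinct edge of $G$. So one round of $\mathcal A$ requires, for every cluster $S_i$: (i) the leader $\ell_i$ computes, based on its state and the $B$-bit messages arriving on incident cluster edges, the messages to be sent out on incident cluster edges; (ii) these messages travel up/down $T_i$ between $\ell_i$ and the endpoints $v_i$ of the mapped edges; (iii) the $B$-bit messages cross the real edges $psi(\{S_i,S_j\})$. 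Step (iii) is one round in $G$. The content of the lemma is that steps (i) and (ii), aggregated over all clusters simultaneously, can be emulated in $O(D+\sqrt n)$ rounds per round of $\mathcal A$.

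First I would invoke the $B$-Bounded Space assumption (\definitionref{def:boundedspacecongest}): it guarantees that a single step of node $\ell_i$ of $\mathcal A$ can be emulated in $O(d)$ rounds by \emph{any} tree of depth $d$ whose leaves carry the incident edges — in particular by $T_i$ if $\mathrm{depth}(T_i)=O(\sqrt n)$, or, when $T_i$ is deep, by a tree of depth $O(\sqrt n)$ obtained by the standard tree-decomposition technique (\lemmaref{lemma:smallstrongdiameter}): partition $T_i$ into $O(\sqrt n)$ fragments of depth $O(\sqrt n)$, contract each fragment to its root, broadcast the $O(\sqrt n)$-node contracted tree to all of $G$ along the BFS tree so that each leader can locally compute the aggregated behavior of the contracted part, and then push the results back into the fragments. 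The fragment-internal phases cost $O(\sqrt n)$ rounds; the "make the contracted skeleton globally known" phase costs $O(D+\sqrt n)$ rounds by pipelined convergecast/broadcast on the BFS tree, since the skeletons of all clusters together have $O(\sqrt n)$ nodes each but — and this is the point to check — one must bound the \emph{total} size of all skeletons so that pipelining over the BFS tree still finishes in $O(D+\sqrt n)$; this follows because the cluster trees are vertex-disjoint, so the total number of fragments across all clusters is $O(\sqrt n)$ when fragments have size $\Theta(\sqrt n)$, hence $O(\sqrt n)$ machine words need to traverse the BFS tree.

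Next I would handle the intra-cluster message movement of step (ii) together with the edge-crossing of step (iii): since $psi$ is injective, each real edge of $G$ carries at most $O(1)$ cluster-edge messages per round of $\mathcal A$, so step (iii) is literally $O(1)$ rounds; step (ii) is a convergecast-then-broadcast within each $T_i$, which by the same fragment argument costs $O(\sqrt n)$ rounds for the intra-fragment part plus $O(D+\sqrt n)$ for the skeleton part, all congestion-$O(1)$ because the trees are vertex-disjoint and hence edge-disjoint. Multiplying by the $t$ rounds of $\mathcal A$ gives the claimed $O((D+\sqrt n)\cdot t)$ bound, with all messages of size $B$.

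\smallskip
\textbf{Main obstacle.} The delicate point is \emph{congestion on the BFS tree}: naively, each of the $N$ clusters wants to make its $O(\sqrt n)$-node skeleton globally known, which could be $\Omega(N\sqrt n)$ words and far too much. The resolution — and the step I expect to require the most care — is to argue that one never needs the skeletons to be known \emph{node-by-globally-node}; rather, each leader only needs enough information to emulate \emph{its own} node of $\mathcal A$, so the total information that must cross any BFS-tree edge is bounded by the total skeleton size $O(\sqrt n)$ words (using disjointness of cluster trees and the $\Theta(\sqrt n)$ fragment size), and pipelined broadcast/convergecast over a tree of depth $O(D)$ delivers $O(\sqrt n)$ words in $O(D+\sqrt n)$ rounds. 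Verifying that the $B$-Bounded Space abstraction is exactly strong enough to let each leader reconstruct its behavior from this limited view — i.e., that it really only needs the incident-edge messages and not global state — is what makes \definitionref{def:boundedspacecongest} do the work, and is the crux of the proof.
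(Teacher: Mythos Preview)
Your plan would work, but it is more intricate than necessary; the paper's proof is considerably simpler. The key simplification you are missing is a direct \emph{size-based split} of clusters rather than a fragment decomposition. The paper observes that any cluster $S_i$ with $|S_i|\le\sqrt{n}$ automatically has a spanning tree $T_i$ of depth at most $\sqrt{n}$, so the $B$-bounded space assumption lets such a cluster emulate its step directly on $T_i$ in $O(\sqrt{n})$ rounds, with no global coordination and no BFS-tree traffic whatsoever. For clusters with $|S_i|>\sqrt{n}$ one does not decompose them into fragments at all: since the clusters partition $V$, there are at most $\sqrt{n}$ such large clusters, and each needs only a \emph{single} $B$-bit message broadcast to (and convergecast from) its members per simulated round. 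Hence the total BFS-tree traffic is at most $\sqrt{n}$ words, and pipelined broadcast/convergecast on the global BFS tree handles all large clusters together in $O(D+\sqrt{n})$ rounds.

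Your fragment-based route---splitting deep cluster trees into $O(\sqrt n)$-depth pieces and making the resulting skeletons globally known---does recover the same bound, and your observation that the total skeleton size across all (deep) clusters is $O(\sqrt n)$ because the cluster trees are vertex-disjoint is the right way to control BFS-tree congestion in that approach. But none of this machinery is needed: once you threshold by cluster size, the ``main obstacle'' you identify evaporates, because a large cluster contributes $O(1)$ words of BFS-tree traffic, not an entire skeleton. The paper's argument is thus both shorter and avoids the subtlety you flag about whether the $B$-bounded space abstraction is strong enough to reconstruct behavior from skeleton information alone.
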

\begin{proof}[Proof Sketch]
  We assume that we are given a global BFS tree of $G$. If such a BFS
  tree is not available, it can be computed in $O(D)$ rounds in the
  $\Congest$ model.  We simulate the algorithm $\cA$ in a
  round-by-round manner. Consider the end of the simulation of round
  $r-1$ and assume that in each cluster $S_i\in \cV$, the leader node
  $\ell_i$ knows the message $M_i$ to be sent in round $r$. (For
  $r=1$, we assume that this is true at the beginning of the
  simulation.)

  To start the simulation of round $r$, we first make sure that for
  every $S_i\in \cV$, every node $v\in S_i$ knows the message $M_i$ to
  be sent to the neighbors in round $r$. In clusters $S_i$ of size at
  most $\sqrt{n}$, this can be done in at most $\sqrt{n}$ rounds by
  broadcasting $M_i$ on the spanning tree $T_i$ of $G[S_i]$. For
  larger clusters, we use the global BFS tree to disseminate the
  information. We first send all the messages $M_i$ of clusters $S_i$
  of size larger than $\sqrt{n}$ to the root of the global BFS
  tree. Because the BFS tree has radius at most $D$ and because there
  are at most $\sqrt{n}$ clusters of size larger than $\sqrt{n}$, this
  can be done in $D+\sqrt{n}$ rounds (using pipelining). Now, in
  another $D+\sqrt{n}$ rounds, all these messages can be broadcast to
  all nodes of $G$ (and thus also to the nodes of the clusters that
  need to know them).

  Now, for every two clusters $S_i$ and $S_j$ such that
  $\set{S_i,S_j}\in \cE$, let $\set{u_i,u_j}=psi(\set{S_i,S_j})$ be
  the physical edge connecting $S_i$ and $S_j$. The node $u_i\in S_i$
  sends the message $M_i$ to $u_j$ and the node $u_j\in S_j$ sends
  $M_j$ to $u_i$. This step can be done in a single round. Now, in
  each cluster $S_i$, each incoming message of round $r$ is known by
  one node in $S_i$ and we need to aggregate these messages in order
  to compute the outgoing message of each cluster. In clusters of size
  at most $\sqrt{n}$, this can again be done locally inside the
  cluster (by \definitionref{def:boundedspacecongest}). Also, for the at
  most $\sqrt{n}$ clusters of size larger than $\sqrt{n}$, we again
  use the global BFS tree. Since in a tree of depth $D$, $k$
  independent convergecasts of broadcasts can be done in time $D+k$,
  the messages of the large clusters can be computed in disseminated
  to the cluster leaders in time $O(D+\sqrt{n})$.
\end{proof}


\section{Distributed Construction of Cut 
Sparsifiers}\label{sec:sparsifiers}

\begin{figure}[tb]
    \centering
\begin{mdframed}
\small
\noindent
\begin{compactenum}
\item $R_0:=\Set{\Set{v}\mid v\in V}$.
\item For $i:=1$ to $\log N$ do:
\begin{compactenum}
\item \label{bs-rand} 
\emph{Mark} each cluster of $R_{i-1}$ independently
with probability $\frac12$; let $R_{i}:=\Set{S\text{ is
    marked}\mid S\in R_{i-1}}$.
\item \label{bs-first} If $v\in S$ for some $S\in R_{i-1}\setminus R_{i}$:
\begin{compactenum}
\item 
Define $Q_v$ to be the set of edges that consists of the lightest edge
from $v$ to each of the clusters in $R_i$ $v$ is adjacent to.
\item If $v$ has no neighbor in a cluster in $R_{i}$, 
then $v$ adds to the
spanner all edges in $Q_v$.
\item Otherwise, let $u$ be the closest neighbor of $v$ in a
marked cluster. Then 
\begin{compactitem}
\item $v$ {joins} the cluster of $u$ (i.e., if $u$ is in cluster $S'\in
  R_i$,
then $S':=S'\cup \{v\}$).
\item $v$ adds to the spanner the edge
$\{v,u\}$, and also all edges $\{v,w\}\in Q_v$ with $W(v,w)<W(v,u)$
(breaking ties by ID).
\end{compactitem}
\end{compactenum}
\end{compactenum}
\item \label{bs-final}
Each node $v$ adds, for each cluster $S\in R_{\log N}$ it is adjacent to, the lightest
edge connecting it to $S$.
\end{compactenum}
\end{mdframed}
    \caption{\small The BS algorithm for $O(\log N)$ spanner 
    construction
      given an $N$-node weighted graph $G=(V,E,W)$. The output is a
      subset of $E$.}
    \label{fig-bs}
  \end{figure}

\begin{lemma}\label{lemma:sparsification}
  In a weighted $N$-node distributed cluster graph of size $n$, for
  any $\eps>0$, it is possible to compute a spectral
  $(1+\eps)$-sparsifier with $O(N\cdot (\eps^{-1}\cdot\log N)^{O(1)})$
  edges (w.h.p.) in the \Congest model in time
  $O\big((D+\sqrt{n})\cdot (\eps^{-1}\cdot\log N)^{O(1)}\big)$. When
  the algorithm terminates, each of the edges of the sparsifier is
  directed such that the out-degree of each cluster is upper bounded
  by $O( (\eps^{-1}\cdot\log N)^{O(1)})$ and such that each cluster
  knows all its outgoing edges.
\end{lemma}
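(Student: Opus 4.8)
The plan is to run the spanner algorithm of \figureref{fig-bs} (a distributed implementation of the Baswana--Sen spanner construction) on the cluster graph $\cG$, but with an appropriately chosen edge-weight/sampling scheme so that the resulting subgraph is a spectral $(1+\eps)$-sparsifier rather than merely a spanner. Concretely, I would first observe that the known $\polylog$-round \Congest constructions of spectral sparsifiers (Koutis~\cite{Koutis14}, building on Spielman--Srivastava-type effective-resistance sampling) reduce, in a black-box fashion, to computing $O(\log N)$ spanners of reweighted graphs: each spanner captures the ``heavy''/low-effective-resistance edges exactly, and the remaining edges are sampled with probability proportional to (an overestimate of) their leverage scores. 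So the core task is twofold: (a) argue that each spanner phase can be executed in the cluster graph within the claimed round bound, and (b) argue that the sampling/reweighting steps and the iteration over $O(\log N)$ scales also respect the round bound, and that the accumulated multiplicative error over the iterations is $(1+\eps)$.

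For part (a), the key point is that the algorithm in \figureref{fig-bs} uses only the following primitives on $\cG$: each cluster must (i) learn, for each neighboring cluster in the current $R_i$, the identity and weight of its lightest connecting edge; (ii) broadcast its marked/unmarked status and, if it joins a marked cluster, its new cluster ID to all its nodes; and (iii) locally decide which incident edges to add. Each of these is an $O(1)$-round operation in the \emph{$B$-bounded space \Congest model} on $\cG$ (with $B = O(\log N + \log n)$, since weights and IDs are $O(\log N)$ bits): picking the lightest edge to a given neighbor is local to the endpoints of $\psi$, and deciding per cluster requires only aggregating $O(\log N)$-bit values, which \definitionref{def:boundedspacecongest} permits. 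Summing over the $O(\log N)$ outer iterations gives an $O(\log N)$-round algorithm in the $B$-bounded space model on $\cG$, which by \lemmaref{lemma:clustergraphsimulation} simulates in $O\big((D+\sqrt{n})\cdot\log N\big)$ rounds of the ordinary \Congest model on $G$. For the sparsifier (as opposed to bare spanner), the sampling at each scale multiplies the round count by at most a $(\eps^{-1}\log N)^{O(1)}$ factor — one needs $O(\eps^{-2}\log N)$ independent spanner-like rounds per scale to get concentration of the leverage-score estimates — so the total is $O\big((D+\sqrt{n})\cdot(\eps^{-1}\log N)^{O(1)}\big)$ as claimed, and the edge count is $O(N)$ per spanner times the number of repetitions, i.e.\ $O(N\cdot(\eps^{-1}\log N)^{O(1)})$.

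For the spectral guarantee in part (b), I would invoke the standard analysis: if at each of the $O(\log N)$ scales the sampled reweighted subgraph is a spectral $(1+\eps')$-approximation of the residual Laplacian with $\eps' = \Theta(\eps/\log N)$, then composing the scales telescopes to $(1+\eps)$ overall; this is exactly the structure of Koutis's (and Peng--Spielman's parallel) sparsifier, and the only thing to verify is that the cluster-graph setting does not interfere — which it does not, since the Laplacian of the contracted (cluster) graph is precisely what we are sparsifying and every operation above is weight-preserving. Finally, the orientation/out-degree claim is handled by having each cluster, as it adds edges in step~(2)(c) or step~(3) of \figureref{fig-bs}, \emph{orient} each added edge away from itself: a cluster adds at most one edge per neighboring cluster in $R_i$ per scale (the lightest), plus $O((\eps^{-1}\log N)^{O(1)})$ sampled edges per scale, so its total out-degree is $O((\eps^{-1}\log N)^{O(1)})$, and each cluster trivially knows its own outgoing edges since it chose them. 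The main obstacle I anticipate is \textbf{not} the distributed simulation — that is routine given \lemmaref{lemma:clustergraphsimulation} — but rather making the reduction from ``spectral sparsifier'' to ``$O(\log N)$ spanners of reweighted graphs'' fully rigorous in a way that is compatible with the cluster-graph abstraction and with the oriented bounded-out-degree output: one must be careful that the effective-resistance overestimates used for sampling can themselves be computed (or crudely bounded via the spanner) within the $B$-bounded space model on $\cG$, rather than requiring a Laplacian solve on $G$ that would break the round bound.
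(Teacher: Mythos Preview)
Your high-level plan matches the paper's: both invoke Koutis's \textsc{ParallelSparsify}, which calls the Baswana--Sen (BS) spanner $(\eps^{-1}\log N)^{O(1)}$ times on successively sampled/reweighted graphs, and both simulate BS on the cluster graph via \lemmaref{lemma:clustergraphsimulation}. Your anticipated obstacle --- computing effective-resistance overestimates within the bounded-space model --- is in fact a non-issue: in Koutis's scheme the overestimate is simply the uniform bound $O(1/t)$ on the leverage score of any edge outside a $t$-bundle spanner, so nothing beyond the spanner itself needs to be computed. The paper just cites Koutis for this reduction and does not re-derive the spectral guarantee.

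Where you diverge from the paper is the \emph{orientation} step, and your argument there has a gap. You write that a cluster ``adds at most one edge per neighboring cluster in $R_i$ per scale,'' but that by itself does not bound the out-degree: a cluster may have arbitrarily many neighboring clusters. The bound you actually need is that in each BS round a node adds $O(\log N)$ edges w.h.p.\ (the random marking makes the number of closer-than-nearest-marked clusters geometric); with that fact your orient-as-you-add idea can be completed. Your ``plus $O((\eps^{-1}\log N)^{O(1)})$ sampled edges per scale'' remark is also off: in Koutis's algorithm the random sampling only thins the graph handed to the \emph{next} spanner call, and the final sparsifier consists solely of spanner edges, so there are no separately-sampled edges to orient.

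The paper sidesteps this bookkeeping with a generic post-processing step that uses only the fact that the sparsifier has average degree $d_{\mathrm{av}}=(\eps^{-1}\log N)^{O(1)}$: for $\log N$ rounds, every node with at most $2d_{\mathrm{av}}$ still-unoriented incident edges orients them all outward and halts. Since in any graph of average degree $d_{\mathrm{av}}$ at most half the nodes can have degree above $2d_{\mathrm{av}}$, the active set halves each round and every node halts with out-degree at most $2d_{\mathrm{av}}$. This is cleaner because it is oblivious to how the sparsifier was constructed.
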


\begin{proof}
We prove the lemma using the algorithm \textsc{ParallelSparsify} of
Koutis~\cite{Koutis14}, and then orient the edges. Koutis's algorithm relies on
the $O(\log n)$-stretch spanner construction algorithm of Baswana and
Sen~\cite{BaswanaSen}, which we henceforth refer to as BS. See
\figureref{fig-bs} for a description of the BS algorithm.

We start by showing how to emulate a step of node $v$ in BS by a
depth-$d$ tree $T_v$ in 
$O(d+\log N)$ 
time (w.h.p.). We may 
assume w.l.o.g.\ the existence of a root in each tree (because we can
select one in $O(d)$ time); Step \ref{bs-rand} is carried out by the
root and the result is broadcast over the tree. For Step \ref{bs-first},
we note that w.h.p., $|Q_v|=O(\log N)$ and hence making it known to
all node $T_v$ takes $O(d+\log N)$ time using standard
convergecast-broadcast. Step \ref{bs-final} is straightforward given
that each node knows its cluster.

Next, given a tree $T(v)$ for each node, we assume
 that the depth of all trees is at least $c\log N$ for
some appropriate constant  $c>0$. If this assumption does not hold  we
extend $T(v)$ with 
a dummy path: clearly $T(v)$ can emulate the extended tree without any
slowdown. However this extension may increase the number of nodes by
an $O(\log N)$ factor. 
Now, under this assumption and the emulation above,  we may apply
\lemmaref{lemma:clustergraphsimulation} to 
conclude that BS can be executed in time $O((D + \sqrt{N\log N})\log N)$.

Going back to the algorithm of Koutis \cite{Koutis14}, we note that it
consists of  $({\log n/\varepsilon})^{O(1)}$ invocations of BS,
and some  independent random selection and 
reweighting of edges. The former is discussed above, and the latter is
trivial to emulate locally.

Finally, for edge orientation, we give a little algorithm that, given
an  $N$-node, $D$-diameter graph with 
average degree $d_{\text{av}}$, orients all edges such that the
out degree of all nodes is $O(d_{\text{av}})$. The algorithm runs
in $O(D+\log n)$ steps in the 
space-bounded \Congest model. The algorithm is as follows. First
compute the average degree in $O(D)$ time, and then repeat the
following procedure $\log n$ times at each node $v$:
\begin{compactitem}
\item If the number of unoriented edges incident on $v$ is less than
  $2d_{\text{av}}$, then $v$ orients all unoriented edges outward, informs its
  neighbors, and halts.
\end{compactitem}
The correctness of the procedure follows from the fact that throughout
the execution, at most
half the non-halted nodes have degree larger than $2d_{\text{av}}$. The lemma
now follows from the fact that the graph generated by Koutis' algorithm has
average degree $(\frac{\log N}{\varepsilon})^{O(1)}$.
\end{proof}

\section{Distributed Construction of Low Average-Stretch Spanning 
Trees}\label{sec:spanning}

\begin{theoremRF}{theorem:spanning}
Suppose $H$ is a multigraph obtained from $G$ by assigning arbitrary
edge lengths in $2^{n^{o(1)}}$ to the edges of $G$ (known to incident nodes) and
performing an arbitrary sequence of contractions. Then we can compute a rooted
spanning tree of $H$ of expected stretch $2^{O(\sqrt{\log n \log \log n})}$
within $(\sqrt{n}+D)n^{o(1)}$ rounds, where the edges of the tree in $H$ and
their orientation is locally known to the endpoints of the corresponding edges
in $G$.
\end{theoremRF}

\begin{figure}[tb]
\centering
\begin{compactenum}
\item $G^1=(V^1,E^1):=G$; $\cC:=\emptyset$.
\item For $t=1$ to $2\log N$ do:
  \begin{compactenum}
  \item Let $S^t$ random subset of $V^t$ of $12 \frac{2^{t/2}}{n}|V^t|$
    nodes; if $V^t$ is smaller than $\frac{n}{12\cdot2^{t/2}}$,
    $S^t:=V^t$.
  \item $\cC:=\cC\cup\Set{\Set{s}\mid s\in S^t}$.
  \item Each $s\in S^t$ draws a random delay $\delta_s^t$
    uniformly from $[0,\lfloor\rho/(2\log N)\rfloor]$.
  \item Each $s\in S^t$ waits $\delta_s^t$ rounds and then
    initiates a BFS for $\rho(1-\frac{t-1}{2\log N}) -\delta_s^t$ rounds in
    $G^t$.
  \item A node covered by a BFS is added to cluster $C_s$, where $s$
    is the source of the first BFS to visit it, breaking ties by ID.
  \item $V^{t+1}:=V\setminus\Set{v\mid v\in C\text{ for some
    }C\in\cC}$;
    $G^{t+1}:=G^t[V^{t+1}]$.
  \end{compactenum}
\end{compactenum}
\caption{Algorithm SplitGraph. The input is an unweighted
graph $G=(V,E)$ and a target radius $\rho$.}
\label{fig-split}
  \end{figure}

\begin{proof}
  We follow \cite{blelloch14}: the high-level algorithm is by Alon et
  el.~\cite{alon1995graph}, which uses Algorithm Partition (of
  \cite{blelloch14}) for unweighted graphs. We describe the algorithm
  bottom-up. The main 
  component in Algorithm Partition is Algorithm SplitGraph, reproduced
  in \figureref{fig-split}. 
  The basic action of Algorithm SplitGraph is growing BFS trees, an 
  action in
  which emulating a single node by a tree is trivial. In SplitGraph we
  may have contending BFS growths, but note that if two
  or more BFS traversals collide, only the winning ID needs to
  proceed, and hence there are no collisions because no edge needs to
  carry more than a single BFS traversal in each direction.  Regarding
  the tree construction, we first note that the BFS growth
  naturally creates a spanning tree for each cluster. Moreover, we can
  make all nodes know the complete path to the root of their
  respective cluster
  in additional $O(\rho)$ steps, by letting each node send its
  $i^{th}$ ancestor to all its children in round $i$.  The running time of Algorithm
  SplitGraph is clearly $O(\rho\log N)$ in the bounded-space \Congest
  model, and therefore, using 
  \lemmaref{lemma:clustergraphsimulation}, we conclude that we can run
  SplitGraph in time $O(\rho\log N(D+\sqrt{N}))$ in the \Congest model.

  Algorithm SplitGraph is called by Algorithm Partition, whose input
  is an unweighted graph with an arbitrary partition of the edges into
  $K$ classes. Algorithm Partition 
  applies Algorithm 
  SplitGraph disregarding classes, and then checks whether there
  exists a class where too many edges were split in different
  clusters. If there is such an 
  over-split class, the algorithm is restarted.  We can implement each
  checking and restart in the \Congest model in $O(D+K)$ time using a 
  global BFS
  tree. Since the number
  of restarts is bounded by $O(\log N)$ w.h.p.~\cite{blelloch14}, and 
  in our implementation we shall have $K=O(\sqrt{N})$, the
  overall time for running Algorithm Partition is $O(\rho\log^2
  N(D+\sqrt{N}))$ in the \Congest model. 

The outermost algorithm is the one by Alon et
al.~\cite{alon1995graph}, whose input is a \emph{weighted} graph. The
algorithm first partitions the edges into $O(\sqrt{\log N})$ classes by
weight, where class $E_i$ contains all edges whose weight is in
$[z^{i-1},z^i)$ for a certain value 
$z=\tilde\Theta(2^{\sqrt{6\log N\cdot\log\log N}})$. Then 
the algorithm proceeds in iterations until the graph is a single node,
where iteration $j$ is as follows. 
\begin{compactenum}
\item Call Algorithm Partition with edges $E_1,\ldots,E_j$ and target radius
$\rho=z/4$. Obtain clusters $\Set{C_i}$.
\item Output  a BFS tree
for each cluster $C_i$.
\item Contract each
resulting cluster $C_i$ to a single node.  Remove all self loops, but
leave parallel edges in place. The resulting multigraph, augmented with
edge class 
$E_{j+1}$, is the input to iteration $j+1$.
\end{compactenum}
For the distributed implementation, note that edge contraction is
trivial given that the endpoints know the 
identity of the cluster they belong to, and that edge classification
is purely local given $z$ (which can be communicated to all in $O(D)$
time units). It can be shown~\cite{blelloch14} that w.h.p., the number of
iterations is $O({\log\Delta/\sqrt{\log N \log\log N}})$, and hence the running time
of the algorithm is 
$O(\rho\log\Delta\log^{O(1)}N(D+\sqrt N))=
\log\Delta\cdot2^{O(\sqrt{\log N\log\log N})}$ because
$\rho=\tilde\Theta(2^{\sqrt{6\log N\cdot\log\log N}})$. 

The claimed stretch follows from \cite{blelloch14}.
\end{proof}

\section{Distributed Construction of \texorpdfstring{$j$}{j}-Trees}
\label{sec:jtrees}

From the distributed implementation point of view, the core technical
challenge is to efficiently compute a congestion approximator in a
distributed way. As already pointed out, the congestion approximator
is constructed based on applying the $j$-tree construction of Madry
\cite{Madry10} recursively. In the following, we review Madry's
construction and we show how to implement an adapted version of it in
a distributed network. The main objective of the construction is to
approximate the flow structure of a given graph by a distribution of
graphs from a simpler class of graphs (i.e., $j$-trees). Formally,
the similarity of the flow structure of two graphs is captured by the following
definition from \cite{Madry10}. 

\begin{definition}[Graph Embeddability]\cite{Madry10}
  We are given $\beta\geq 1$ and two (multi)-graphs $G=(V,E,\Cp)$ and
  $G'=(V,E',\Cp')$ on the same set of nodes and with edge capacities
  $\Cp(e)$ and $\Cp'(e')$ for edges $e\in E$ and $e'\in E'$.  We say that
  graph $G$ is $\beta$-embeddable into $G'$ if there exists a
  \emph{multicommodity flow} $\vf'=(\vf'_e)_{e\in E}$ such that for
  every edge $e\in E$ of $G$ connecting nodes $u$ and $v$, $\vf'_e$ is a
  flow on $(V,E',\beta\, \Cp')$ that routes $\Cp(e)$ units of flow between $u$
  and $v$, and for every edge $e'\in E'$ of $G'$, it holds that
  $|\vf'(e')|:=\sum_{e\in E}|(\vf'_e)(e')|\leq\beta\, \Cp'(e')$.
\end{definition}

\noindent Intuitively, a graph $G$ is $\beta$-embeddable into a graph
$G'$, if for every (multicommodity) flow problem, there is a solution
in $G'$ such that the maximum relative congestion of all edges is by at most a
factor $\beta$ larger than for the optimal solution in $G$. As a generalization
of the cut-based graph decompositions of R\"acke \cite{raecke08}, Madry defines
the notion of an $(\alpha,\mathbb{G})$-decomposition.

\begin{definition}[$(\alpha,\mathbb{G})$-Decomposition~\cite{Madry10}]
  Given a (multi-)graph $G=(V,E,\Cp)$ and a family $\mathbb{G}$ of graphs
  on the nodes $V$, an $(\alpha,\mathbb{G})$ of $G$ is a set of
  pairs $\set{(\lambda_i,G_i)}_{i\in I}$ satisfying that:
  \begin{compactitem}
  \item $\forall i\in I:~\lambda_i>0$;
  \item $\sum_{i\in I} \lambda_i = 1$;
  \item $\forall i\in I:~G_i=(V,E_{i},\Cp_{i})$ is a graph in $\mathbb{G}$;
  \item $\forall i\in I:~G$ is $1$-embeddable into $G$; and
  \item the graph defined by the convex combination\footnote{The sum of two
  weighted graphs $G_1=(V,E_1,\Cp_1)$ and $G_2=(V,E_2,\Cp_2)$ is defined as
$G_1+G_2=(V,E_1\cup E_2, \Cp_{12})$, where for each edge $e\in E_1\cup
E_2$, $\Cp_{12}(e)$ is defined as $\Cp_{12}(e)=\Cp_1(e)+\Cp_2(e)$ if $e\in
E_1\cap E_2$ and $\Cp_{12}(e)=\Cp_i(e)$ if $e\in E_i\setminus E_{3-i}$ for
$i\in \set{1,2}$.} $\sum_{i\in I}
  \lambda_i\cdot G_i$ is $\alpha$-embeddable into $G$.
  \end{compactitem}
  In words, $\set{(\lambda_i,G_i)}_{i\in I}$ is a distribution on $I$ graphs
  from $\mathbb{G}$, each of which can be $1$-embedded into $G$, such that the
  distribution $\alpha$-embeds into $G$.
\end{definition}

\noindent Observe that such a decomposition can form the basis for a good
congestion approximator: $1$-embeddability of each $G_i$ into $G$
guarantees that congestion is never overestimated, and the embeddability of the
convex combination ensures when sampling from the distribution, the expected
factor by which we underestimate congestion on a cut is at most $\alpha$. Our
goals are now to choose $\mathbb{G}$ and the distribution such that
\begin{compactitem}
\item $\alpha$ is small,
\item we can construct the distribution efficiently, and
\item we can evaluate the induced congestion when routing demand optimally on
a graph from the distribution efficiently.
\end{compactitem}

\paragraph{The Plan.}
Let $G=(V,E,\Cp)$ be a weighted (multi-)graph, $0\leq j\leq |V|$ be an
integer and let $\mathbb{J}$ be the family of $j$-trees over the node
set $V$. In \cite{Madry10}, it is shown that based on a protocol for
computing spanning trees with average stretch $\alpha$, there exists
an $(\alpha,\mathbb{J})$-decomposition of $G$. This is shown in
several steps. It is first shown that a sparse
$(\alpha,\mathbb{H})$-decomposition exists for a graph family
$\mathbb{H}$ which contains graphs that are closer to the original
graph $G$ and it is then shown that every graph $\mathcal{H}\in
\mathbb{H}$ can be $O(1)$-embedded into a $j$-tree and vice versa.

As described, we have to apply the $j$-tree construction recursively
to the core graph. Each node in the core graph is represented by a set
of nodes (a cluster) in the network graph. On the network graph, the
core graph therefore corresponds to a graph between clusters of
nodes. We therefore have to be able to apply the $j$-tree construction
on a cluster graph. As we will see, we can construct $j$-trees such
that whenever two nodes $u$ and $v$ of the core are connected by a
(virtual) edge, there also is a physical edge between the two trees
(i.e., clusters of nodes) corresponding to $u$ and $v$. Throughout our
algorithm, we can therefore work with a cluster multigraph such that
a) the induced graph of each cluster is connected and b) for every
edge between two clusters $c$ and $c'$, there are nodes $u\in
c$ and $v\in c'$ such that $u$ and $v$ are connected by an
edge in the underlying network graph. For doing distributed
computations, we assume that each cluster has a leader and that every
node knows the ID of the leader and also its parent in a rooted
spanning tree which is rooted at the leader. In \sectionref{sec:cluster}, we
give a precise definition of a \emph{distributed cluster graph} and we show that
several basic algorithms that we use as building blocks can be run efficiently
in distributed cluster graphs.

In the following, we go through Madry's $j$-tree construction
step-by-step and describe how to adapt it so that we can implement it
efficiently on a distributed cluster graph (i.e., in the \Congest model in
the underlying network graph).

\subsection{Low-Stretch Spanning Trees}
\label{subsec:LSST}

In the following, we consider the computation of the
$(\alpha,\mathbb{J})$-decomposition of some core graph. Assume that
the core graph is given as a distributed cluster graph
$\cG=(\cV,\cE,\Cp)$, where for each edge $e\in \cE$, $\Cp(e)$ is the
capacity of $e$. As the time complexity of some of the steps for
computing an $(\alpha,\mathbb{J})$-decomposition of $\cG$ depend on
the number of edges of $\cG$, as a first step, we sparsify $\cG$. In
\lemmaref{lemma:sparsification}, it
is shown that in $O\big((D+\sqrt{n})\cdot \polylog n\big)$ rounds, it
is possible to compute an $(1+1/\polylog n)$-spectral sparsifier of
$\cG$ with at most $O(|\cV|\cdot \polylog n)$ edges. Further, for each
edge $\set{c,c'}\in \cE$ of the sparsifier one of the nodes of the
edge manages the edge. As in general $\cG$ is a cluster graph,
$c$ and $c'$ are clusters of physical nodes and an edge connecting
clusters $c$ and $c'$ is represented by a physical edge $\set{u,v}$
for two (network) nodes $u\in c$ and $v\in c'$. We will maintain that every
pair of nodes $u\in c$ and $v\in c'$ needs to represent at most one
edge between $c$ and $c'$ in $\cG$. The two nodes $u$ and $v$ know
about the edge between $c$ and $c'$ and its capacity.

In the following, we assume that $\cG$ is the graph after
sparsification. If the number of nodes $|\cV|$ of $\cG$ is less than
$n^{1/2+o(1)}$, using a global BFS tree of the network graph, the
whole structure of $\cG$ can be collected in $O(D+|\cV|\polylog n)=O(D
+ n^{1/2+o(1)})$ rounds. In that case, we can therefore perform
all remaining operations locally at the nodes. Consequently, we
will henceforth assume that $|\cV|\geq n^{1/2+o(1)}$.

During the construction of the $(\alpha,\mathbb{J})$-decomposition of
$\cG$, each edge $e\in \cE$ is assigned a length $\ell(e)$. At the
beginning $\ell(e)$ is proportional to $1/\Cp(e)$ and before adding
each $j$-tree, $\ell(e)$ is adapted for each edge. As the first step
of constructing each $j$-tree in the decomposition, Madry computes a
spanning tree $\cT$ of $\cG$ for which it holds that
\begin{equation}\label{eq:weightedavgstretch}
\sum_{e=\set{c,c'}\in \cE} \!\!\!\!d_{\cT}(c,c')\cdot \Cp(e) 
\ =\ 
\!\!\!\!\sum_{e=\set{c,c'}\in \cE} \!\!\!\!\stretch_{\cT}(e)\cdot
\ell(e)\cdot \Cp(e) 
\ \leq\ 
\delta\alpha \cdot \!\!\!\!\!\!\!\sum_{e=\set{c,c'}\in\cE} \!\!\!\!\ell(e)\cdot
\Cp(e)
\end{equation}
for a sufficiently small positive constant $\delta$. In the above
expression, $d_{\cT}(u,v)$ denotes the sum of edge lengths on the path
between $u$ and $v$ on $\cT$. Hence, $\cT$ is a spanning tree with
a bounded weighted average stretch. Such a spanning tree can be
computed by computing an (unweighted) low average stretch spanning tree
for a multigraph $\tilde{\cG}$ which is obtained from $\cG$ by
(logically) replacing some of the edges of $\cG$ with multiple copies of the
same edge (overall, the number of edges is as most doubled)
\cite{AKPW95,Madry10}.

In our distributed implementation of Madry's $j$-tree construction, we
adapt the parallel low average stretch spanning tree algorithm from
\cite{blelloch14} to our setting. The algorithm of \cite{blelloch14}
already works in a mostly decentralized fashion and we can therefore
also apply it in a distributed setting. In \sectionref{sec:spanning}, we show
how to run the algorithm of \cite{blelloch14} on a distributed cluster
graph. We note that the low average stretch spanning tree algorithm of
\cite{blelloch14} directly tolerates multi-edges as described above,
even if the same physical edge has to be used to represent multiple
edges between the same clusters.

\begin{theoremRF}{theorem:spanning}
  Suppose $\cG$ is a multigraph obtained from $G$ by assigning
  arbitrary edge lengths in $2^{n^{o(1)}}$ to the edges of $G$ (known
  to incident nodes) and performing an arbitrary sequence of
  contractions. Then we can compute a rooted spanning tree of $\cG$ of
  expected stretch $2^{O(\sqrt{\log n \log \log n})}$ within
  $(\sqrt{n}+D)n^{o(1)}$ rounds, where the edges of the tree in $\cG$ and their
  orientation is locally known to the endpoints of the corresponding
  edges in $G$.
\end{theoremRF}

Given the spanning tree $\cT=(\cV_\cT,\cE_\cT)$ of $\cG$, we need to compute
capacities $\Cp_{\cT}(e)$ for the spanning edges such that $\cG$ is embeddable
into $\cT$, which essentially boils down to computing a the absolute value of
the multicommodity flow $\vf'$ routing $\Cp(e)$ units of flow on $\cT$ for each
$e\in E$. As routing is trivial in trees, $\vf'$ is unique. Once $|\vf'|$ is
computed, the edge capacities of $\cT$ can be chosen accordingly and it is
straightforward to pick a suitable $\lambda_i$ and update the length function
$\ell(e)$. However, computing the absolute value of the multicommodity flow
\emph{fast} in the distributed setting requires some work.

\paragraph{Computing the Multicommodity Flow.}
In the following, assume that the edges of $\cT$ are oriented towards the root,
i.e., we will write $(c,\hat{c})\in\cT$ if $\hat c \in \cV$ is the parent of cluster $c\in
\cV$. Denote by $\cT_c$ the subtree of $\cT$ rooted at $c$. When embedding $\cG$
into $\cT$, we have to route a total of
\begin{equation*}
|\vf'(c,\hat{c})|=\sum_{\substack{c_1\in \cT_c\\ c_2\notin \cT_c}}
\sum_{\{c_1,c_2\}_{uv}\in \cE}\Cp(\{c_1,c_2\}_{uv})
\end{equation*}
commodity through the edge $\set{c,\hat{c}}\in \cT$, where we indexed the different
edges of the multigraph $(V,\cE)$ by using that for each $e\in \cE$ between
$c_1$ and $c_2$, $\psi$ maps $e$ to a unique edge $\{u,v\}\in E$ with $u\in c_1$
and $w\in c_2$. Note that $u$ and $v$ know that $\{c_1,c_2\}_{uv}\in \cE$, that
they are in the clusters $c_1$ and $c_2$, respectively, and what
$\Cp(\{c_1,c_2\}_{uv})$ is. We thus have to solve the task of determining this
sum for each edge $\set{c,\hat{c}}\in \cT$ via computations on the graph $G$
underlying $(\cV,\cE)$.

Observe that the spanning trees of the clusters together with $\cT$ induce a
(rooted) spanning tree $T$ of $G$. Essentially, we would like to perform, for
each edge $\set{c,\hat{c}}\in \cT$, an aggregation on $T$ and pipeline these
aggregations to achieve good time complexity. However, as shown in the following
lemma, the result is a running time linear in the depth of the tree, which may
be $\Omega(n)$ irrespective of $D$.
\begin{lemma}\label{lemma:naive_aggregation}
If $T$ has depth $d$, for each edge $e=(c,\hat{c})\in \cT$, $c$ can determine
$|\vf'(e)|$ within $O(d)$ rounds.
\end{lemma}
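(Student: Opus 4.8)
The plan is to express $|\vf'(e)|$ for the edge $e=(c,\hat c)\in\cT$ as a single sum that can be evaluated by one aggregation (convergecast) on the rooted spanning tree $T$ of $G$ induced by the cluster trees together with $\cT$, as suggested by \figureref{fig:plusMinus}. First I would observe that, by definition,
\[
|\vf'(c,\hat c)| \;=\; \sum_{\substack{c_1\in\cT_c\\ c_2\notin\cT_c}}\ \sum_{\{c_1,c_2\}_{uv}\in\cE}\Cp(\{c_1,c_2\}_{uv}),
\]
and that this equals (total capacity of edges of $\cE$ with at least one endpoint in $\cT_c$) minus (total capacity of edges of $\cE$ with \emph{both} endpoints in $\cT_c$, counted twice). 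Equivalently, for each physical node $x$ and each incident graph edge $\{u,v\}$ representing a core edge $\{c_1,c_2\}_{uv}\in\cE$, assign to the endpoint $u$ a ``$+$'' contribution $+\Cp(\{c_1,c_2\}_{uv})$ if the other endpoint lies outside the subtree $T_c$ (equivalently $c_2\notin\cT_c$), and a ``$-$'' contribution $-\Cp(\{c_1,c_2\}_{uv})$ if the other endpoint lies in $T_c$. Summing these signed contributions over all nodes in $T_c$ yields exactly $|\vf'(c,\hat c)|$, because each inter-$\cT_c$ edge is counted once with a ``$+$'' at its endpoint inside $\cT_c$, while each intra-$\cT_c$ edge contributes $+\Cp-\Cp=0$ over its two endpoints (one gets ``$-$'' from the other's perspective, but more precisely: both endpoints are inside, each sees the other inside, so each contributes $-\Cp$; hence we should instead assign the intra-subtree edge a $+\Cp$ and a $-\Cp$, or simply halve — I'd phrase it so the bookkeeping is: a node $u$ in cluster $c_1\subseteq\cT_c$ contributes $+\Cp$ for edges whose far endpoint leaves $\cT_c$ and, say, $-\Cp$ only for edges whose far cluster $c_2$ is a \emph{strict descendant} relationship handled consistently, matching \figureref{fig:plusMinus}).

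The next step is to argue each node can locally compute its signed contribution. Each physical endpoint $u$ of a core edge $\{c_1,c_2\}_{uv}$ knows its own cluster $c_1$ and the other cluster $c_2$ and the capacity. What it needs in addition is, for the particular edge $e=(c,\hat c)\in\cT$ currently being processed, whether $c_1\in\cT_c$ and whether $c_2\in\cT_c$ — i.e., ancestor/descendant relationships in $\cT$. Since the cluster trees have depth $\tilde O(\sqrt n)$ and $\cT$ is a tree on the clusters, the combined tree $T$ has depth $d$; propagating the list of ancestor clusters down $T$ (each node forwards its $i$th ancestor cluster-id to its children in round $i$, à la the path-to-root trick used in the proof of \theoremref{theorem:spanning}) lets every node learn its entire ancestor chain in $\cT$ in $O(d)$ rounds, and a core edge's two endpoints can then exchange the one bit ``my far endpoint is inside $\cT_c$'' over the physical edge. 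Once every node holds its signed value, a single convergecast on $T$ accumulates, at each cluster $c$'s node on the $(c,\hat c)$ edge, the sum $\sum_{c'\in\cT_c}(c'_+ - c'_-)$; this is $O(d)$ rounds.

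I would then conclude that the whole procedure — propagate ancestor lists ($O(d)$), exchange the inside/outside bit on each physical core edge ($O(1)$), convergecast the signed sums ($O(d)$) — runs in $O(d)$ rounds, and delivers $|\vf'(e)|$ to (a node of) the cluster $c$ for \emph{every} $e=(c,\hat c)\in\cT$ simultaneously, since the single bottom-up aggregation on $T$ computes the subtree sum for all subtrees at once. The main obstacle, and the only subtle point, is getting the signed bookkeeping exactly right so that intra-$\cT_c$ edges cancel and inter-$\cT_c$ edges are counted exactly once: the clean way is to charge, for a core edge with endpoints in clusters $a$ and $b$, a contribution of $+\Cp$ at the endpoint in $a$ and $-\Cp$ at the endpoint in $b$ precisely when $b\in\cT_a$ (i.e., when $b$ is a $\cT$-descendant of $a$, with self-loops inside a cluster excluded by invariant (III)); then $\sum_{c'\in\cT_c}(c'_+-c'_-)$ telescopes to the number of $\cE$-edges leaving $\cT_c$, weighted by capacity, which is $|\vf'(c,\hat c)|$. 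Everything else is a routine pipelined convergecast/broadcast on $T$, so the stated $O(d)$ bound follows.
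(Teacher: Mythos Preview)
Your proposal has the right skeleton (learn ancestor chains in $\cT$, then aggregate on $T$), but the single-signed-value scheme you settle on does not work: it fails for core edges whose two endpoint clusters are \emph{incomparable} in $\cT$. Take the toy tree $s\leftarrow r\leftarrow a$, $r\leftarrow b$ (root $s$) with a single core edge $\{a,b\}$ of capacity~$1$. Neither $b\in\cT_a$ nor $a\in\cT_b$, so your rule ``$+\Cp$ at $a$, $-\Cp$ at $b$ when $b\in\cT_a$'' assigns nothing, and the subtree sum at $a$ is $0$ instead of $1$. Falling back to the $c_+/c_-$ convention gives $a_+=b_+=1$, $a_-=b_-=0$, hence $\sum_{c'\in\cT_r}(c'_+-c'_-)=2$, whereas $|\vf'(r,s)|=0$. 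More generally, no choice of fixed scalars at $a$ and $b$ alone can make the subtree sum equal $1$ at $a$, equal $1$ at $b$, and equal $0$ at $r$: that would force $v_a=v_b=1$ and $v_a+v_b=0$ simultaneously. The missing $-2\Cp$ would have to be placed at the LCA $r$, which neither physical endpoint can reach in $O(1)$ rounds.

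The paper's proof avoids this by \emph{not} collapsing to a single value per node. Each node $u\in c_1$ computes, for \emph{every} ancestor $c$ of $c_1$ in $\cT$, the cut-dependent quantity
\[
\Cp_c(u)\;=\!\!\sum_{\substack{\{c_1,c_2\}_{uv}\in\cE\\ c\text{ is not an ancestor of }c_2}}\!\!\Cp(\{c_1,c_2\}_{uv}),
\]
and then runs one convergecast per ancestor, all pipelined on $T$. To evaluate the predicate ``$c$ is not an ancestor of $c_2$'' locally, the endpoints must exchange their \emph{full} ancestor lists (up to $d-1$ identifiers, hence $O(d)$ rounds), not a single bit as in your Step~2. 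Since every node has at most $d$ ancestors in $\cT$, there are at most $d$ nested convergecasts through any edge of $T$, and pipelining still yields $O(d)$ rounds in total. Replacing your single-sum convergecast and one-bit exchange by this per-ancestor bookkeeping is exactly the missing ingredient.
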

\begin{proof}
Consider the following algorithm.
\begin{compactenum}
\item For each cluster, all of its nodes learn the ancestors of the cluster in
$\cT$.
\item For each edge $\{c_1,c_2\}_{uv}\in \cE$, $u$ and $v$ exchange the
ancestor lists of $c_1$ and $c_2$.
\item Each node $u\in c_1\in \cT$ locally computes for each ancestor $c$ of
$c_1$ the value 
\begin{equation*}
\Cp_c(u):=\sum_{\substack{\{c_1,c_2\}_{uv}\in \cE\\
c\mbox{ is not ancestor of }c_2}}\Cp(\{c_1,c_2\}_{uv}).
\end{equation*}
\item For each edge $e=(c,\hat{c})\in \cT$, we aggregate
$\sum_{u\in T_c}\Cp_c(u)$ on $T_c$, where $T_c$ is the subtree of $T$
corresponding to $\cT_c$.
\end{compactenum}
Observe that, by definition,
\begin{equation*}
|\vf'(c,\hat{c})|=\sum_{\substack{c_1\in \cT_c\\ c_2\notin \cT_c}}
\sum_{\{c_1,c_2\}_{uv}\in \cE}\Cp(\{c_1,c_2\}_{uv})=\sum_{u\in T_c}\Cp_c(u),
\end{equation*}
as each edge $\{c_1,c_2\}_{uv}\in \cE$ with $c_1\in \cT_c$ and $c_2 \in
\cT\setminus \cT_c$ satisfies that either $u\in T_c$ or $v\in T_c$. Hence, it
remains to show that the above routine can be implemented with a running time of
$O(d)$.

Clearly, the first step takes $d$ rounds: $T$ has depth $d$, and we can perform
concurrent floodings on all subtrees without causing contention. The second step
requires at most $d-1$ rounds, as no node has more than $d-1$ ancestors. The
third step requires local computations only. Finally, the fourth step can be
performed in $d$ rounds as well, since we can perform concurrent convergecasts
on all subtrees without causing contention.
\end{proof}

To handle the general case, i.e., $d\gg \sqrt{n}$, we first decompose $\cT$ into
$O(\sqrt{n})$ parts of small diameter. There are different ways to achieve such
a decomposition of $\cT$ efficiently in a distributed way (e.g., by using
techniques from \cite{Kutten-Peleg}). The easiest way is to use randomization.
Suppose $c'$ is the parent cluster of non-root cluster $c$. We sample edge
$e=(c,\hat{c})\in \cT$ into the edge set $\cR$ with independent probability
$q_e:=\min\set{1,|c|/\sqrt{n}}$. Then, w.h.p.\ the forest $T\setminus \psi(\cR)$
consists $\tilde{O}(\sqrt{n})$ trees of depth $\tilde{O}(\sqrt{n})$.
\begin{lemma}\label{lemma:smallstrongdiameter}
  Let $\cT$ be a rooted spanning tree of a cluster (multi-)graph $\cG$ and let
  $\cR$ be a subset of the edges chosen at random as described above, and assume
  that the spanning tree of each cluster has depth at most $d$. W.h.p.,
  the forest $T\setminus \psi(\cR)$ induced by the edges $\cT\setminus\cR$ and
  the cluster spanning trees consists of $O(\sqrt{n})$ rooted trees of
  depth $d+O(\sqrt{n}\log n)$.
\end{lemma}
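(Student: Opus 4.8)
The statement has two parts: (i) w.h.p.\ only $O(\sqrt n)$ edges of $\cT$ are sampled into $\cR$, so the forest has $O(\sqrt n)$ components; (ii) w.h.p.\ every component has depth $d + O(\sqrt n \log n)$. I would prove (i) by a direct expectation calculation and a Chernoff bound, and (ii) by a union bound over start nodes of a "walk up the tree until you hit a sampled edge" argument, where the key point is that a cluster $c$ of size $|c|$ is sampled with probability $\min\{1,|c|/\sqrt n\}$, so large clusters are cut with high probability and hence can only contribute a bounded amount of depth before the next cut.

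For part (i): the number of sampled cluster edges is $|\cR| = \sum_{e=(c,\hat c)\in\cT} X_e$ where $X_e$ is an independent Bernoulli with parameter $q_e = \min\{1,|c|/\sqrt n\}$. Then $\E[|\cR|] \le \sum_{c\in\cV} |c|/\sqrt n = n/\sqrt n = \sqrt n$, since the clusters partition $V$. Since the $X_e$ are independent, a Chernoff bound gives $|\cR| = O(\sqrt n)$ w.h.p.\ (say, at most $c'\sqrt n\log n$, or even $O(\sqrt n)$ if $\E[|\cR|]$ is not too small; to be safe I would state $O(\sqrt n \log n)$, which suffices, and is also what the running-time bound in the surrounding text tolerates). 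Removing $|\cR|$ cluster-tree edges from $\cT$ leaves at most $|\cR|+1 = O(\sqrt n)$ components in the contracted tree, hence at most that many components in $T\setminus\psi(\cR)$.

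For part (ii): fix a network node $v$, lying in cluster $c_0$, and walk up $T$ toward the root. Within $c_0$ this contributes at most $d$ hops to reach the leader $\ell_{c_0}$. From there we walk up $\cT$ through clusters $c_0, c_1, c_2,\dots$; the walk stops (for the purpose of bounding the depth of $v$'s component) at the first $c_i$ whose parent edge is in $\cR$, and the total depth contributed is at most $\sum_{i} (\text{depth of }c_i) \le \sum_i d$ plus the $d$ inside $c_0$, so it suffices to bound the number of clusters traversed before hitting a sampled edge, call it $L$, and conclude depth $\le (L+1)d$... but this only gives $O(\sqrt n \cdot d)$, not $d + O(\sqrt n\log n)$. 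So the right decomposition of the bound is: the depth of $v$'s component is at most $d$ (the part inside the bottom cluster, which need not be cut) plus $\sum_{i=0}^{L-1}(\text{depth of }T_{c_i})$ where now I must argue $\sum_{i}(\text{depth of }T_{c_i}) = O(\sqrt n\log n)$ w.h.p. Here is the point: a cluster $c$ with $|c| \ge \sqrt n$ is sampled with probability $1$, so the walk stops at it immediately — meaning every cluster strictly below the stopping cluster has size $< \sqrt n$, hence depth $< \sqrt n$; and moreover the number of such clusters traversed is geometrically controlled because a cluster of size $k$ survives (is not cut) with probability $1 - k/\sqrt n$. Summing: $\E\big[\sum_i \text{depth}(T_{c_i})\big] \le \sum_i \E[\text{depth}(T_{c_i})\cdot\mathbf 1(\text{reach }c_i)] \le \sum_i k_i \cdot \prod_{i'<i}(1-k_{i'}/\sqrt n)$, and since $\text{depth}(T_{c_i}) \le |c_i| = k_i$, this telescopes against the survival probabilities to $O(\sqrt n)$. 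A standard martingale/Azuma or a crude union bound over the (at most $n$) choices of $v$ upgrades this to the claimed $d + O(\sqrt n\log n)$ w.h.p.

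\textbf{Main obstacle.} The delicate part is part (ii): getting the bound $d + O(\sqrt n\log n)$ rather than the naive $O(d\sqrt n)$. This requires exploiting that the sampling probability $q_e = \min\{1,|c|/\sqrt n\}$ is tuned so that the \emph{sum of cluster sizes} along a path between consecutive cut edges is $O(\sqrt n)$ w.h.p.\ (because a cluster of size $k$ is cut with probability $\ge k/\sqrt n$, the expected total size before the first cut is a telescoping sum bounded by $\sqrt n$), combined with the fact that the depth of each cluster tree is at most its size. One must be careful that the events "$e\in\cR$" for edges along one root path are independent, which they are, so the geometric-type tail bound applies cleanly; the only real subtlety is handling the $d$ additive term correctly — it appears once, for the bottom cluster containing $v$, whose parent edge may or may not be in $\cR$ — and then taking a union bound over all $n$ nodes $v$ to conclude the depth bound holds for \emph{every} component simultaneously.
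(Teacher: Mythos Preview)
Your approach is correct and matches the paper's: part~(i) is exactly $\E[|\cR|]\le\sqrt n$ plus Chernoff, and part~(ii) in the paper fixes a root-to-descendant path $p$, bounds its length by $d+|E_p|+\sum_{e\in E_p}|c(e)|$, and observes that if $\sum_{e}|c(e)|>c\sqrt n\log n$ then the expected number of sampled edges on $p$ exceeds $c\log n$, so by Chernoff the path is cut w.h.p.---this is precisely your ``geometric-type tail bound'' $\prod(1-k_i/\sqrt n)\le e^{-\sum k_i/\sqrt n}$---followed by a union bound over the $O(n^2)$ such paths. Two cosmetic cleanups: the single $d$ in the bound belongs to the \emph{top} cluster of the surviving path (the only one entered at its root; the remaining clusters are bounded by their \emph{sizes}, not by $d$), and Azuma is unnecessary since the product bound above already gives the w.h.p.\ tail for each path before the union bound.
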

\begin{proof}
   Clearly, the number of trees in the forest induced by
   $\cE_\cT\setminus \cR$ is equal to $|\cR|+1$. The expected value
   for $|\cR|$ is given by the sum of the probabilities $q_e$ and thus
   $\E[|\cR|]\leq \sqrt{n}$. A standard Chernoff bound implies that 
   $|\cR|$ does not exceed $\sqrt{n}$ by more than a
   constant factor with high probability.
   
   To bound the depth of each (rooted) tree in $T\setminus \psi(\cR)$, consider
   a cluster $c\in \cT$ and a path $p$ from the leader $\{r\}:=\cL\cap c$ to
   some node in the subtree $T_r$ of $T$ rooted at $r$. The depth of $T_r\cap c$
   is bounded by $d$. Denote by $E_p$ the set of edges of $p$ that correspond to
   edges in $\cT$, i.e., each $e\in E_p$ satisfies that $e=\psi(e')$ for some
   $e'\in \cT$. Denote by $c(e)$ the child cluster of $e$, i.e., the endpoint
   further away from the root of $T$. By construction, $c(e)\in e'$ is also the
   cluster further away from the root of $\cT$ (for the $e'\in \cT$ with
   $\psi(e')=e$). Therefore, the length of $p$ is bounded by
   \begin{equation*}
   d+|E_p|+\sum_{e\in E_p} |c(e)|\,
   \end{equation*}
   i.e., the sum of the number of its edges in $c$, the number of edges between
   clusters $|E_p|$, and the number of edges in each traversed cluster. For
   $e\in E_p$, we have that $q_e=\min\{1,|c(e)|/\sqrt{n}\}$, yielding that
   \begin{equation*}
   \E[|\cR \cap \psi^{-1}(E_p)|]\leq \sum_{e\in E_p}\frac{|c(e)|}{\sqrt{n}}.
   \end{equation*}
   Applying Chernoff's bound shows that $\E[|\cR \cap \psi^{-1}(E_p)|]\in
   O(\log n)$ or $|\cR \cap \psi^{-1}(E_p)|>0$ w.h.p. The former implies that
   (i) $\sum_{e\in E_p}|c(e)|\in O(\sqrt{n}\log n)$ and (ii) $|E_p|\in
   O(\sqrt{n}\log n)$, as always $|c(e)|\geq 1$; it follows that the length of
   $p$ is bounded by $d+O(\sqrt{n}\log n)$ as claimed. The latter implies that,
   w.h.p., $p$ is not contained in $T\setminus \psi(\cR)$. As the number of
   simple paths in a tree is bounded by $O(n^2)$, applying the union bound
   completes the proof.
\end{proof}

For simplicity, in the following we assume that the high probability statements
of the above lemma hold with certainty; the final statements then follow by
applying the union bound.

Throughout the construction, we will maintain that edges of $\cR$ are never
retained. As there will be $o(\log n)$ levels of recursion, by inductive use of
the above lemma, it follows that clusters always have spanning trees of depth
$\tilde{O}(\sqrt{n})$. Exploiting this property together with the small number
of connected components of $\cT\setminus \cR$, we obtain a fast routine for the
general case.
\begin{lemma}\label{lemma:decomposition_aggregation}
Within $\tilde{O}(\sqrt{n}+D)$ rounds, for each edge $e=(c,\hat{c})\in \cT$, $c$ can
determine $|\vf'(e)|$.
\end{lemma}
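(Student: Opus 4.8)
The plan is to combine the small-depth aggregation of \lemmaref{lemma:naive_aggregation} with the decomposition of \lemmaref{lemma:smallstrongdiameter}, following the ``contract, solve at a global root, push back'' paradigm of Kutten and Peleg~\cite{Kutten-Peleg}. First I would sample $\cR$ as in \lemmaref{lemma:smallstrongdiameter}; invoking the invariant that $\cR$-edges are never retained (so every cluster tree has depth $\tilde{O}(\sqrt{n})$), this cuts $T$ into $O(\sqrt{n})$ \emph{fragments}, each a connected subtree of both $\cT$ and $T$ of depth $\tilde{O}(\sqrt{n})$ with a well-defined root cluster. Contracting fragments in $\cT$ yields a tree $\cT'$ on $O(\sqrt{n})$ nodes whose edge set is exactly $\cR$. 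Each cluster learns its fragment identifier by a flood within the fragment, and the entire abstract description of $\cT'$ (its vertices, its $\cR$-edges with their orientation in $\cT$ and their cluster endpoints) is gathered at the root of a global BFS tree and broadcast; since $\cT'$ has size $O(\sqrt{n})$ this costs $\tilde{O}(\sqrt{n}+D)$ rounds and makes $\cT'$ locally known everywhere.

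The reformulation I would use is $|\vf'(c,\hat{c})|=\sum_{c'\in\cT_c} z_{c'}$, where $z_{c'}:=c'_+-c'_-$, with $c'_+$ the total capacity of $\cE$-edges from $c'$ to clusters \emph{not} in $\cT_{c'}$ and $c'_-$ the total capacity of $\cE$-edges from $c'$ to strict descendants of $c'$ in $\cT$. This is exactly the accounting illustrated in \figureref{fig:plusMinus}, and a short case check --- by whether the two endpoints of an $\cE$-edge are $\cT$-comparable and on which side of the cut they fall --- shows that each $\cE$-edge of capacity $w$ contributes $w$ to $\sum_{c'\in\cT_c} z_{c'}$ precisely when it crosses the cut of $(c,\hat{c})$. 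Hence it suffices to (P1) compute $z_{c'}$ for every cluster, and (P2) compute all subtree sums $\sum_{c'\in\cT_c} z_{c'}$.

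For (P1), an $\cE$-edge $\{c',c''\}_{uv}$ contributes $-w$ to $z_{c'}$ iff $c''$ is a strict descendant of $c'$ and $+w$ otherwise. If $c'$ and $c''$ lie in a common fragment, the endpoint in $c'$ learns the answer from a flood of within-fragment ancestor lists (each of length $\tilde{O}(\sqrt{n})$) down the fragment tree. If $c'\in F'$ and $c''\in F''$ with $F'\neq F''$, that endpoint learns $F''$ from its neighbor in one round and, from the globally known $\cT'$, whether $F''$ lies strictly below $F'$ and, if so, the unique downward $\cR$-edge $e^*$ incident to $F'$ on the $\cT'$-path toward $F''$; then $c''\in\cT_{c'}$ iff $c'$ is an ancestor within $F'$ of the $F'$-side endpoint $p(e^*)$ of $e^*$. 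To make that test available, each such $p(e^*)$ propagates a ``$c'$ is my ancestor'' bit up its fragment; there are only $O(\sqrt{n})$ such $\cR$-edges, so all these floods pipeline in $\tilde{O}(\sqrt{n})$ rounds. Each cluster then aggregates the signed contributions over its incident $\cE$-edges along its cluster tree to obtain $z_{c'}$.

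For (P2), an edge $(c,\hat{c})\in\cR$ has $c$ equal to the root of its fragment, so $\cT_c$ is the union of all fragments in $\cT'_{F(c)}$ and $|\vf'(c,\hat{c})|=\sum_{F''\in\cT'_{F(c)}} S_{F''}$ with $S_F:=\sum_{c'\in F} z_{c'}$; each fragment aggregates $S_F$, reports it to the BFS root, which computes all these $\cT'$-subtree sums and broadcasts them, so every cluster learns $U_{e^*}:=|\vf'(e^*)|$ for all $e^*\in\cR$. An edge $(c,\hat{c})\in\cT\setminus\cR$ lies inside a fragment $F$, and $\cT_c$ consists of the $F$-descendants of $c$ together with all fragments hanging below them, so $|\vf'(c,\hat{c})|=\sum_{c'\preceq_F c} z_{c'}+\sum_{e^*:\,p(e^*)\preceq_F c} U_{e^*}$ --- a single subtree sum within the depth-$\tilde{O}(\sqrt{n})$ tree $T_F$ of the node values $z_{c'}$ augmented at each $p(e^*)$ by $U_{e^*}$, computed by one pipelined convergecast followed by a broadcast inside each cluster. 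Every phase runs in $\tilde{O}(\sqrt{n}+D)$ rounds, and randomization enters only through \lemmaref{lemma:smallstrongdiameter}, so the bound holds w.h.p. The step I expect to be the main obstacle is (P1) for cross-fragment edges: deciding $\cT$-ancestry between clusters of different fragments without ever shipping long ancestor lists over long distances, which is exactly what forces the interplay of the globally known $\cT'$, purely in-fragment ancestor information, and the auxiliary ancestor bits of the clusters $p(e^*)$.
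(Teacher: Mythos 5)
Your high-level plan (sample $\cR$, contract fragments into a globally known tree $\cT'$, solve component-level contributions via the BFS root, solve within-fragment contributions by pipelined aggregation on the depth-$\tilde{O}(\sqrt n)$ fragment trees) is exactly the paper's strategy. However, the central algebraic identity on which you hang everything --- $|\vf'(c,\hat{c})|=\sum_{c'\in\cT_c}z_{c'}$ with the \emph{intrinsic} values $z_{c'}=c'_+-c'_-$ --- is false, and the ``short case check'' you appeal to does not go through.

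The case you dropped is: both endpoints $c_1,c_2$ of an $\cE$-edge lie in $\cT_c$ and are $\cT$-incomparable (neither is an ancestor of the other). Then $c_2\notin\cT_{c_1}$ gives $+w$ to $z_{c_1}$ and $c_1\notin\cT_{c_2}$ gives $+w$ to $z_{c_2}$, so the edge contributes $+2w$ to $\sum_{c'\in\cT_c}z_{c'}$, yet it does not cross the cut of $(c,\hat{c})$ and should contribute $0$. Concretely: root $\hat{r}$ with child $r$, which has children $a,b$, plus a non-tree $\cE$-edge $\{a,b\}$ of capacity $w$; your formula yields $|\vf'(r,\hat{r})|=\Cp(r,\hat{r})+2w$ while the true value is $\Cp(r,\hat{r})$. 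Since $\cG$ is a sparsified multigraph and $\cT$ a low-stretch spanning tree, non-tree edges between incomparable clusters are the norm, not the exception, so the error is not small. The quantity you want per cluster cannot be made $c$-independent; the paper instead works with the $c$-dependent $\Cp_c(u)$ from \lemmaref{lemma:naive_aggregation} (sum of capacities of edges at $u$ whose other endpoint's cluster is \emph{not} in $\cT_c$) and splits $|\vf'(c,\hat{c})|=|\vf_1'|+|\vf_2'|-|\vf_3'|$ along the fragment boundary, where $|\vf_1'|$ is constant on each fragment (handled via the BFS tree) and $|\vf_2'|-|\vf_3'|$ is a within-fragment subtree aggregation of a quantity that only depends on $c$ through the ancestor of $u$'s cluster inside the fragment --- which is exactly the amount of $c$-dependence that the pipelined convergecast can absorb. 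Your (P1)/(P2) machinery (in-fragment ancestor lists, ancestor bits for the $O(\sqrt n)$ cut clusters $p(e^*)$, BFS for $\cT'$-subtree sums) is the right toolset, but it needs to be driven by the correct $c$-dependent accounting rather than a single per-cluster scalar $z_{c'}$.
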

\begin{proof}
Denote by $\cC$ the connected components of $\cT\setminus \cR$. For $c\in \cT$,
denote by $C\in \cC$ its connected component. We rewrite the (absolute value of)
the multicommodity flow
\begin{equation*}
|\vf'(c,\hat{c})|=
\sum_{\substack{c_1\in \cT_c\\ c_2\notin \cT_c}}
\sum_{\{c_1,c_2\}_{uv}\in \cE}\Cp(\{c_1,c_2\}_{uv})
=|\vf_1'(c,\hat{c})|+|\vf_2'(c,\hat{c})|-|\vf_3'(c,\hat{c})|,
\end{equation*}
where
\begin{align*}
|\vf_1'(c,\hat{c})|&:=
\sum_{\substack{c_1\in \cT_c\setminus C\\ c_2\notin \cT_c\setminus C}}
\sum_{\{c_1,c_2\}_{uv}\in \cE}\Cp(\{c_1,c_2\}_{uv}),\\
|\vf_2'(c,\hat{c})|&:=
\sum_{\substack{c_1\in \cT_c\cap C\\ c_2\notin\cT_c}}
\sum_{\{c_1,c_2\}_{uv}\in \cE}\Cp(\{c_1,c_2\}_{uv}),\mbox{ and}\\\\
|\vf_3'(c,\hat{c})|&:=
\sum_{\substack{c_1\in \cT_c\cap C\\ c_2\in \cT_c\setminus C}}
\sum_{\{c_1,c_2\}_{uv}\in \cE}\Cp(\{c_1,c_2\}_{uv}).
\end{align*}

Note that $|\vf_1'(c,\hat{c})|$ does only depend on the component of $c$, i.e.,
we need to determine and make known only $|\cC|\in O(\sqrt{n})$ values to cover
this term. For the other terms, we will reduce the problem to an aggregation
on the spanning tree of $C$ in the vain of \lemmaref{lemma:naive_aggregation}.

Concerning $|\vf_1'(c,\hat{c})|$, we employ the following routine.
\begin{compactenum}
\item Using its spanning tree, each component $C\in \cC$ determines a unique
identifier (say, the smallest cluster identifier) and makes it known to all its
nodes.
\item For each $\{c_1,c_2\}_{uv}\in \cE$, $u$ and $v$ exchange their component
identifiers.
\item The list of component identifiers and edges $(C,\hat{C})$ for each
$(c,\hat{c})\in \cR$ is made known to all nodes. This enables each node to
locally compute the tree resulting from contracing the components $C\in \cC$ in
$\cT$.
\item For each $C\in \cC$, fix an arbitrary $c\in C$. Each node $u\in
\cT_c\setminus C$ locally computes
\begin{equation*}
\Cp_C(u):=\sum_{\substack{\{c_1,c_2\}_{uv}\in \cE\\
c_2\notin \cT_c\setminus C}}\Cp(\{c_1,c_2\}_{uv});
\end{equation*}
we set $\Cp_c(u):=0$ for all $u\notin \cT_c\setminus C$ (nodes can determine
whether they are in $\cT_c\setminus C$ based on the information collected in the
previous two steps).
\item For each $C\in \cC$, make $\sum_{u\in V}\Cp_C(u)$ known to all nodes via
a BFS tree of $G$. For all $c\in C$, we have that
$|\vf_1'(c,\hat{c})|=\sum_{u\in V}\Cp_C(u)$.
\end{compactenum}
As discussed earlier, components' spanning trees have depth
$\tilde{O}(\sqrt{n})$ and $|\cC|\in O(\sqrt{n})$. Hence, Step 1 takes
$\tilde{O}(\sqrt{n})$ rounds and Steps 3 and 5 take $O(\sqrt{n}+D)$ rounds. Step
$2$ requires only one round of communication and Step 4 is local. Overall, the
routine requires $\tilde{O}(\sqrt{n}+D)$ rounds.

To determine $|\vf_2'(c,\hat{c})|$ and $|\vf_3'(c,\hat{c})|$ for each $c$, we
proceed similarly to \lemmaref{lemma:smallstrongdiameter}.
\begin{compactenum}
\item For each $C\in \cC$ and each $c\in C$, all nodes in $c$ learn the list of
ancestors of $c$ that are in $C$ (using the spanning tree of $C$ in $G$).
\item For each $\{c_1,c_2\}_{uv}\in \cE$, $u$ and $v$ exchange their
component identifiers, as well as the ancestor lists determined in the previous
step.
\item The list of component identifiers and edges $(C,\hat{C})$ for each
$(c,\hat{c})\in \cR$ is made known to all nodes.
\item For each $C\in \cC$, $c\in C$, and $u\in \cT_c\cap C$, $u$ locally
computes
\begin{equation*}
\Cp_c(u):= \sum_{\substack{\{c_1,c_2\}_{uv}\in \cE\\
v\in c_2\notin \cT_c}}\Cp(\{c_1,c_2\}_{uv})
-\sum_{\substack{\{c_1,c_2\}_{uv}\in \cE\\
v\in c_2\in \cT_c\setminus C}}\Cp(\{c_1,c_2\}_{uv}).\footnote{If $v\in C$, $u$ can
decide whether $v\in \cT_c$ based on the ancestor lists. If $v\notin C$, $u$
can decide whether $v\in \cT_c$ based on $v$'s component identifier and the
information collected in Step 3.}
\end{equation*}
\item For each edge $e=(c,\hat{c})\in \cT$, we aggregate
$\sum_{u\in T_c\cap C}\Cp_c(u)$ on $T_c\cap C$, where $T_c$ is the subtree
of $T$ (the spanning tree of $G$) corresponding to $\cT_c$.
\end{compactenum}
Note that, by definition of $\Cp_c(u)$, we have that
\begin{equation*}
\sum_{u\in T_c\cap
C}\Cp_c(u)=|\vf_2'(c,\hat{c})|-|\vf_3'(c,\hat{c})|.
\end{equation*}
Hence, it remains to analyze the running time of this second subroutine. Again,
using that components' spanning trees have depth $\tilde{O}(\sqrt{n})$ and that
$|\cC|\in O(\sqrt{n})$, we can conclude that Steps 1, 2, and 5 take
$\tilde{O}(\sqrt{n})$ rounds, while Step 3 takes $O(\sqrt{n}+D)$ rounds. As Step
4 requires local computation only, the resulting running time is
$\tilde{O}(\sqrt{n})$ rounds. Overall, we conclude that $|\vf'(c,\hat{c})|$ can
be computed for each $c$ within $\tilde{O}(\sqrt{n}+D)$ rounds, by running each
of the two subroutines and summing up their outputs.
\end{proof}

\subsection{\texorpdfstring{Approximating \boldmath$\cG$ by a Distribution
over Simpler Graphs}{Approximating by a Simpler Distribution}}
\label{sec:computedistribution}

Using the techniques of \cite{raecke08} and the above construction of
low average stretch spanning trees, it is possible to design a
distributed algorithm to compute a distribution of such spanning trees
which approximates the cut structure of the underlying network graph
within a factor $2^{O(\sqrt{\log n \log\log n})}$ (i.e., in the order of the
average stretch of the computed spanning trees). However, when doing
this, the number of spanning trees we need to compute can be linear in
the size of $\cG$. We follow the same general idea as Sherman, wo applied the
construction by Madry~\cite{Madry10} recursively to decrease the \emph{step}
complexity, to avoid this sequential bottleneck and achieve a small \emph{time}
complexity in the distributed setting.

For each edge $e\in \cT$, we define $\rload_{\cT}(e):=\Cp_{\cT}(e)/\Cp(e)\geq 1$
to be the relative load of $e$ (edges $e\in\cE\setminus \cT$ have
$\rload_{\cT}(e)=0$). The construction of \cite{raecke08} builds up a potential
for each edge $e$ of $\cG$, where with each new tree added to the distribution,
the potential of $e$ grows by a term proportional to
$\rload_{\cT}(e)/\max_{e'\in \cE} \{\rload_{\cT}(e')\}$. The potential of each
edge is bounded by $\alpha=n^{o(1)}$ and hence with every additional spanning
tree, we are guaranteed to make progress for all edges $e\in \cE$ with
$\rload_{\cT}(e)$ close to $\max_{e'\in \cE} \{\rload_{\cT}(e')\}$. In the worst
case, this can just be a single edge for each spanning tree $\cT$. The key idea
of Madry \cite{Madry10} is to augment the tree $\cT$ with additional edges in
order to reduce the maximum relative load so that in the new graph, a large
number of edges have a relative load close to the maximum one.

Basically, we can reach a large number of edges with relative load
close to the maximum relative load by repeatedly deleting the edge
with largest relative load until a large number of the remaining edges
has a relative load that is within a constant factor of the remaining
maximum relative load. When deleting some edges of $\cT$, one has to
add back some of the original edges of $\cG$ in order to maintain the
property that $\cG$ is embeddable into the resulting graph. Formally, let
$\cF\subseteq \cT$ be a subset of the spanning tree edges. The
edge set $\cT\setminus \cF$ defines a spanning forest of $\cG$
consisting of $|\cF|+1$ components. We define a subgraph
$\cH(\cT,\cF)$ of $\cG$ as follows. The node set of $\cH(\cT,\cF)$ is
$\cV$. Further, $\cH(\cT,\cF)$ contains all edges in $\cT\setminus
\cF$ and it contains all edges $\set{c,c'}_{uv}\in \cE$ of $\cG$ for which
$c$ and $c'$ are in different components in the forest induced by the
edges in $\cT\setminus \cF$. Let $\cE_\cH$ be the set of edges of
$\cH$. We set the capacities $\Cp_{\cH}(e)$ of edges $e\in \cE_\cH$ to
be $\Cp_{\cH}(e):=\Cp_{\cT}(e)$ if $e\in \cT\setminus \cF$ and
$\Cp_{\cH}(e):=\Cp(e)$ otherwise. Note that this guarantees that $\cG$ is
$1$-embeddable into $\cH$. For the following discussion, we define
$\mathbb{H}[j]$ to be the set of graphs $\cH(\cT,\cF)$ for a spanning
tree $\cT$ of $\cG$ and a set of edges $\cF$ of $\cT$ of size
$|\cF|\leq j$.

Assume that the weighted average stretch of the spanning tree $\cT$ as given by
Eq.~\eqref{eq:weightedavgstretch} is upper bounded by $\alpha$. Also recall that
we assume that all capacities of $G$ are integers that are polynomially bounded
in the number of nodes $n$. As throughout our construction, each edge capacity
always approximately corresponds to the capacity of some cut in $G$, it is not
hard to guarantee that all capacities of $\cG$ are integers between $1$ and
$\poly(n)$. Given a spanning tree $\cT$ of $\cG$, let $R:=\max_{e\in \cT}
\{\rload_\cT(e)\}$ be the largest relative load of all edges of $\cT$. In order
to determine the set of edges $\cF$, we start by partitioning the edges in $\cT$
into $i_{\max}= O(\log n)$ classes $\cF_1,\dots,\cF_{i_{\max}}$, where class
$\cF_i$ contains all edges with relative load in $(R/2^i,R/2^{i-1}]$. Now, for
any $j_0\leq |\cT|$, there exists an edge class $\cF_i$ such that
$\big|\bigcup_{i'<i} \cF_{i'}\big|\leq j_0$ and $|\cF_i| \geq j_0/i_{\max}
=\Omega(j_0/\log n)$; otherwise, $|\cT|=\big|\bigcup_i \cF_i\big|< j_0\leq
|\cT|$, a contradiction. We define $\cF':=\bigcup_{i'<i} \cF_{i'}$.

In \cite{Madry10arxiv}, this set of edges is used to construct the graph
$\cH(\cT,\cF')$. For the distributed computation, it will be useful to have a
graph $\cH(\cT,\cF)$ in which all the trees of the forest induced by
$\cE_\cT\setminus \cF$ have small diameter. We therefore rely on the same
technique as for computing the capacities of $\cT$ and remove a few random
additional edges of $\cT$. In fact, we can simply use the same subset of edges
$\cR\subseteq \cT$ that has been determined and used before, prior to
\lemmaref{lemma:smallstrongdiameter}. We define $\cF:=\cF' \cup \cR$ and use the
graph $\cH=(\cV,\cE_\cH,\Cp_\cH):=\cH(\cT,\cF)$. Since all the edges of $\cT$
with $\rload_\cT(E)>R/2^{i-1}$ are removed, all edges of $\cH$ have relative
load at most $R/2^{i-1}$. Further, all the $\Omega(j/\log n)$ edges of $\cF_i$
have relative load larger than $R/2^i$. Based on Theorem 5.2 and Corollary 5.6
of \cite{Madry10arxiv} and on \theoremref{theorem:spanning}, we can show the
following lemma.

\todo{I believe that we need to also apply
\lemmaref{lemma:decomposition_aggregation} here; apparently it's not
officially used anywhere so far.}
\begin{lemma}\label{lemma:Hdistribution}
  Given are a distributed cluster (multi-)graph $\cG=(\cV,\cE,\Cp_\cG)$
  consisting of $|\cV|=N$ clusters and $|\cE|=N\polylog(n)$ edges and
  a parameter $j\geq 1$ such that that $j=\omega(\sqrt{n}\log
  n)$. There is a distributed algorithm to compute an $\big(2^{O(\sqrt{\log N
      \log\log N})},\mathbb{H}[j]\big)$-distribution of $\cG$ on $
  2^{O(\sqrt{\log N \log\log N})}\cdot N/j$ graphs, which runs in the \Congest
  model on the underlying network graph $G$ in $(D+\sqrt{n})\cdot n^{o(1)}\cdot
  N/j$ rounds.
\end{lemma}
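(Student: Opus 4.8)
The plan is to follow Madry's iterative construction verbatim and to show that each of its iterations can be executed distributedly in $(D+\sqrt{n})n^{o(1)}$ rounds on the cluster graph; since, as we recall below, there are at most $2^{O(\sqrt{\log N\log\log N})}\cdot N/j$ iterations, this gives the claimed bound. Recall that one iteration, given the current length function $\ell$, proceeds by: (i) computing a spanning tree $\cT$ of $\cG$ satisfying the weighted average stretch bound~\eqref{eq:weightedavgstretch} for $\alpha=2^{O(\sqrt{\log N\log\log N})}$; (ii) determining the relative loads $\rload_\cT(e)=\Cp_\cT(e)/\Cp(e)$ of the tree edges, where $\Cp_\cT(e)=|\vf'(e)|$ is the unique multicommodity flow value obtained by embedding $\cG$ into $\cT$; (iii) choosing a prefix $\cF'=\bigcup_{i'<i}\cF_{i'}$ of the load classes with $|\cF'|\le j_0$ and $|\cF_i|=\Omega(j_0/\log n)$, forming $\cH=\cH(\cT,\cF)$ with $\cF=\cF'\cup\cR$ for the random edge set $\cR$, and adding $(\lambda_i,\cH)$ to the distribution with the appropriate weight $\lambda_i$; and (iv) updating $\ell$ by the multiplicative rule of~\cite{raecke08,Madry10}. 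Theorem~5.2 and Corollary~5.6 of~\cite{Madry10arxiv}, combined with the stretch guarantee of \theoremref{theorem:spanning}, show that accumulating these pairs yields an $\big(2^{O(\sqrt{\log N\log\log N})},\mathbb{H}[j]\big)$-decomposition of $\cG$ after $\tilde{O}(\alpha\cdot|\cE|/j)$ iterations (the $\tilde{O}$ hiding $\polylog(n)$ factors); since $|\cE|=N\polylog(n)$ and, because $N\ge n^{1/2+o(1)}$, every polylogarithmic factor is absorbed into $2^{O(\sqrt{\log N\log\log N})}$, this number of iterations, and hence of graphs in the decomposition, is at most $2^{O(\sqrt{\log N\log\log N})}\cdot N/j$.

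It remains to implement one iteration. For (i) I would first apply the standard reduction of~\cite{AKPW95,Madry10} from weighted average stretch to unweighted average stretch on a multigraph with at most $2|\cE|$ edges and edge lengths in $2^{n^{o(1)}}$, and then invoke the cluster-graph version of \theoremref{theorem:spanning}; this costs $(\sqrt n+D)n^{o(1)}$ rounds and leaves each tree edge, together with its orientation, known to the endpoints of the corresponding physical edge of $G$. Next I would sample the edge set $\cR\subseteq\cT$ exactly as described just before \lemmaref{lemma:smallstrongdiameter}, so that, using the inductive invariant that clusters carry spanning trees of depth $\tilde{O}(\sqrt n)$, the forest $T\setminus\psi(\cR)$ consists of $O(\sqrt n)$ components of depth $\tilde{O}(\sqrt n)$. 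For (ii) I would apply \lemmaref{lemma:decomposition_aggregation} to compute $|\vf'(e)|$, hence $\rload_\cT(e)$, for every $e\in\cT$ in $\tilde{O}(\sqrt n+D)$ rounds, again with the value ending up at the physical edge. For (iii) I would aggregate $R:=\max_{e\in\cT}\rload_\cT(e)$ up a global BFS tree of $G$ and broadcast it, so that every tree edge learns its class $\cF_{i'}$; a binary search interleaved with convergecasts of class sizes along the BFS tree then identifies, in $\tilde{O}(D)$ rounds, the least $i$ with $\big|\bigcup_{i'<i}\cF_{i'}\big|\le j_0:=j-c\sqrt n$, after which one sets $\cF':=\bigcup_{i'<i}\cF_{i'}$ and $\cF:=\cF'\cup\cR$; since $j=\omega(\sqrt n\log n)$ we have $j_0=\Theta(j)$, hence $|\cF|\le j$ and $|\cF_i|=\Omega(j/\log n)$. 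To build $\cH=\cH(\cT,\cF)$, each component of $\cT\setminus\cF$ elects a unique identifier and broadcasts it along its depth-$\tilde{O}(\sqrt n)$ spanning tree; the endpoints of each $\cG$-edge then exchange component identifiers to decide membership in $\cE_\cH$ and set $\Cp_\cH(e)$ as prescribed, all in $\tilde{O}(\sqrt n)$ rounds, and $\cG$ is $1$-embeddable into $\cH$ by the choice of $\Cp_\cH$. Finally, step (iv) is purely local per edge (each edge knows $\rload_\cT(e)$ and $R$), plus an $O(D)$-round broadcast of the scalar $\lambda_i$. Summing, one iteration costs $(\sqrt n+D)n^{o(1)}$ rounds, so the total is $(D+\sqrt n)n^{o(1)}\cdot N/j$.

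The remaining point is to check that our deviations from Madry's scheme — admitting a multigraph core, deleting the extra random edges $\cR$ (and, in later steps, degree-$1$ nodes from the portal set), and representing the core only implicitly as a cluster graph — do not degrade the decomposition. This is exactly what the discussion in \sectionref{sec:approximator} (and~\cite{Madry10}) establishes: expanding each multi-edge into a length-$2$ path recovers a simple graph with the same cut structure; removing \emph{any} subset of $\cT$-edges and re-inserting the crossing edges of $\cG$ can only improve cut approximation; and $|\cR|=O(\sqrt n)=o(j/\log n)$, so the per-iteration progress stays $\Omega(j/\log n)$ and the iteration count is unchanged up to constants. I expect the main obstacle to be step (ii): a naïve convergecast computing $\Cp_\cT$ runs in time proportional to the depth of the induced spanning tree of $G$, which can be $\Omega(n)$ regardless of $D$ (\lemmaref{lemma:naive_aggregation}); circumventing this is precisely the content of \lemmaref{lemma:decomposition_aggregation}. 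The only other subtlety is verifying that the weighted-to-unweighted stretch reduction feeds \theoremref{theorem:spanning} a multigraph with admissible lengths ($2^{n^{o(1)}}$, using that the $\ell(e)$ stay polynomially bounded throughout Madry's update rule) and only $N\polylog(n)$ edges.
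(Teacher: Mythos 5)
Your proposal takes essentially the same approach as the paper: reduce to Madry's iterative construction, bound the iteration count by $2^{O(\sqrt{\log N\log\log N})}\cdot N/j$ via Theorem~5.2/Corollary~5.6 of~\cite{Madry10arxiv} together with the stretch guarantee of \theoremref{theorem:spanning}, verify that subtracting the random set $\cR$ (with $|\cR|=O(\sqrt{n})=o(j/\log n)$) does not hurt the per-iteration progress, and implement each iteration in $(D+\sqrt{n})n^{o(1)}$ rounds. In fact you are slightly more complete than the paper's own proof, which records the LSST cost only abstractly as $T_{\mathrm{tree}}$ and — per a TODO note in the source — never explicitly invokes \lemmaref{lemma:decomposition_aggregation} to compute the tree capacities $\Cp_\cT$ (i.e.\ the relative loads); your step~(ii) supplies exactly that. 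One small inaccuracy: the edge lengths in the multiplicative update rule are \emph{not} polynomially bounded — the paper observes they can grow to roughly $2^{O(\alpha)}=2^{n^{o(1)}}$; your conclusion that they fit the $2^{n^{o(1)}}$ budget of \theoremref{theorem:spanning} still stands, but the stated reason should be corrected.
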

\begin{proof}
  Let $\alpha=2^{O(\sqrt{\log n \log\log n})}$ be the average stretch
  guarantee of the spanning tree algorithm. It follows directly from
  Theorem 5.2 and Lemma 5.5 in \cite{Madry10arxiv} that we can compute
  an $(O(\alpha),\mathbb{H}[|\cE|\cdot \alpha\log(n)/s])$-decomposition of $\cG$
  on $s$ graphs in time $s\cdot T_{\mathrm{tree}}$ if the following conditions
  are satisfied:
  \begin{enumerate}[(1)]
  \item The time for computing one low average stretch spanning tree is
    upper bounded by $T_{\mathrm{tree}}$.
  \item Given $\cT$ and the set of edges $\cF$ as computed above, let
    $\rload_{\max}:=\max_{e\in \cT\setminus
      \cF}\{\rload_{\cT}(e)\}$. The number of edges of $\cH(\cT,\cF)$ with
    relative load at least $\rload_{\max}/2$ is $\Omega(|\cE|\alpha/s)$.
  \end{enumerate}
  As observed above, in $\cT$, all edges in the set $\cF_i$ have
  relative load between $\rload_{\max}/2$ and $\rload_{\max}$. When
  constructing $\cH(\cT,\cF)$, the relative load of edges in
  $\cT\setminus \cF$ does not change and thus, all nodes in
  $\cF_i\setminus \cF = \cF_i\setminus \cR$ have relative load between
  $\rload_{\max}/2$ and $\rload_{\max}$. Recall that $|\cF_i| =
  \Omega(j/\log n)$. By \lemmaref{lemma:smallstrongdiameter}, with
  high probability, we have $|\cR|=O(\sqrt{n})$.  Since we assumed
  that $j=\omega(\sqrt{n}\log n)$, we have $|\cF_i| = \omega(|\cR|)$
  and thus $|\cF|=\Omega(j/\log n)$. The second condition is now
  satisfied by choosing $s=\Theta(|\cE|\alpha\log(n)/j)=2^{O(\sqrt{\log N
      \log\log N})}\cdot N/j$. 

  Assuming that the time to compute a single low average stretch
  spanning tree can be upper bounded by
  $T_{\mathrm{tree}}=(D+\sqrt{n})\cdot n^{o(1)}$, the lemma now
  follows. By \theoremref{theorem:spanning}, this is guaranteed as
  long as all edge lengths are integers between $1$ and
  $2^{n^{o(1)}}$. Inspecting the construction in \cite{Madry10arxiv}
  and \cite{raecke08}, we can observe that the edge lengths cannot get
  larger than a value exponential $\alpha$. By rounding them to integers, we
  introduce an additional multiplicative error of factor $2$, which does not
  affect the asymptotic behavior. As $\alpha=2^{O(\sqrt{\log n\log\log n})}$,
  the claim of the lemma follows.
\end{proof}

\subsection{Transforming \texorpdfstring{\boldmath$\cH(\cT,\cF)$ into a
$j$-Tree}{the Simpler Family into j-Trees}}
\label{sec:TransTojTrees}

Given a graph $\cH(\cT,\cF)\in\mathbb{H}[j]$, it remains to transform
$\cH(\cT,\cF)$ into an $O(j)$-tree $\cJ$ such that the two graphs are
$O(1)$-embeddable into each other. In the following, we first describe
the construction and we formally prove that the resulting $O(j)$-tree
$\cJ$ and the given graph $\cH(\cT,\cF)$ are $O(1)$-embeddable into
each other. We then show how to efficiently construct $\cJ$ in a
distributed way.

Assume that we are given a spanning tree $\cT$ of $\cG$ and a graph
$\cH(\cT,\cF)$ which is constructed as described above. Consider the
forest induced by the edges in $\cT\setminus \cF$.


Let $P_1\subseteq \cV$ be the set of clusters of $\cT\setminus \cF$ which are
incident to one of the deleted tree edges in $\cF$. We call $P_1$ the
\emph{primary portals} of $\cT\setminus \cF$. Given $P_1$, we define the
skeleton $\cS_{\cT\setminus \cF}$ of $\cT\setminus \cF$ as follows.
$\cS_{\cT\setminus \cF}$ is obtained from $\cT$ by repeatedly deleting
non-portal clusters of degree $1$ until all remaining clusters are either in
$P_1$ or they have degree at least $2$. Denote by $P_2$ all clusters of degree
larger than $2$ that are not primary portals; $P_2$ are the \emph{secondary
portals} portals. The set of all portal clusters now is $P:=P_1\cup P_2$.
The skeleton $\cS_{\cT\setminus \cF}$ is thus a forest consisting of a set of
portals and paths connecting them, where all inner clusters of these paths
have degree $2$.

Given the skeleton $\cS_{\cT\setminus \cF}$, consider one of these paths $\cP$.
In the last step, we remove the edge with the smallest capacity from each
such $\cP$. In doing so, we split the forest into trees so that each tree
contains exactly one portal. It is straightforward to bound the number of
resulting trees in terms of $\cF$.

\begin{lemma}\label{lemma:nofPortals}
  Let $\cH(\cT,\cF)\in \mathbb{H}[j]$, i.e., $|\cF|\leq j$. Then, in the above
  construction, the total number of portal nodes is less than $4j$.
\end{lemma}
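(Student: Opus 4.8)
The plan is to bound the number of portal nodes $|P| = |P_1| + |P_2|$ by charging both primary and secondary portals against the deleted tree edges in $\cF$.

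First I would bound $|P_1|$. By definition, the primary portals are the clusters incident to an edge of $\cF$ in $\cT$. Since each edge has two endpoints, $|P_1| \leq 2|\cF| \leq 2j$. (In fact, because $\cT$ is a tree, one can often do slightly better, but $2j$ is all we need.)

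Next I would bound $|P_2|$, the clusters of degree $> 2$ in the skeleton $\cS_{\cT\setminus\cF}$ that are not primary portals. The key structural observation is that $\cS_{\cT\setminus\cF}$ is obtained from the forest $\cT\setminus\cF$ by iteratively pruning non-portal leaves, so every leaf of $\cS_{\cT\setminus\cF}$ must be a primary portal; hence the skeleton is a forest whose leaves all lie in $P_1$. In any forest, the number of vertices of degree at least $3$ is strictly less than the number of leaves: this follows from the standard handshake/Euler argument applied component-wise, namely $\sum_{v}(\deg(v) - 2) = -2 \cdot(\#\text{components})$ for any forest, so $\sum_{v : \deg(v) \geq 3}(\deg(v)-2) = \sum_{v : \deg(v) \leq 1}(2 - \deg(v)) - 2\cdot(\#\text{components}) \leq (\#\text{leaves})$, and since each branching vertex contributes at least $1$ to the left side, $|P_2| \leq \#\text{leaves} - (\#\text{components}) < |P_1|$.

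Combining, $|P| = |P_1| + |P_2| < 2|P_1| \leq 4j$, which is the claimed bound. The main obstacle I anticipate is purely bookkeeping: being careful that the pruning process really does force all skeleton leaves into $P_1$ (a leaf of the skeleton that were not in $P_1$ would have been deleted), and handling the multi-component nature of the forest correctly in the counting inequality — in particular making sure the $-2\cdot(\#\text{components})$ term works in our favor rather than against us. No deep idea is needed beyond the leaf/branching-vertex count in forests.
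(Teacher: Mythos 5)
Your proof is correct and follows essentially the same route as the paper: bound $|P_1|\leq 2|\cF|\leq 2j$ by counting endpoints, observe that after pruning every leaf of the skeleton lies in $P_1$, and then apply the standard forest degree-sum identity to get $|P_2|<|P_1|$, hence $|P|<4j$. The paper states the degree-counting step more tersely (``the number of nodes of degree at least $3$ is upper bounded by the number of leaves minus $2$''), but the underlying argument is the same as your $\sum_v(\deg(v)-2)=-2\cdot(\#\text{components})$ computation.
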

\begin{proof}
  Clearly, $|P_1|\leq 2|\cF|\leq 2j$. As when computing the skeleton
  $\cS_{\cT\setminus \cF}$, non-portal clusters of degree $1$ are successively
  removed, we obtain a forest whose leaves are primary portals. As the sum of
  the degrees in an $N$-node forest is at most $2(N-1)$, the number of nodes of
  degree at least $3$ is upper bounded by the number of leaves minus $2$. We
  conclude that $|P_2|<|P_1|\leq 2j$, and hence $|P|<4j$.
\end{proof}

Finally, we identify each of the resulting trees with its portal and logically
move all edges between different trees to the portals. For each edge
$e=\set{c,c'}\in \cE_\cH\setminus (\cT\setminus\cF)$ (i.e., each non-tree edge
of $\cH(\cT,\cF)$), we add a virtual edge of capacity $\Cp_{\cG}(e)$ between the
portals of the trees containing $c$ and $c'$, respectively. Further, let $\cD$
be the set of edges that were deleted from the paths of degree-$2$ clusters
connecting portals in the skeleton. For every edge $e\in \cD$, we add a virtual
edge of capacity $\Cp_{\cH}(e)=\Cp_{\cT}(e)$ between the two portals that were
connected by the path from which $e$ was deleted.

Let us summarize this part of the construction; see \figureref{fig:Clusters}
for an example of a possible result. Starting from a forest $\cT\setminus
\cF$, do as follows:
\begin{compactenum}
\item Define $P_1$ as the endpoints of edges in $\cF$;
\item iteratively delete degree-$1$ clusters that are not in $P_1$ until this
process halts;
\item define $P_2$ as the clusters retaining degree larger than $2$ that are not
in $P_1$ and set $P:=P_1\cup P_2$;
\item delete from each (maximal) path without clusters from $P$ the edge $e\in
\cD$ of minimum capacity and replace it by an edge of the same capacity between
its endpoints; and
\item for each edge $e\in \cE_{\cH}$ between different components of
$\cT\setminus (\cF\cup \cD)$, add an edge of the same capacity between the
unique portals in these components.
\end{compactenum}
Hence, the resulting graph consists of the forest induced by
$\cT\setminus(\cF\setminus \cD)$ and (possibly parallel) edges between the
unique portals of the trees of the forest. By \lemmaref{lemma:nofPortals}, the
number of such portals is smaller than $4j$, implying that the resulting graph
is a $4j$-tree. In the following, we denote this $4j$-tree by $\cJ$.

\begin{figure}[t]
    \centering
    \subfloat[]{{\includegraphics[width=7cm]{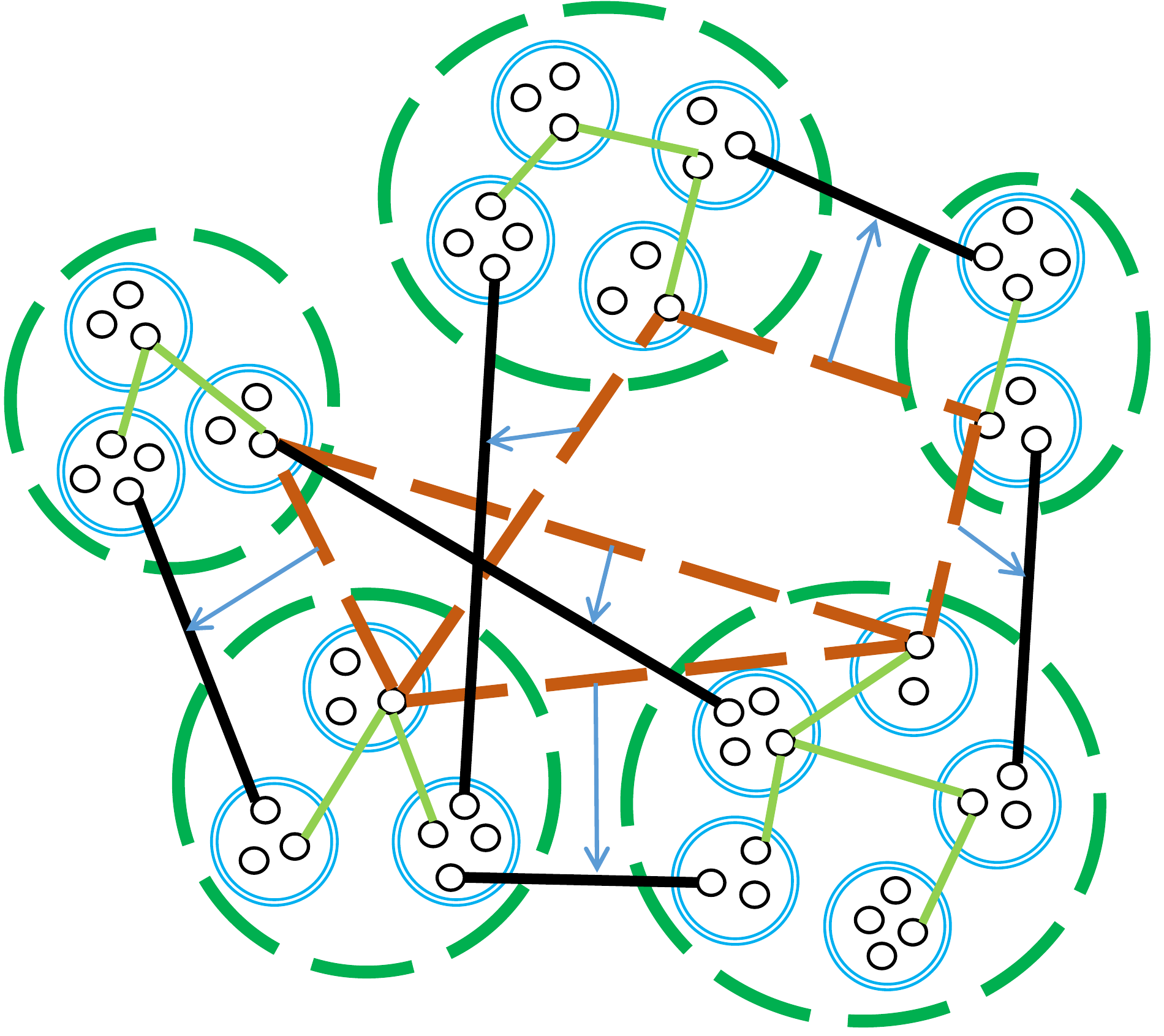} }}%
    \qquad
    \subfloat[]{{\includegraphics[width=7cm]{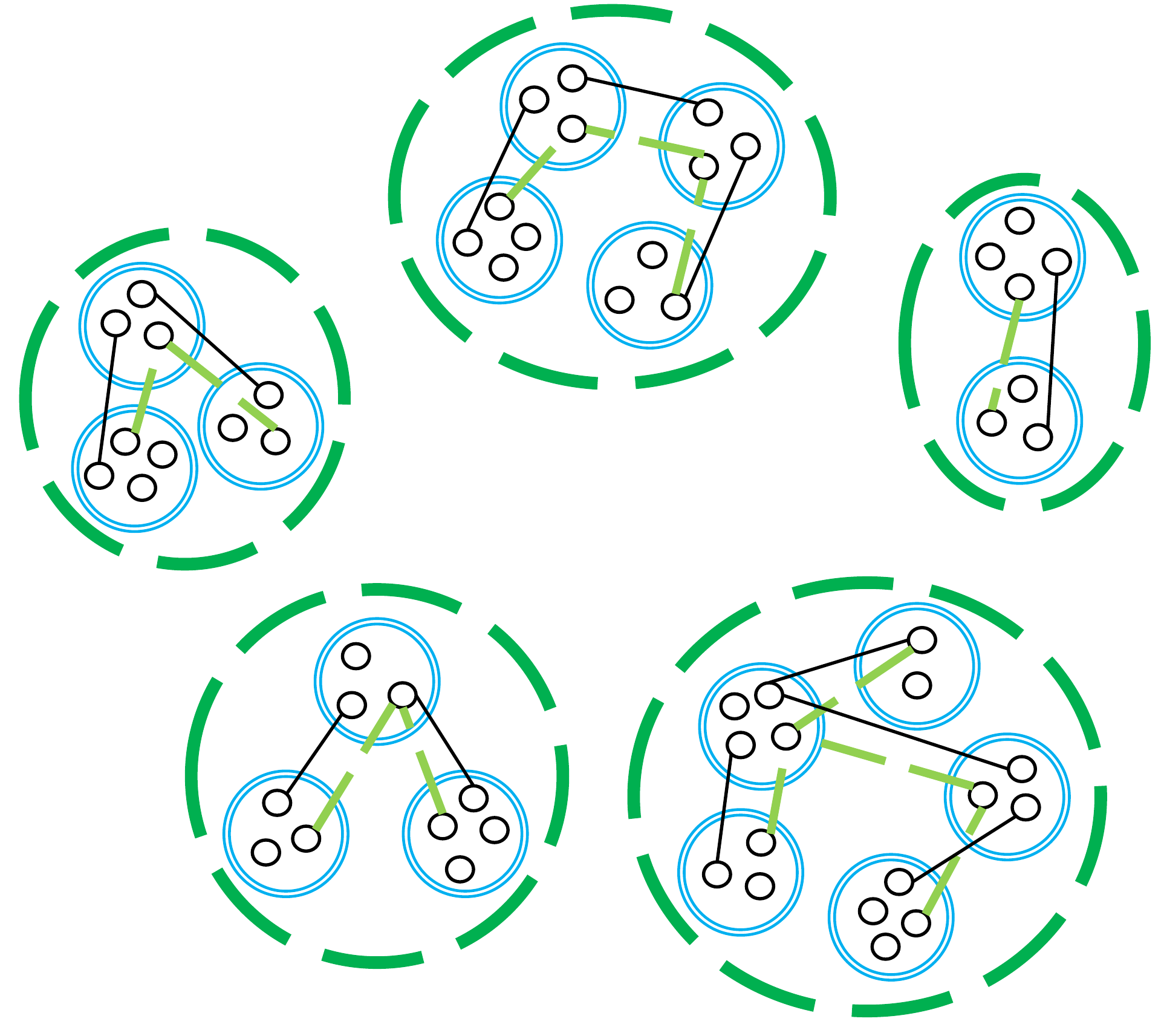} }}%
    \caption{\small An example $j$-tree at the second level of
    recursion.
    On the left side,
    green circles indicate the components of the forest of this $j$-tree, which
    are each made of a number of $1$-clusters, indicated by blue
    double-line circular shapes. Edges inside $1$-clusters are not
    shown. Solid straight green edges indicate virtual edges
    of level-$1$ that became edges of the level-$2$ forest. For each
    of these green edges, there is a real edge between some two nodes of the
    same two level-$1$ components (not shown). Brown edges represent
    (virtual) core edges, and they are
    mapped (blue arrows) to corresponding real edges (solid black). On
    the right side, real edges related to the virtual forest edges are
    represented as solid black edges between two nodes of the same connected
    component.}
	\label{fig:Clusters}
\end{figure}

\paragraph{Mutual Embeddability of $\cH(\cT,\cF)$ and $\cJ$.}
Before discussing how to efficiently construct (and represent) $\cJ$ in a
distributed way, let us first show that $\cH(\cT,\cF)$ and $\cJ$ are
$O(1)$-embeddable into each other.

The proofs of the following two lemmas is very similar to the corresponding
result by Madry~\cite{Madry10}. However, since our $j$-tree construction
slightly deviates from Madry's, the claims do not readily follow from any
lemma in \cite{Madry10,Madry10arxiv}.
\begin{lemma}\label{lemma:H_into_j}
$\cH(\cT,\cF)$ is $O(1)$-embeddable into $\cJ$.
\end{lemma}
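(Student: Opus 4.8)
The plan is to exhibit an explicit multicommodity flow that routes each edge of $\cH(\cT,\cF)$ through $\cJ$ with congestion $O(1)$ (on $\cJ$ with a constant factor on capacities). The construction of $\cJ$ from $\cH(\cT,\cF)$ proceeds by (a) contracting each tree of the forest $\cT\setminus(\cF\cup\cD)$ into its portal, (b) replacing each path-deleted edge $e\in\cD$ by a virtual edge of the same capacity between the path's endpoints, and (c) moving each non-tree edge $\{c,c'\}\in\cE_\cH\setminus(\cT\setminus\cF)$ to a virtual edge between the portals of the trees containing $c$ and $c'$. So I need to route each $\cH$-edge over $\cJ$-edges, and the natural route for each $\cH$-edge is: the forest edges in $\cT\setminus(\cF\cup\cD)$ are already edges of $\cJ$ (with the same capacity), so they are embedded trivially by the identity flow. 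For a non-tree edge $\{c,c'\}_{uv}$ of capacity $\Cp_\cG(e)$, route it along the unique tree path in $\cT\setminus(\cF\cup\cD)$ from $c$ to the portal of its tree, then across the virtual edge to the portal of $c'$'s tree, then down to $c'$. For an edge $e\in\cD$ of capacity $\Cp_\cT(e)$, route it along the remaining path segments in the skeleton to the two portals and across the corresponding virtual edge.

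**Key steps.** First I would make precise the routing map: define, for each cluster $c$, $\pi(c)$ to be the portal of the tree of $\cT\setminus(\cF\cup\cD)$ containing $c$, and note that within each such tree routing is unique. Second, I would bound the congestion on the virtual edges of $\cJ$. A virtual edge between portals $p,q$ carries (i) exactly the capacities of the $\cD$-edges on the skeleton path between $p$ and $q$ — but each such virtual edge got capacity $\Cp_\cT(e)$ where $e$ is the \emph{minimum}-capacity edge on that path, so the congestion contribution is $\Cp_\cT(e)/\Cp_\cT(e)=1$ — plus (ii) the capacities of all original non-tree edges whose endpoints' portals are $p$ and $q$; but the corresponding virtual edge in $\cJ$ was given capacity exactly $\Cp_\cG(e)$ for each such edge, and after the "merge parallel edges" step its capacity is the sum, so again the congestion is $1$. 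Third, I would bound the congestion on the forest edges of $\cJ$, i.e. the edges of $\cT\setminus(\cF\cup\cD)$: each such edge $g$ carries its own unit flow (congestion $1$) plus, from every non-tree edge and every $\cD$-edge whose routing path passes through $g$, an extra load. The point is that the path from $c$ to $\pi(c)$ uses only tree edges of $\cT\setminus(\cF\cup\cD)$, and the total flow pushed through any tree edge $g$ towards its portal is at most the total capacity of all edges of $\cH$ incident to the subtree hanging off $g$ — which, since $\cG$ is $1$-embeddable into $\cH$ hence into $\cT$, is at most $\Cp_\cT(g)$ by definition of the relative-load/capacity assignment $\Cp_\cT$. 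So the forest-edge congestion is $O(1)$ as well, and scaling capacities by that constant gives the embedding.

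**Main obstacle.** The delicate point is the forest-edge congestion bound: I must argue that summing, over all $\cH$-edges routed toward portals, the flow through a given tree edge $g$ of $\cT\setminus(\cF\cup\cD)$ does not exceed $O(1)\cdot\Cp_\cT(g)$. This is exactly where the capacity assignment $\Cp_\cT(g)$ being chosen so that $\cG$ (and hence $\cH$) embeds into $\cT$ does the work: the multicommodity flow $\vf'$ that witnessed this embedding already routes, over $g$, all of $\cG$'s edges crossing the cut induced by removing $g$ from $\cT$, and the capacity $\Cp_\cT(g)=|\vf'(g)|$ was set to match. The only subtlety is that edges which ended up in $\cF$ or $\cD$, or non-tree edges of $\cH$, are handled consistently — but every edge of $\cH$ that is not a surviving forest edge is a non-tree edge routed to portals, and its contribution to $g$ is bounded by its contribution in $\vf'$. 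I would therefore state and invoke the cut/flow correspondence (cf.\ \lemmaref{lemma:decomposition_aggregation} and the definition of $\Cp_\cT$) to close this, mirroring Madry's argument while accounting for the extra $\cR$-edges deleted (which only \emph{decreases} load on the remaining forest edges, so the bound is preserved).
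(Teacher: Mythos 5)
Your overall routing scheme is the same as the paper's: keep the surviving forest edges as the identity, route each $\cD$-edge along its skeleton path plus the corresponding virtual edge, and send each remaining $\cE_\cH$-edge up to the portal of its tree, across the virtual edge to the other portal, and down. Your accounting for the virtual edges and for $\cD$-edges is fine (minor point: in this paper the core is kept as a \emph{multigraph}, so there is no ``merge parallel edges'' step; the congestion-$1$ conclusion is unchanged).

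The gap is in the forest-edge bound. You claim that the total capacity of all $\cH$-edges incident to the subtree of a forest component $T$ hanging off an edge $g$ (away from the portal) is at most $\Cp_\cT(g)=|\vf'(g)|$. This is false in general: a non-tree $\cH$-edge from that subtree can land in a \emph{different} forest component that nevertheless lies on the same side of the $\cT$-cut induced by $g$. Such an edge contributes to the $\cJ$-flow through $g$ (because the routing drags it all the way to the portal and out), yet it does not cross the $\cT$-cut and therefore contributes nothing to $\vf'(g)$. So the quantity you bound by $\Cp_\cT(g)$ can genuinely exceed it, and ``embeddability of $\cG$ into $\cT$'' does not close this.

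What closes it, and what the paper actually does, is a case split at the first skeleton vertex $s_1$ on the portal path: either the $\cT$-routing of the edge continues from $s_1$ in the same direction (towards the portal), in which case the portal path is literally a sub-path of the $\cT$-path and the contribution is already accounted for in $\vf'$; or it continues in the opposite direction along the skeleton path $\cP$, in which case the $\cT$-path must cross the unique edge $e'\in\cD$ deleted from $\cP$. Because $e'$ is the \emph{minimum}-capacity edge on $\cP$, the aggregate flow of all such ``divergent'' edges is charged to $\vf'(e')$, and since every edge of $\cP$ has capacity at least $\Cp_\cT(e')$, this adds relative load at most $1$ to each edge of $\cP$. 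You allude to ``mirroring Madry's argument'' but do not surface this case distinction or the min-capacity role of $\cD$, and the explicit estimate you write in its place does not hold. Also, your remark that the extra $\cR$-edges ``only decrease load on the remaining forest edges'' is too quick: in the lemma's framing $\cR\subseteq\cF$, so deleting more tree edges means more non-tree edges of $\cH$ get re-routed to portals, which could a priori \emph{increase} forest-edge load; it is again the $\cD$-edge analysis (and the fact that $\cG$ is still $1$-embedded into the modified $\cH$) that keeps things bounded.
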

\begin{proof}
There are three types of edges of $\cH(\cT,\cF)$ to distinguish: 
  \begin{compactenum}[a)]
  \item edges in $(\cT\setminus \cF)\setminus \cD$,
  \item edges in $\cD$, and
  \item the remaining edges from $\cE_{\cH}$ connecting different trees of the
  forest $\cT\setminus \cF$.
  \end{compactenum}
  Case a) is the most straightforward, as all these edges are
  also present in $\cJ$ with the same capacity. Edges from $e\in \cD$
  were deleted from a path $\cP$ connecting two portals in the skeleton. In
  $\cJ$, they can therefore be routed through the path $\cP$ and the virtual
  edge with capacity $\Cp_{\cT}(e)$ connecting the portal nodes at the ends of
  $\cP$. Because $e$ is the lowest capacity edge of $\cP$, this adds
  relative load at most $1$ to each edge of $\cP$.
  
  Finally, let us consider one of the remaining edges $e\in\cE_{\cH}$. The edge
  $e=\set{c_1,c_2}$ connects two trees $T_1\neq T_2$ of $\cJ$. Let us assume
  that $c_1\in T_1$ and $c_2\in T_2$. When routing from $c_1$ to $c_2$, we follow 
  \begin{compactenum}
  \item the path from $c_1$ in $T_1$ to the first skeleton cluster $s_1\in T_1$
  on the path to the (unique) portal $p_1\in T_1\cap P$,
  \item the skeleton path from $s_1$ to $p_1$,
  \item the virtual edge corresponding to $e$ between $p_1$ and the (unique)
  portal $p_2\in T_2\cap P$,
  \item the skeleton path from $p_2$ to the last skeleton cluster $s_2\in T_2$
  when going from $p_2$ to $c_2$ in $T_2$, and
  \item the path from $s_2$ to $c_2$ in $T_2$.
  \end{compactenum}
  Let us compare the path from $c_1$ via $s_1$ to $p_1$ with
  the path on which $e$ is routed on the spanning tree $\cT$. The part from
  $c_1$ to $s_1$ is also used when routing in $\cT$. If from $s_1$ we follow the
  same direction to $p_1$ as in $\cT$, the two paths are, in fact, identical
  up the point when we reach $p_1$. In this case, the capacities on this path
  suffice by construction, as we defined $\Cp_{\cT}(e')=|\vf'(e')|$. Let us
  therefore consider the case in which we set out in the opposite direction
  from $s_1$ on the skeleton path $\cP\ni s_1$ connecting $p_1$ to some other
  portal $p_2$ than we would in $\cT$. In that case,
  routing on $\cT$ would cross the edge $e'\in\cD$ that was deleted from $\cP$.
  Because $e'$ is the edge from $\cP$ of smallest capacity, the
  contribution to the relative load of all edges crossed on $\cP$ is upper
  bounded by the relative load contributed to $e'$ when routing on $\cT$. Again,
  summing over all edges from $\cE_{\cH}$ falling under Case c), this may
  increase their total relative loads only by an additive $1$. Trivially, the
  third step causes relative load $1$ on the virtual edge corresponding to $e$,
  since it is not used for routing any other edge. Reasoning symmetrically for
  Steps 4 and 5, we can conclude that embedding $\cH(\cT,\cF)$ into $\cJ$ leads
  to constant relative load on all edges.
\end{proof}

\begin{lemma}\label{lemma:j_into_H}
$\cJ$ is $O(1)$-embeddable into $\cH(\cT,\cF)$.
\end{lemma}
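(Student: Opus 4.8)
The plan is to exhibit an explicit multicommodity flow that routes every edge of $\cJ$ through $\cH(\cT,\cF)$ with congestion $O(1)$; the bound will come from a charging argument against the spanning-tree routing $\vf'$ used to define the tree-edge capacities $\Cp_\cT=\Cp_\cH$ (recall that $\Cp_\cH(e')=\Cp_\cT(e')=|\vf'(e')|$ for $e'\in\cT\setminus\cF$, and that $|\vf'(e')|$ is the total capacity of $\cG$-edges whose $\cT$-path traverses $e'$).

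I would split the edges of $\cJ$ into three classes and route each. (a) The forest edges of $\cJ$, i.e.\ the edges of $\cT\setminus(\cF\cup\cD)$, are present in $\cH$ with the same capacity, so route each along itself; this contributes relative load exactly $1$ to each such edge. (b) A core edge of $\cJ$ that replaces a deleted skeleton-path edge $e^*\in\cD$ (removed from the path $\cP$ of the skeleton joining two portals, with capacity $\Cp_\cT(e^*)$): route its $\Cp_\cT(e^*)$ units along the whole of $\cP$ inside $\cH$ — every edge of $\cP$, including $e^*$, survives in $\cH$ since $\cP\subseteq\cT\setminus\cF$ — so, because $e^*$ has minimum capacity on $\cP$, every edge of $\cP$ gains relative load at most $1$, and as distinct skeleton paths are edge-disjoint, no $\cH$-edge receives more than relative load $1$ from class (b). (c) A core edge $\{p_1,p_2\}$ of $\cJ$ that replaces a cross edge $e=\{c_1,c_2\}\in\cE_\cH$, where $c_1$ lies in the forest tree $T_1$ with portal $p_1$ and $c_2$ in the tree $T_2$ with portal $p_2$: route its $\Cp_\cG(e)$ units along the $T_1$-path from $p_1$ to $c_1$, then across the $\cH$-edge $e$, then along the $T_2$-path from $c_2$ to $p_2$. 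Then $e$ itself, every cross edge of $\cH$, and every edge of $\cD$ carry relative load at most $1$, so it only remains to bound the load class (c) places on the forest edges of $\cJ$.

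That bound is the main obstacle. Fix a forest edge $e'$ of $\cJ$, lying in $T_1$; a class-(c) edge $e=\{c_1,c_2\}$ routes through $e'$ exactly when the endpoint of $e$ in $T_1$ is on the side of $e'$ away from $p_1$. Mimicking the argument in \lemmaref{lemma:H_into_j}, I would compare the route $c_1\rightsquigarrow p_1$ in $T_1$ with the $\cT$-path from $c_1$ to $c_2$ (along which $\vf'$ routes commodity $e$): the two agree from $c_1$ up to the first skeleton cluster reached on the way to $p_1$, and from there either the $\cT$-path heads toward $p_1$ — whence $e'$ itself lies on the $\cT$-path of $e$ — or it departs in the opposite direction along the skeleton path $\cP\ni e'$ whose minimum edge $e^*\in\cD$ was deleted, whence $e^*$ lies on the $\cT$-path of $e$ and $\Cp_\cT(e^*)\le\Cp_\cT(e')$. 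Summing over all class-(c) edges through $e'$, those of the first kind contribute at most the capacity of all $\cG$-edges whose $\cT$-path uses $e'$, i.e.\ $|\vf'(e')|=\Cp_\cT(e')$, and those of the second kind contribute at most $|\vf'(e^*)|=\Cp_\cT(e^*)\le\Cp_\cT(e')$; dividing by $\Cp_\cH(e')=\Cp_\cT(e')$ gives relative load at most $2$ from class (c), hence $O(1)$ in total. The points needing care are the structural facts that make this comparison valid: that inside $T_1$ the surviving skeleton is a union of paths meeting only at the unique portal $p_1$ (so the opposite-direction detour passes exactly one deleted edge $e^*$), and that the $\cT$-path from any $c_1\in T_1$ to any $c_2\notin T_1$ is forced through the same tail and skeleton edges as the route toward $p_1$ until it leaves $T_1$ along the skeleton — in particular a tail edge $e'$ always falls into the first kind, so the estimate is uniform over tail and skeleton edges.
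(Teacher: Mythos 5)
Your proposal is correct and follows essentially the same route as the paper's proof: forest edges are routed onto themselves, core edges replacing $\cD$-edges are routed back along the surviving skeleton path (bounded by minimality of the deleted edge), core edges replacing cross edges are routed portal-to-portal via the original $\cH$-edge, and the forest-edge load from the last class is bounded by comparing the portal route to the $\cT$-path and charging either to $|\vf'(e')|$ or to $|\vf'(e^*)|$ for the unique deleted edge $e^*$ on the same skeleton arm. You have merely spelled out the charging argument (and the star-shaped structure of each tree's surviving skeleton) that the paper compresses into a sentence referencing \lemmaref{lemma:H_into_j}.
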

\begin{proof}
 All the edges of the trees of $\cJ$ are also present
  in $\cH(\cT,\cF)$ with the same capacity; they are hence
  straightforward to embed. Let us therefore consider the virtual
  edges connecting the portals of $\cJ$. There are two types of
  virtual edges, the ones representing edges from $\cD$ and those
  that correspond to the non-tree edges of $\cH(\cT,\cF)$. We first have a look
  at a virtual edge $e$ corresponding to an edge $e'\in \cD$. The edge $e$
  is routed by following path from which $e'$ was removed. Since
  the capacity of each edge on the path is at least the capacity
  of $e$, this contributes at most $1$ to the relative load of each
  edge.
  
  Now consider a virtual edge $e=\set{c,c'}$ corresponding to an edge $e'\in
  \cE_{\cH}$ of $\cH(\cF,\cT)$. The edge $e$ is routed on the trees of $\cJ$ the
  clusters $c$ and $c'$ reside in and via $e'$. The latter causes relative load
  $1$, as $e$ and $e'$ have the same capacity and no other edge uses $e'$.
  Similarly to the embedding of $\cH(\cT,\cF)$ into $\cJ$, the tree parts of the
  routing path that are subpaths of the path between $c$ and $c'$ in $\cT$
  and thus will not cause more than additive relative load $1$ when summing
  over all edges of this type to embed. If we diverge from this path, this is
  because an edge from $\cD$ lies on the routing path in $\cT$; analogously to
  \lemmaref{lemma:H_into_j}, following the skeleton path from which it was
  deleted to the respective portal increases the maximum relative load by at
  most an additional $1$.
\end{proof}

\paragraph{Distributed Implementation.}
Let us now move to the distributed implementation of the above
$4j$-tree construction. Recall that because $\cF$ includes the random
set of edges $\cR$, by \lemmaref{lemma:smallstrongdiameter}, all trees in
$\cT\setminus \cF$ have depth $\tilde{O}(\sqrt{n})$. With this in mind,
constructing the skeleton is fairly simple.

\begin{lemma}\label{lemma:skeleton}
  Given are a spanning tree $\cT$ of a distributed cluster graph and the
  set of tree edges $\cF$ as computed above. We can determine the skeleton
  $\cS_{\cT\setminus\cF}$, the set of portals $P$, and the set of edges $\cD$
  (i.e., for $\{c,c'\}_{uv}\in \cD$, $u$ and $v$ will learn this) in
  time $O(\sqrt{n}\log n)$ in the \Congest model on the underlying network
  graph. In the same time, we can also orient the trees rooted at the portals.
\end{lemma}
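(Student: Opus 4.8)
The plan is to perform everything locally within the components of $\cT\setminus\cF$. Since $\cF\supseteq\cR$, \lemmaref{lemma:smallstrongdiameter} guarantees that the forest of $G$ induced by $\cT\setminus\cF$ and the cluster spanning trees consists of $O(\sqrt n)$ trees of depth $O(\sqrt n\log n)$, and distinct components use disjoint edges of $G$, so floodings and convergecasts can be run on all of them in parallel without contention (and without invoking a BFS tree of $G$, which is why no $D$ term appears). I begin with bookkeeping: the two network endpoints of each edge $e\in\cT$ already know $\rload_\cT(e)$ and whether $e\in\cF$ (the latter requires only the already-broadcast threshold and the set $\cR$), so by broadcasting on cluster spanning trees every cluster learns whether it is a \emph{primary portal}, i.e., incident in $\cT$ to an edge of $\cF$; call this set $P_1$. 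Assume $\cF\neq\emptyset$ (otherwise $\cH(\cT,\cF)=\cT$ is already a tree and the statement is trivial). Then every component of $\cT\setminus\cF$ contains a cluster of $P_1$, since any component other than all of $\cT$ has a boundary $\cT$-edge, which lies in $\cF$. Using aggregation on the (depth-$O(\sqrt n\log n)$) spanning tree each component already has, every component selects its minimum-ID primary portal and re-roots its network spanning tree there in $O(\sqrt n\log n)$ rounds.

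Next I would compute the skeleton. With the component rooted at a primary portal, I claim a cluster $c$ survives the iterated pruning of non-$P_1$ degree-$1$ clusters if and only if the subtree of $c$ contains a cluster of $P_1$: such a $c$ lies on the path from that descendant to the primary-portal root and hence is never a leaf after the deeper clusters are resolved, whereas a subtree entirely free of $P_1$ is eventually pruned bottom-up. Hence one bottom-up pass of a single bit over each component tree labels all skeleton clusters; an edge of $\cT\setminus\cF$ is a skeleton edge iff both endpoints are skeleton clusters, which the two network endpoints learn after exchanging their labels in one round, and then each cluster counts its skeleton-degree by aggregation on its tree. A skeleton cluster $c\notin P_1$ of degree exactly $2$ is an internal chain cluster and one of degree $>2$ is a \emph{secondary portal}; this determines $P_2$, and broadcasting $P:=P_1\cup P_2$ makes portal membership known cluster-wide. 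To obtain $\cD$ I run a top-down pass computing, for each cluster $c$ with parent edge $e_c$, $\mu(c):=e_c$ if $\mathrm{parent}(c)\in P$ and $\mu(c):=\min\{e_c,\mu(\mathrm{parent}(c))\}$ (minimum by capacity, ties by ID) otherwise; for each non-root portal $q$ the maximal chain joining $q$ to its nearest ancestor portal consists exactly of the edges accumulated in $\mu(q)$, so I add $\mu(q)$ to $\cD$ and send a token carrying its ID from $q$ along that chain, so that the two network endpoints of $\mu(q)$ learn they lie in $\cD$.

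Finally I would orient the trees rooted at the portals. After deleting the network edges corresponding to $\cD$, the forest refines into components each containing exactly one portal $p$: deleting the single chosen edge of each chain splits that chain into a $p$-side part and a $q$-side part, so each portal together with the $p$-sides of its incident chains (and the pruned clusters hanging off them) forms one component. Being subgraphs of the depth-$O(\sqrt n\log n)$ components of $\cT\setminus\cF$, these still have diameter $O(\sqrt n\log n)$ in $G$, so flooding from (the leader of) each portal cluster both propagates the portal's identifier and re-establishes parent pointers within $O(\sqrt n\log n)$ rounds. In total the procedure consists of a constant number of aggregations, floodings, neighbor-exchanges, and re-rootings on trees of depth $O(\sqrt n\log n)$, giving $O(\sqrt n\log n)$ rounds.

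The step I expect to be most delicate is the final re-rooting: unlike the skeleton and $\cD$ computations, which respect the original component rooting, the components of $\cT\setminus(\cF\cup\cD)$ are in general \emph{not} rooted subtrees of it — a portal $p$ keeps some of its former ancestors as well as descendants — so one must explicitly invoke the diameter bound to argue these pieces are still shallow and then re-root them from scratch. A secondary nuisance throughout is keeping track of which network edge $\psi(e)$ realizes each cluster-graph edge $e$, so that every "$u$ and $v$ learn'' guarantee (membership in $\cF$, in the skeleton, in $\cD$) is actually delivered to the correct pair of physical endpoints.
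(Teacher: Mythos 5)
Your proof is correct and follows essentially the same route as the paper's own (much terser) argument: both rely on \lemmaref{lemma:smallstrongdiameter} to get $O(\sqrt n)$ components of depth $\tilde O(\sqrt n)$, then compute $P_1$, prune to obtain the skeleton, identify $P_2$, select the minimum-capacity edge of each chain for $\cD$, and re-root at the unique portal. Your version adds useful implementation detail the paper leaves implicit — in particular the re-rooting at a primary portal to turn the iterated pruning into a single bottom-up pass, the running-minimum top-down pass for $\cD$, and the observation that the final components of $\cT\setminus(\cF\cup\cD)$ are not nested in the old rooting and must be re-rooted from scratch (which is why the diameter bound, not just the depth bound, is what one actually needs there).
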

\begin{proof}
W.l.o.g., consider a single tree $T$ of the forest $\cT\setminus \cF$. By
\lemmaref{lemma:smallstrongdiameter}, the induced tree in $G$ has depth
$\tilde{O}(\sqrt{n})$. Perform the following steps:
\begin{compactitem}
\item For each edge $e\in \cF$, its incident clusters
learn\footnote{A cluster for which edges to children are in $\cF$ may not
``know'' about its incident edges in $\cF$ as a whole, but determining whether
there is at least one is trivial.} that they are primary portals, i.e., are in
$P_1$.
\item Iteratively mark non-portal clusters with at most $1$ marked neighboring
cluster until this process stops. Unmarked clusters are in the skeleton.
\item Unmarked clusters with more than two unmarked neighboring portals are
secondary portals.
\item The skeleton paths connecting portals find a minimum capacity edge and
add it to $\cD$.
\item Each tree of $\cT\setminus(\cF\cup \cD)$ is rooted at its unique
portal, whose identifier is made known to all nodes in the induced tree in $G$
(together with clusters' spanning trees).
\item These identifiers are exchanged with all neighbors in $G$.
\end{compactitem}
From the gathered information, for each edge $\{c,c'\}_{uv}\in \cJ$, $u$ and $v$
now can determine its membership and its capacity in $\cJ$. Observe that the
bound of $\tilde{O}(\sqrt{n})$ on the depth of the spanning trees of $G$
leveraged for communication in the above construction implies that all the above
steps can be completed in $\tilde{O}(\sqrt{n})$ rounds, which completes the
proof.
\end{proof}
The trees rooted at the portals now induce the clusters of the new cluster
graph.
\begin{corollary}\label{cor:jtree}
  Given a graph $\cH(\cT,\cF)\in\mathbb{H}[j/4]$ as computed above on a cluster
  graph whose clusters' spanning trees have maximum depth $d$, there is an
  $\tilde{O}(D+d+\sqrt{n})$-round distributed algorithm to compute
  \begin{compactitem}
  \item a cluster graph whose clusters' spanning trees have depth
  $d+\tilde{O}(\sqrt{n})$; and
  \item a $j$-tree $\cJ$ on this cluster graph, i.e., for each edge $e\in \cJ$,
  there is a corresponding graph edge $\{u,v\}\in E$ whose constituent nodes
  know that $e\in \cJ$ as well as $\Cp_{\cJ}(e)$; such that
  \item $\cH(\cT,\cF)$ is $1$-embeddable into $\cJ$ and $\cJ$ is
  $O(1)$-embeddable into $\cH(\cT,\cF)$; and
  \item the new clusters are induced by the tree components of $\cJ$.
  \end{compactitem}
\end{corollary}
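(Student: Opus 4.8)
The plan is to wrap the construction and the two embeddability lemmas proved above into a single distributed routine and then verify the four asserted properties one by one. The spanning tree $\cT$, its load-based edge classes, and the set $\cF=\cF'\cup\cR$ are already given with the input $\cH(\cT,\cF)$, so the routine only needs to build $\cJ$; I would run exactly the steps stated just before the corollary: mark the primary portals $P_1$ (clusters incident to an edge of $\cF$), prune degree-one non-portal clusters from the forest $\cT\setminus\cF$ to get the skeleton $\cS_{\cT\setminus\cF}$, declare the non-primary clusters that keep degree $\ge 3$ to be the secondary portals $P_2$, remove from every maximal skeleton path its minimum-capacity edge into $\cD$, and add one virtual core edge per edge of $\cD$ (capacity $\Cp_\cT(e)$, between the two portals its path connected) and one per edge of $\cE_\cH$ joining two distinct trees of $\cT\setminus(\cF\cup\cD)$ (capacity $\Cp_\cG(e)$, between those trees' portals). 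Its distributed implementation --- in particular having the endpoints of every resulting edge learn both its membership in $\cJ$ and its capacity $\Cp_\cJ$ --- is \lemmaref{lemma:skeleton}; the sole difference from the setting there is that the old cluster spanning trees have depth $d$ instead of $\tilde{O}(\sqrt{n})$, so by \lemmaref{lemma:smallstrongdiameter} (which applies since $\cR\subseteq\cF$) the trees of $\cT\setminus\cF$ induced in $G$ have depth $d+O(\sqrt{n}\log n)$ and each step of the routine runs in $\tilde{O}(d+\sqrt{n})\subseteq\tilde{O}(D+d+\sqrt{n})$ rounds.

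Next I would read off the new cluster graph. Its clusters are the connected components of $\cT\setminus(\cF\cup\cD)$, each rooted at its unique portal; a new cluster's spanning tree in $G$ is the union of these forest edges with the depth-$d$ spanning trees of the old clusters, and since deleting $\cD$ only shrinks the forest, \lemmaref{lemma:smallstrongdiameter} again bounds the depth by $d+\tilde{O}(\sqrt{n})$ (first bullet). \lemmaref{lemma:skeleton} already makes the portal identifier (hence the new cluster ID) known throughout each new cluster, each node knows its parent in the induced tree, and a convergecast on the new trees yields the cluster sizes, so the new cluster graph meets \definitionref{def:clustergraph} with the tree components of $\cJ$ as its clusters (fourth bullet). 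For the edge representation: each forest edge of $\cJ$ lies in $\cT\setminus(\cF\cup\cD)\subseteq\cE$ and is internal to a new cluster, while each virtual core edge of $\cJ$ comes from an edge of $\cE_\cH\subseteq\cE$ whose two endpoints now lie in the two new clusters that edge's virtual copy joins; since the clusters partition $V$ the map $\psi$ is injective, so distinct core edges of $\cJ$ use distinct physical edges of $G$, and each such physical edge $\{u,v\}$ has its endpoints knowing $e\in\cJ$ and $\Cp_\cJ(e)$ (second bullet). Finally $\cH(\cT,\cF)\in\mathbb{H}[j/4]$ gives $|\cF|\le j/4$, whence \lemmaref{lemma:nofPortals} yields $|P|<4\cdot(j/4)=j$, so $\cJ$ is a $j$-tree.

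The third bullet is just \lemmaref{lemma:H_into_j} and \lemmaref{lemma:j_into_H} instantiated for this $\cT,\cF$, so no new embedding argument is needed. I expect the main obstacle to be purely bookkeeping: checking that \lemmaref{lemma:skeleton} really delivers every piece the new cluster graph's invariants demand (cluster IDs, sizes, parent pointers, and for each core edge its two endpoints together with its capacity), confirming the single-step depth growth $d\mapsto d+\tilde{O}(\sqrt{n})$ while the within-cluster operations now also cost the additive $d$, and making sure the multigraph accounting stays consistent when several core edges of $\cH(\cT,\cF)$ between different old-cluster pairs collapse onto the same pair of new portals --- which is harmless precisely because injectivity of $\psi$ on $\cG$ forces their physical representatives to be distinct.
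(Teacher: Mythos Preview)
Your proposal is correct and follows essentially the same approach as the paper: the paper's own proof simply cites \lemmaref{lemma:smallstrongdiameter}, \lemmaref{lemma:nofPortals}, \lemmaref{lemma:H_into_j}, \lemmaref{lemma:j_into_H}, and \lemmaref{lemma:skeleton}, and adds only the remark that new cluster sizes are obtained by a converge- and broadcast on the cluster spanning trees. Your write-up is a faithful (and considerably more explicit) unpacking of precisely this chain of citations, including the same convergecast observation for cluster sizes and the depth-growth bound via \lemmaref{lemma:smallstrongdiameter}.
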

\begin{proof}
This readily follows from
Lemmas~\ref{lemma:smallstrongdiameter}, \ref{lemma:nofPortals},
\ref{lemma:H_into_j}, \ref{lemma:j_into_H}, and~\ref{lemma:skeleton}. The only
thing left to note is that clusters can learn the number of nodes they contain
by a simple converge- and broadcast operation on their spanning trees.
\end{proof}

\subsection{Sampling from the Recursively Constructed Distribution}

We have now all pieces in place to efficiently sample from a distribution
similar to Sherman's in a distributed fashion. The difference is that
\theoremref{theorem:spanning} and thus \lemmaref{lemma:Hdistribution} merely
give $\alpha\in 2^{O(\sqrt{\log n \log \log n})}$, implying that we must use
fewer levels of recursion to ensure that the final approximation guarantee of
the congestion approximator will remain in $n^{o(1)}$.
\begin{theorem}\label{theorem:distribution}
W.h.p., within $(\sqrt{n}+D)n^{o(1)}$ rounds of the \Congest model, we
can sample a tree $\cT$ from a distribution of $n^{1+o(1)}$ (virtual)
rooted spanning trees on $G$ with the following properties.
\begin{compactitem}
\item For any cut of $G$ of capacity $C$, the capacity of the cut in $\cT$ is
at least $C$.
\item For any cut of $G$ of capacity $C$, the expected capacity of the cut in
$\cT$ is at most $\alpha C$, where $\alpha\in n^{o(1)}$.
\item The distributed representation of $\cT$ is given by a hierarchy
  of cluster graphs $\cG_i=(\cV_i,\cE_i,\cL_i,\mathfrak{T}_i,\psi_i)$,
  $i\in \{0,\ldots,i_0\}$, $i_0\in o(\log n)$, on network graph $G$,
  with the following properties.
\begin{compactitem}
\item The spanning trees of the clusters of $\cG_i$ have depth
$\tilde{O}(\sqrt{n})$.
\item $|\cV_{i_0}|=n^{1/2+o(1)}$.
\item $\cG_i$ is the (rooted) tree resulting from $\cT$ by contracting the
clusters of $\cG_i$.
\item For $i>0$, $\cG_i$ is also a cluster graph on network graph $\cG_{i-1}$.
\item For $i>0$, each cluster $c_i\in \cV_i$ of $\cG_i$, interpreted as cluster
graph on $\cG_{i-1}$, contains a unique \emph{portal cluster} $p(c_i)\in
\cV_{i-1}$ of $\cG_{i-1}$ that is incident\footnote{Note that the corresponding
physical edges in $G$ may still connect to different sub-clusters of $c_i$.} to
all edges of $\cG_i$ containing $c_i$. That is, $\cG_{i-1}$ is a $|\cV_i|$-tree
with core $p(\cV_i)$.
\end{compactitem}
\end{compactitem}
\end{theorem}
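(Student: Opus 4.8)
The plan is to execute Sherman's recursive application of Madry's $j$-tree construction inside the \Congest model, using \lemmaref{lemma:sparsification}, \theoremref{theorem:spanning}, \lemmaref{lemma:smallstrongdiameter}, \lemmaref{lemma:decomposition_aggregation}, \lemmaref{lemma:Hdistribution}, and \corollaryref{cor:jtree} as the per-level primitives, with the number of recursion levels $i_0$ tuned carefully. First I would fix parameters $i_0$ and $\beta$ with $i_0=\omega(1)$, $i_0=o(\sqrt{\log n/\log\log n})$, $\beta=n^{o(1)}$, and $n/\beta^{i_0}=n^{1/2+o(1)}=\omega(\sqrt n\log n)$; for instance $i_0:=\lceil(\log n)^{1/4}\rceil$ together with $\beta:=2^{\lceil\log_2 n/(2i_0)\rceil-\lceil 2\log_2\log_2 n\rceil}$ works. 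As the base of the recursion, take $\cG_0$ to be a $(1+1/\polylog n)$-spectral sparsifier of $G$ with $\tilde O(n)$ edges (\lemmaref{lemma:sparsification}), rescaled so that every cut has capacity in $[C,(1+o(1))C]$, where $C$ denotes its capacity in $G$. Throughout I maintain the invariant that $\cG_i$ is a distributed cluster graph (on $G$, and for $i>0$ also on $\cG_{i-1}$) with $N_i:=|\cV_i|$ clusters whose spanning trees have depth $\tilde O(\sqrt n)$.

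One level of the recursion, for $i<i_0$, goes as follows. Sparsify $\cG_i$ to $\tilde O(N_i)$ edges via \lemmaref{lemma:sparsification}, then invoke \lemmaref{lemma:Hdistribution} with $j=\Theta(N_i/\beta)$ — which is $\omega(\sqrt n\log n)$ since the $N_i$ decrease and $N_{i_0}=\omega(\sqrt n\log n)$ — to enumerate, one graph at a time in $(D+\sqrt n)\,n^{o(1)}\beta$ rounds, a $\bigl(2^{O(\sqrt{\log N_i\log\log N_i})},\mathbb{H}[j]\bigr)$-decomposition of $\cG_i$ consisting of $n^{o(1)}\beta$ graphs $\cH(\cT,\cF)$; here \theoremref{theorem:spanning} supplies the low-stretch tree $\cT$, \lemmaref{lemma:decomposition_aggregation} computes the tree-edge capacities $\Cp_\cT$ (hence the relative loads and the set $\cF\supseteq\cR$, with $\cR$ the random edge set of \lemmaref{lemma:smallstrongdiameter}), and the class index that determines $\cF$ is found by binary search over a global BFS tree. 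While enumerating the decomposition I keep a reservoir sample, so that one $\cH(\cT,\cF)$ is ultimately selected with the correct weight at no extra asymptotic cost. Feeding it to \corollaryref{cor:jtree} produces, in $\tilde O(D+\sqrt n)$ further rounds, a $4j$-tree $\cJ$ that is mutually $O(1)$-embeddable with $\cH(\cT,\cF)$ (the embedding of $\cH(\cT,\cF)$ into $\cJ$ having congestion $1$), together with the next cluster graph $\cG_{i+1}$ whose clusters are the forest trees of $\cJ$ — each rooted at and containing its unique portal, which is incident to all core edges of that tree — whose spanning trees have depth $\tilde O(\sqrt n)+\tilde O(\sqrt n)=\tilde O(\sqrt n)$, and with $N_{i+1}<4j=\Theta(N_i/\beta)$. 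After $i_0=o(\log n)$ levels, $N_{i_0}\le 2^{O(i_0)}\,n/\beta^{i_0}=n^{1/2+o(1)}$, so I collect all of $\cG_{i_0}$ at the root of a global BFS tree in $O(D+n^{1/2+o(1)})$ rounds, run R\"acke's construction~\cite{raecke08} locally to obtain a distribution of $n^{1/2+o(1)}$ rooted spanning trees that dominate all cuts and approximate each cut in expectation up to $2^{O(\sqrt{\log n\log\log n})}$, sample one, and broadcast it. Gluing the sampled tree of level $i_0$ onto the sampled forests of levels $i_0-1,\dots,0$ produces the virtual rooted spanning tree $\cT$ of $G$; by construction each $\cG_i$ is precisely the tree obtained by contracting the clusters of $\cG_i$ in $\cT$ and is a cluster graph on $\cG_{i-1}$ with the stated portal structure, so properties (III)--(IV) are immediate from \corollaryref{cor:jtree} and \lemmaref{lemma:skeleton}.

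For the cut bounds, I would argue in the flow-embeddability framework, in which $A$ being $\rho$-embeddable into $B$ implies $\Cp_A(\partial S)\le\rho\,\Cp_B(\partial S)$ for \emph{every} cut $S$. Chaining the $1$-embeddings $\cG_i\hookrightarrow\cH(\cT,\cF)\hookrightarrow\cJ$ up the hierarchy (using that the forest edges of $\cJ$ live inside the clusters of $\cG_{i+1}$, so that its core is $\cG_{i+1}$) shows $\cG_0$ is $1$-embeddable into $\cT$, hence $\Cp_\cT(\partial S)\ge\Cp_{\cG_0}(\partial S)\ge C$ for every cut, which is the first property. For the second, each level multiplies the expected cut capacity by at most $(1+o(1))\cdot 2^{O(\sqrt{\log N_i\log\log N_i})}$ — combining the decomposition guarantee of \lemmaref{lemma:Hdistribution}, the sparsifier error of \lemmaref{lemma:sparsification}, and the $O(1)$-embeddability of $\cJ$ into $\cH(\cT,\cF)$ from \corollaryref{cor:jtree} — and, because the core of a $j$-tree sits inside it exactly as $\cG_{i+1}$ sits inside $\cJ$, these guarantees compose across levels just as in Madry's and Sherman's centralized analysis; together with the final R\"acke layer this yields $\E[\Cp_\cT(\partial S)]\le\bigl((1+o(1))\,2^{O(\sqrt{\log n\log\log n})}\bigr)^{i_0+1}\Cp_G(\partial S)=2^{O(i_0\sqrt{\log n\log\log n})}\Cp_G(\partial S)=n^{o(1)}\,C$, using $i_0\sqrt{\log n\log\log n}=o(\log n)$. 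The total running time is $i_0\cdot(D+\sqrt n)\,n^{o(1)}\beta+O(D+n^{1/2+o(1)})=(D+\sqrt n)\,n^{o(1)}$ since $\beta,i_0\in n^{o(1)}$, and the distribution over $\cT$ has support of size at most $(n^{o(1)}\beta)^{i_0}\cdot n^{1/2+o(1)}=n^{o(1)}\cdot n^{1/2}\cdot n^{1/2+o(1)}=n^{1+o(1)}$ by the same parameter choices.

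The step I expect to be the main obstacle is making the cut-approximation guarantee compose across the $i_0$ levels while keeping every parameter simultaneously in range: the lifting lemma (an $(\alpha_i,\mathbb{H}[j])$-decomposition of a core yields a decomposition of the $j$-tree with only the stated loss, mirroring Madry and relying on the mutual $O(1)$-embeddability of \corollaryref{cor:jtree}) must be run with a choice of $i_0$ that is at once $\omega(1)$ (so $\beta=n^{o(1)}$ and the per-level round count is affordable), $o(\sqrt{\log n/\log\log n})$ (so the accumulated approximation stays $n^{o(1)}$), and $\polylog n$ (so the cluster spanning trees remain of depth $\tilde O(\sqrt n)$ after $i_0$ increments of $\tilde O(\sqrt n)$).
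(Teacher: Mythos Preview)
Your proposal is correct and follows essentially the same route as the paper's proof: both recurse $i_0\approx(\log n)^{1/4}$ times, on each level sparsifying, invoking \lemmaref{lemma:Hdistribution} with $j\approx N_i/\beta$ for $\beta\approx 2^{(\log n)^{3/4}}$, sampling, and applying \corollaryref{cor:jtree}; once the core shrinks to $n^{1/2+o(1)}$ clusters it is collected globally and the remaining levels are finished locally, with the approximation, support size, and round complexity bounded exactly as you compute. The only cosmetic deviations are that the paper sparsifies with a constant $\varepsilon$ (rescaling by $1/(1-\varepsilon)$ to preserve $1$-embeddability) rather than $1/\polylog n$, samples by broadcasting $O(\log n)$ random bits after the distribution is built rather than via a reservoir, and continues Sherman's construction locally at the bottom rather than explicitly switching to R\"acke---none of which affects the argument.
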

\begin{proof}
In the following, we will use w.h.p.\ statements as if they were deterministic;
the result then follows by taking the union bound over all (polynomially many
in $n$) such statements we use.

Set $\beta:=2^{\log^{3/4} n}$. To start the
recursion, we will use $G$ as cluster graph of itself. Formally,
$\tilde{\cG}_0:=(V,E,V,\{(\{v\},\emptyset)\}_{v\in V},\operatorname{id})$, where
$\operatorname{id}$ is the identity function. We perform the following
construction until it terminates:
\begin{compactenum}
\item Sparsify $\tilde{\cG}_{i-1}$ using \lemmaref{lemma:sparsification} for
some fixed constant $\varepsilon$, e.g., $\varepsilon = 1/2$. This takes
$(\sqrt{n}+D)n^{o(1)}$ rounds. Multiply all edge capacities by
$1/(1-\varepsilon)$ (so $\tilde{\cG}_{i-1}$ can be $1$-embedded into the sparser
graph).
\item If $|\cV_{i-1}|\notin \omega(\sqrt{n}\beta/\log n)$, set $i_0:=i$ and
stop. This takes $O(D)$ rounds by communicating over a BFS tree of $G$.
\item Apply \lemmaref{lemma:Hdistribution} for $j=|\cV_{i-1}|/(4\beta)$ to the
sparsified cluster graph; by the previous step, this choice of $j$ is feasible.
As $|\cV_{i-1}|/j=4\beta\in n^{o(1)}$, constructing the distribution requires
$(\sqrt{n}+D)n^{o(1)}$ rounds in total.
\item Sample a cluster graph from the distribution. This is done in $O(D)$
rounds letting some node broadcast $O(\log n)$ random bits over a BFS tree.
\item Apply \corollaryref{cor:jtree} to extract a $|\cV_{i-1}|/\beta$-tree 
of $\cG_{i-1}$. The corollary also yields a
cluster graph $\tilde{\cG}_i$ (which is also a cluster graph on network graph
$\cG_{i-1}$) so that each of its clusters $c_i$ contains
exactly one portal cluster $p(c_i)$ of the $|\cV_{i-1}|/\beta$-tree on
$\cG_{i-1}$. This step completes in $\tilde{O}(\sqrt{n}+D)$ rounds:
there are fewer than $\log_{\beta} \sqrt{n}\ll \log n$ iterations of the
overall construction, as $|\cV_i|\leq |\cV_{i-1}|/\beta$, implying that $d\in
\tilde{O}(\sqrt{n})$ for each application of \corollaryref{cor:jtree}.
\item Recurse on $\tilde{\cG}_i$, i.e., set $i:=i+1$ and go back to Step 1.
\end{compactenum}
When the above construction halts, we have that
$|\cV_{i_0-1}|=O(\sqrt{n}\beta)=n^{1/2+o(1)}$. Thus, we can make the
(sparsified) cluster graph $|\cG_{i_0-1}|$ known to all nodes in
$(\sqrt{n}+D)n^{o(1)}$ rounds via a BFS tree of $G$. We then continue the
construction locally without controlling the size of components, which removes
the constraint on $j$ when applying \lemmaref{lemma:skeleton}, until the core
becomes empty, i.e., we construct a tree.\footnote{This is essentially Sherman's
construction on the small constructed cluster graph.} We collapse the cluster
graph hierarchy for all locally performed iterations $i\geq i_0$, which
defines the tree $\cG_{i_0}$ on clusters $\cC_{i_0}$ (this is feasible as each
$\cG_i$, $i>0$, is also a cluster graph on network graph $\cG_{i-1}$).

This completes the description of the algorithm. Summing up the running times of
the individual steps and using that $i_0=o(\log n)$, we conclude that the
construction takes $(\sqrt{n}+D)n^{o(1)}$ rounds. The construction also
maintained the stated structural properties of the cluster hierarchy. Hence, it
remains to show that (i) we sampled from a distribution of $n^{1+o(1)}$ trees
and (ii) the stated cut approximation properties are satisfied.

Showing these properties now is straightforward. In each step $i>0$ of the
recursion, by \lemmaref{lemma:Hdistribution} we constructed a distribution on
$\tilde{O}(\beta)$ $|\cV_{i-1}|$-trees. The total number of recursive steps
(including the local ones), is bounded by $\lceil \log_{\beta} n\rceil =
O(\log^{1/4} n)$, as $|\cV_i|\leq |\cV_{i-1}|/\beta$ for each $i>0$. On each
level of recursion, we compute a distribution on $2^{O(\sqrt{\log
|\cV_i| \log\log |\cV_i|})}\beta \leq 2^{O(\sqrt{\log n \log \log n})}\beta$
graphs. Hence, the total number of virtual trees in the (implicit) distribution
of virtual trees from which we sampled is bounded by
\begin{equation*}
\left(2^{O(\sqrt{\log n \log \log n})}\beta\right)^{\lceil \log_{\beta} n
\rceil}=n \cdot 2^{O(\sqrt{\log n \log \log n}\log^{1/4}n)}=n^{1+o(1)}.
\end{equation*}
Consider a cut of $G$ of capacity $C$. By the properties of decompositions and
the fact that we multiplied capacities by $1/(1-\varepsilon)$ whenever we
sparsified, $G$ is $1$-embeddable into any of the trees we might construct,
implying that the corresponding cut of the sampled tree has capacity at least
$C$. As in each step, we (i) apply a $(1+\varepsilon)$-sparsifier and multiply
capacities by $1/(1-\varepsilon)$ for constant $\varepsilon$, (ii) construct a
$(2^{O(\sqrt{\log n \log \log n})},\mathbb{H})$-decomposition (for some family
$\mathbb{H}$) from which we sample, and (iii) transform the resulting graph into
a $j$-tree which can be $O(1)$-embedded into the graph from which it is
constructed, we overestimate the capacity of a given cut by an expected factor
of $2^{O(\sqrt{\log n \log \log n})}\cdot O(1)= 2^{O(\sqrt{\log n \log \log
n})}$ in each step. Using that this bound is uniform and the randomness on each
level of recursion is independent, it follows that the expected capacity of a
cut of $G$ of capacity $C$ in the sampled virtual tree is bounded by
\begin{equation*}
\left(2^{O(\sqrt{\log n \log \log n})}\right)^{\lceil \log_{\beta} n
\rceil}=2^{O(\sqrt{\log n \log \log n}\log^{1/4}n)} = n^{o(1)}.
\end{equation*}
\end{proof}



\section{The High-Level Algorithm}\label{sec:gradient}
The algorithm is  a distributed implementation of Sherman's algorithm
\cite{Sherman13}. It consists of a logarithmic
number of calls to algorithm $\mathsf{AlmostRoute}$, described
in \sectionref{subsec:almostRoute}, and one computation of a 
maximum-weight
spanning tree and routing the left-over demand through this tree.  
Pseudocode for the top-level algorithm is
presented in \algorithmref{alg:Flow}.
\begin{algorithm}[H]
\caption{Max Flow. Input: demand vector $\vb\in\setR^n$; output: flow 
vector $\vf\in\setR^m$.}
\label{alg:Flow}
\begin{algorithmic}[1]
\small
\State $\vb_0\gets \vb$; $\vf_0\gets\mathbf{0}$
\For{$i\gets1$ to $(\log m+1)$}
	\State $\vf_i \gets \mathsf{AlmostRoute}(\vb_i,\frac12)$
	\State $\vb_i \gets \vb_{i-1}-B\vf_{i-1}$.
\EndFor
\State\label{st-mst} Compute a maximum weight spanning tree $T$ on $G$, 
where weights are the capacities of edges.
\State\label{st-rt-mst} Route the residual demand $\vb_{t}$ through 
$T$; let $\vf_T$ be the resulting flow.
\State Output $\vf_T+\sum_{i=1}^{1+\log m}\vf_i$.
\end{algorithmic}
\end{algorithm}
Most of this section is dedicated to explaining how to implement the 
$\mathsf{AlmostRoute}$
algorithm. Let us  first quickly outline how we implement the final 
steps using standard techniques.
\begin{lemma}\label{lem-subtrees}
	Steps \ref{st-mst}--\ref{st-rt-mst} Can be implemented in the 
	\Congest model in $\tilde O(D+\sqrt n)$ rounds w.h.p.
\end{lemma}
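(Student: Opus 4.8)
The plan is to handle the two steps separately, since each reduces to a well-understood distributed primitive. For Step~\ref{st-mst}, computing a maximum-weight spanning tree is equivalent to computing a minimum-weight spanning tree (negate the weights, or subtract them from a large common value), and the classical Kutten--Peleg MST algorithm~\cite{Kutten-Peleg} accomplishes this in $\tilde{O}(D+\sqrt{n})$ rounds in the \Congest model w.h.p. I would simply invoke this as a black box, noting that edge capacities are polynomially bounded and hence fit in $O(\log n)$-bit messages, so there is no issue with bandwidth. The output is the standard distributed representation: each node learns which of its incident edges belong to $T$ and which neighbor is its parent once $T$ is rooted (rooting costs another $O(D)$ rounds via a BFS traversal, or can be folded into the MST construction).

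For Step~\ref{st-rt-mst}, routing the residual demand $\vb_t$ through the tree $T$ is a unique operation: in a tree, the flow on the edge from a node $u$ to its parent is exactly the sum of the residual demands in the subtree $T_u$ rooted at $u$. Thus the task is a single convergecast on $T$ of the subtree-sums of $\vb_t$, after which each node knows the flow on its parent edge, hence on all its incident tree edges. The only obstacle is that $T$ may have depth $\Omega(n)$, so a naive convergecast is too slow. I would resolve this by the same decomposition technique used throughout the paper: partition $T$ into $O(\sqrt{n})$ connected components of depth $\tilde{O}(\sqrt{n})$ (e.g.\ via \lemmaref{lemma:smallstrongdiameter}, or the deterministic Kutten--Peleg fragment decomposition), aggregate the partial subtree-sums within each component in $\tilde{O}(\sqrt{n})$ rounds, collect the $O(\sqrt{n})$ per-component summaries at the root of a global BFS tree in $O(D+\sqrt{n})$ rounds using pipelining, solve the resulting $O(\sqrt{n})$-node contracted tree locally, broadcast the contracted-tree flow values back in $O(D+\sqrt{n})$ rounds, and finally propagate corrected subtree-sums inside each component in another $\tilde{O}(\sqrt{n})$ rounds. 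Summing over the two steps gives the claimed $\tilde{O}(D+\sqrt{n})$ bound.

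The only point that requires a sentence of care is consistency of the decomposition across the two steps: the component decomposition of $T$ used for routing is independent of the one used (if any) inside the MST algorithm, so there is no interaction; and since $T$ is an arbitrary spanning tree rather than one of the virtual trees in the congestion-approximator hierarchy, we cannot reuse a previously computed low-depth decomposition and must perform one afresh here, which is nonetheless within budget. I expect no genuine difficulty in this lemma — the content is entirely ``standard techniques'' as stated — so the write-up should be short, with the subtree-sum-via-decomposition argument being the only part worth spelling out.
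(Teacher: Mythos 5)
Your proposal matches the paper's proof: both reduce Step~\ref{st-mst} to Kutten--Peleg MST with negated weights, both observe that routing on $T$ reduces to computing subtree demand sums, and both handle arbitrary tree depth by decomposing $T$ into $O(\sqrt{n})$ components of depth $\tilde{O}(\sqrt{n})$ (the paper does this by dropping each tree edge independently with probability $1/\sqrt{n}$), aggregating inside components, pipelining $O(\sqrt{n})$ summaries over a global BFS tree, solving the contracted tree locally, and propagating corrections back.
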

\begin{proof}[\textbf{Proof sketch.}]
A maximum weight spanning tree $T$ can be computed in $\tilde{O}(D+\sqrt{n})$
rounds using the minimum weight spanning tree algorithm of Kutten and
Peleg~\cite{Kutten-Peleg} (say, by assigning weight $w(e):= -\Cp(e)$ for each
edge $e$). To compute the flow, we use the following observation: if $T$ was
rooted at one of its nodes, then to route the demand $\vb_t$ over $T$, it would
be sufficient for each node $v$ to learn the total demand $d_v$ in the subtree
rooted at $v$. In this case each node $v$ assigns $d_v$ units of flow to the
edge leading from $v$ to its parent.
	
We now show how to root the tree and find the total demand in each subtree in
$\tilde{O}(D+\sqrt{n})$ rounds. The algorithm is as follows. Remove each edge of
the tree independently with probability $1/{\sqrt{n}}$. W.h.p.,
\begin{compactenum}[(i)]
\item each connected component induced by the remaining edges contains has
strong diameter $\tilde{O}(\sqrt{n})$,
\item $O(\sqrt{n})$ edges are removed, and hence
\item the number of components is $O(\sqrt{n})$.
\end{compactenum}
Within each component, all demands are summed up, and this sum is made known to
all nodes. The summation takes $\tilde{O}(\sqrt{n})$ rounds due to (i), 
and we
can pipeline the announcement of the sums over a BFS tree in
$\tilde{O}(\sqrt{n}+D)$ rounds due to (iii).

Moreover, in this time we can also assign unique identifiers to the components
(e.g.\ the minimum identifier) and make the tree resulting from contracting
components globally known. Using local computation only, nodes then can root
this tree (e.g.\ at the cluster of minimum identifier) and determine the sum the
demands of the clusters that are fully contained in their subtree. Using a
simple broadcast, the orientation of edges within components is determined, and
using a convergecast on the components, each node can determine the sum of
demands in its subtree. These steps take another $\tilde{O}(\sqrt{n})$ rounds.
\end{proof}

\subsection{Algorithm \texorpdfstring{$\mathsf{AlmostRoute}$}{AlmostRoute}: The
Gradient Descent}
\label{subsec:almostRoute}
We now explain how to implement Algorithm $\mathsf{AlmostRoute}$  in a 
distributed setting. The idea is to use gradient descent with
the potential
function 
$$\phi(\vf)= 
{\Smax(C^{-1}\vf)} 
+ 
{\Smax(2\alpha R(\vb-B\vf))} 
~,$$
where the ``soft-max'' function, defined by
$$\Smax(\boldsymbol y)=
\log\left(\sum_{i=1}^{k} e^{y_i}+e^{-y_i}\right)
~~\text{ for all }\boldsymbol{y}\in\Reals^k~,
$$
is used  as a differentiable approximation to the  ``max'' function. 

 Given this potential function, $\mathsf{AlmostRoute}$ performs
 $O(\alpha^2\eps^{-3}\log n)$ updates on $\vf$ and outputs a flow 
 $\vf$
 that optimizes the potential function up to a $(1+\eps)$ factor.%
 \footnote{Sherman claims that one can save a factor of $1/\eps$ by a more careful scaling~\cite{Sherman13}.}
 Pseduocode for this algorithm is given in \algorithmref{alg:AlmostRoute}.

\begin{algorithm}[H]
\caption{$\mathsf{AlmostRoute(\vb, \eps)}$}
\label{alg:AlmostRoute}
\begin{algorithmic}[1]
\small
\State $k_b\gets 2\alpha\maxnorm{R\vb}\eps/(16\log n)$; $\vb\gets k_b \vb$.
\Repeat
  	\State $k_f\gets 1$
		\While{$\phi(\vf)<16\eps^{-1}\log n$} \label{st-while}
			\State $\vf\gets \vf\cdot(17/16)$; $\vb\gets 
			\vb\cdot(17/16)$; $k_f\gets k_f\cdot(17/16)$ 
			\label{st-17/16}
		\EndWhile
		\State $\delta\gets\sum_{e\in 
		E}|\Cp(e)\frac{\partial\phi}{\partial f_e}|$ \label{st-delta}
		\If{$\delta\ge \eps/4$}
			 \State $f_e\gets 
			 f_e-\sgn\left(\frac{\partial\phi}{\partial f_e}\right)
			 \cdot\Cp(e)\frac{\delta}{1+4\alpha^2}$ \label{st-fe}
		\Else
		  \State $f_e\gets f_e/k_f$ for all edges $e\in E$.
			\State $b_v\gets b_v/(k_bk_f)$ for all nodes $v\in V$
			\State \Return
		\EndIf
\Until{done}
\end{algorithmic}
\end{algorithm}
To implement this algorithm in a distributed setting, we need to compute $R$,
and to do multiplications by $R$ or its transpose $R^{\top}$, distributively.
These multiplications are needed for computing $\phi(\vf)$ and and its partial
derivatives. We remark that $R$ and $R^\top$ are not constructed 
explicitly, as we
need to ensure a small time complexity for each iteration. Assuming that we can
perform these operations, each step of $\mathsf{AlmostRoute}$ can be completed
in $\tilde{O}(D)$ additional rounds.

We maintain the invariant that at the beginning of each iteration of the
\textbf{repeat} loop, each node $v$ knows the current flow over each of the
links $v$ is  incident to, and the current demand at $v$ (i.e., $(\vb-B\vf)_v$).
Let us break the
potential function $\phi$ in two, i.e.,
$$
\phi(\vf)=\phi_1(\vf)+\phi_2(\vf)~\text{,~~~~~where }
\phi_1(\vf)=\Smax(C^{-1}\vf)~~~\text{and}~~~
\phi_2(\vf)={\Smax(2\alpha R(\vb-B\vf))}~.
$$
We proceed as follows. First, we compute
$\phi_1(\vf)$: to find $\Smax(C^{-1}\vf)$, it suffices to
sum $\exp(f_e/\Cp(e))$ and $\exp(-f_e/\Cp(e))$ over all edges $e$, which can be
done in $O(D)$ rounds. As Sherman points out, 
$\phi(\vf) = \Theta(\eps^{-1} \log n)$ due to the scaling, and thus, 
encoding
$\exp(\phi(\vf))$ with sufficient accuracy requires $O(\eps^{-1} \log 
n)$ bits,
which is thereby also a bound on the encoding length of all individual terms in
the sums for $\phi_1$ and $\phi_2$. The error introduced by rounding theses
values to integers is small enough to not affect the asymptotics of the running
time.

For determining $\phi_2(\vf)$, we compute the vector $\vy := 2\alpha 
R(\vb - B\vf)$ and then do an aggregation on a BFS tree as for $\phi_1(\vf)$. 
Since $B\vf$ can be computed instantly ($(B\vf_v$ is exactly the net 
flow into $v$), this boils down to multiplying a locally known vector
with $R$. Before we discuss how implement this operation, let us explain more
about the structure of $R$ and how we  determine
$\frac{\partial \phi}{\partial f_e}$, which is required in 
Lines~\ref{st-delta}
and~\ref{st-fe} of the algorithm.

The linear operator $R$ is induced by graph cuts. More precisely, in
the matrix representation of $R$, there is one row for each cut our congestion
approximator (explicitly) considers. We will clarify the structure of $R$
shortly; for now, denote by $I$ the set of row indices of $R$. Observe that
\begin{equation}
\frac{\partial\phi}{\partial
f_e} =\frac{\exp(f_e/\Cp(e))-\exp(-f_e/\Cp(e))} {\Cp(e) \exp(\phi_1)}
		+\frac{\partial \phi_2}{\partial f_e}\label{eq:del_phi}
\end{equation}
and hence, given that $\phi_1$ is known, the first term is locally computable.
The second term expands to
\begin{align*}
\frac{\partial \phi_2}{\partial f_e} = \sum_{i\in I} \frac{\partial
\phi_2}{\partial y_i} \cdot \frac{\partial y_i}{\partial f_e} = \sum_{i\in I}
\frac{\exp(y_i) - \exp(-y_i)}{\exp(\phi_2)} \cdot \frac{2\alpha B_{i,e}}{\Cp(i)},
\end{align*}
where $\Cp(i)$ is the capacity of cut $i$ in the congestion approximator and
$B_{i,e} \in \{-1,0,1\}$ denotes whether $e$ is outgoing ($-1$), ingoing ($1$),
or not crossing cut $i$.%
\footnote{Technically, $B_{i,e} = \sum_{v\in S_i} B_{ve}$ where $S_i$ is the set of nodes defining cut $i$. 
}

The cuts $i\in I$ are induced by the edges of a collection of (rooted,
virtual, capacitated) spanning trees $\TT$, where for $\cT\in \TT$ we write
$(v,\hat{v})\in \cT$ if $\hat{v}$ is the parent of $v$ and denote by $\cT_v$ the
subtree rooted at $v$. For each $\cT\in \TT$, each edge $(v,\hat{v})\in \cT$ now
induces a (directed) cut $(T_v;\overline{T_v})$ with index 
$i(\cT,(v,\hat{v}))$. We denote the set of edges crossing this cut by 
by $\delta(\cT_v)$. Let us also define
$$
p(\cT,v)=\frac{\exp(y_{i(\cT,(v,\hat{v}))}) -
	\exp(-y_{i(\cT,(v,\hat{v}))})}{\exp(\phi_2)}
\cdot \frac{2\alpha}{\Cp_{\cT}((v,\hat{v}))}~.
$$
With this notation, we have that
\[
\frac{\partial \phi_2}{\partial f_e} 
= \sum_{\cT \in \TT} \sum_{\substack{(v,\hat{v})\in \cT\\
e \in \delta(\cT_v)}} 
p(\cT,v)\cdot
  B_{i(\cT,(v,\hat{v})),e}.
\]
We call $p(\cT,(v,\hat{v}))$ the \emph{price} of the (virtual) edge
$(v,\hat{v})\in \cT$. Let $\cP_{v,\cT}$ denote the edge set of the unique path
in $\cT$ from $v$ to the root of $\cT$. We define a \emph{node 
potential} for each node $v$ by
\[
 \pi_v := \sum_{\cT \in \TT} \sum_{(w,\hat{w}) \in \cP_{v,\cT}}
 p(\cT,(w,\hat{w}))~.
\]
For any $e=(u,v)$, the cuts induced by edges in $\cT\in \TT$
that $e$ crosses correspond to the edges on the unique path from $u$ to $v$ in
$\cT$. For all edges $(w,\hat{w})\in \cT$ on the path from $u$ to the least
common ancestor of $u$ and $v$ in $\cT$, $B_{i(\cT,(w,\hat{w})),e}=-1$, while
$B_{i(\cT,(w,\hat{w})),e}=+1$ for the edges on the path between $v$ and this
least common ancestor. Thus,
\begin{equation}
\frac{\partial \phi_2}{\partial f_e} 
= \pi_v - \pi_u\,,\label{eq:del_phi_2_potential}
\end{equation}
and our task boils down to determining the value of the potential $\pi_v$ at
each node $v\in V$.
To this end, we need two key subroutines to compute distributively the 
following quantities.  
\begin{enumerate}[(1)]
\item $y_i$ for each cut $i$. Note that $\vb-B\vf$ is known distributedly, i.e., each 
node knows its own coordinate of this vector. For each tree in $\cT\in \TT$, we
need to aggregate this information from the leaves to the root. This means to
simulate a convergecast on the virtual tree $\cT$.
\item $\pi_v$ for each node $v$. Provided that each (virtual) tree edge knows
its $y$-value and $\phi_2$, the prices can be computed locally. Then the
contribution of each tree to the node potentials can be computed by a downcast
from the corresponding root to its leaves.
\end{enumerate}
With these routines, one iteration of the \textbf{repeat} loop is now executed
as follows:
\begin{compactenum}
\item Compute $\phi_1$, $\vy$ (local knowledge), and $\phi_2$ (aggregation on
BFS tree once $\vy$ is known).
\item Check the condition in Line~\ref{st-while}. If it holds, locally
update $\vb$, $\vf$, and $k_f$, and go to the previous step.
\item Compute the potential $\boldsymbol{\pi}$ (local knowledge).
\item For each $e\in E$, its incident edges determine $\frac{\partial
\phi}{\partial f_e}$ (based on Equations~\ref{eq:del_phi}
and~\ref{eq:del_phi_2_potential}, it suffices to exchange $\pi_u$ and
$\pi_v$ over $e$).
\item Compute $\delta$ (aggregation on BFS tree).
\item Locally update $f_e$ and $b_v$ for all $e\in E$ and $v\in V$.
\end{compactenum}
Note that all of the individual operations except for computation of $\vy$ and
$\boldsymbol{\pi}$ can be completed in $O(D)$ rounds. 
Sherman proved~\cite{Sherman13} that $\mathsf{AlmostRoute}$ terminates after 
$\tilde{O}(\eps^{-3}\alpha^2)$ iterations. As it is only called $O( \log n )$
times by the max-flow algorithm, 
\theoremref{theorem:main_informal} follows if we can compute $\vy$ and
$\boldsymbol{\pi}$ in $(\sqrt{n}+D)n^{o(1)}$ rounds for an $\alpha$-congestion
approximator with $\alpha=n^{o(1)}$; this is subject of the next subsection.

\subsection{Congestion Approximation}
\label{subsec:congestionApprox}
\begin{figure}[t]
	\centering
		\includegraphics[width=0.80\textwidth]{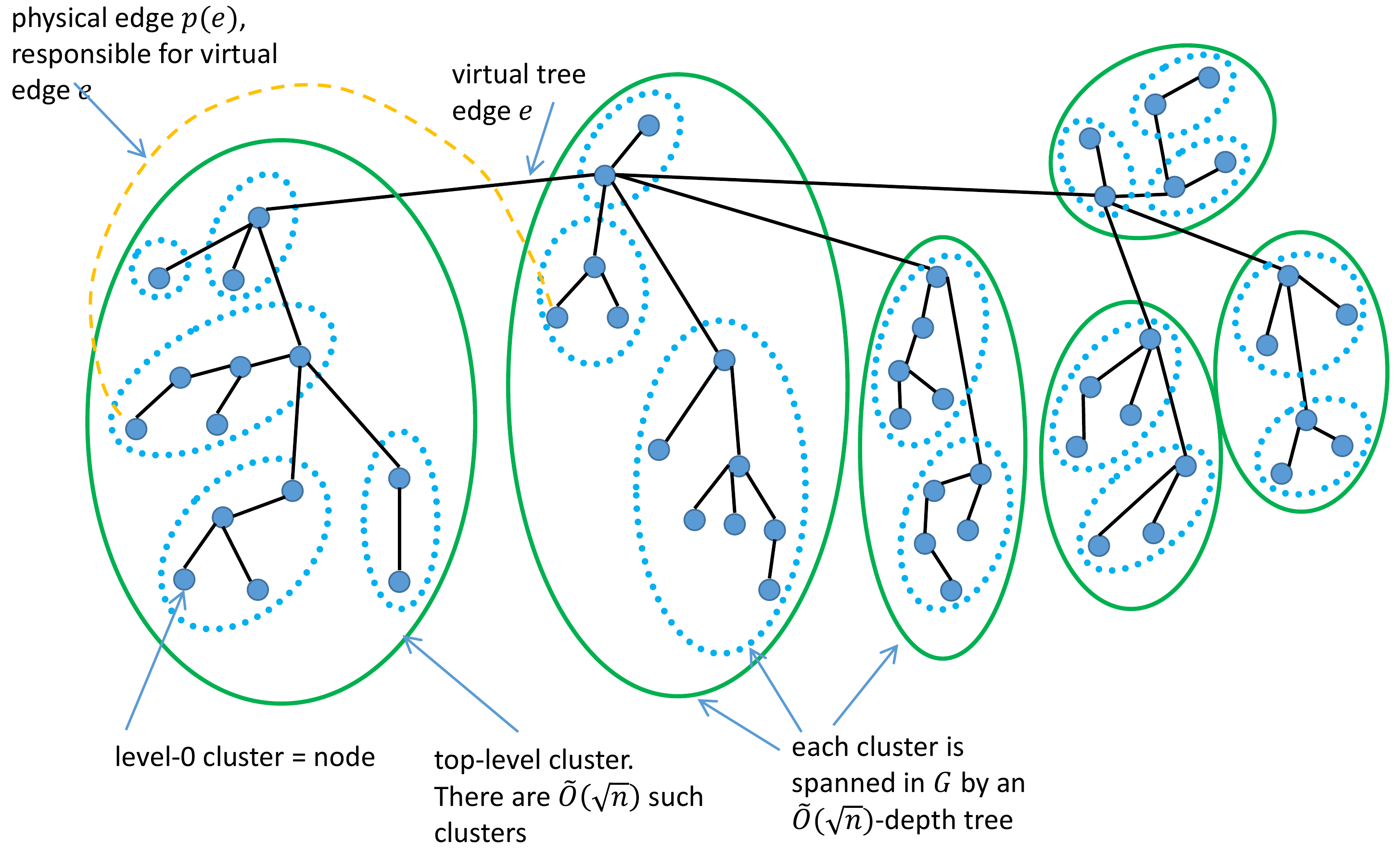}
	\caption{\small Hiearchical cluster decomposition of a virtual tree
	$\cT\in \TT$. Black edges are virtual tree edges, which are represented by a
	physical edge connecting the top-level clusters they connect (the orange
	dotted edge $p(e)$ corresponds to the edge labeled $e$). Each cluster is
	spanned by a tree in $G$ of depth $\tilde{O}(\sqrt{n})$, which is not shown.}
	\label{fig:hierarchical}
\end{figure}

Our congestion approximator $R$ is defined by the edge-induced cuts of a
sample $\TT$ of virtual trees $\cT$ from a recursively constructed distribution.
The trees are represented distributedly by a hierarchy of \emph{cluster graphs}
(see \figureref{fig:hierarchical} for an illustration and
\sectionref{sec:cluster} for the formal definition of cluster graphs).
Intuitively, a cluster graph partitions the nodes into clusters, each of which
has a spanning tree rooted at a leader, and a collection of edges between
clusters that are represented by corresponding graph edges between some nodes of
the clusters they connect. In \sectionref{sec:jtrees}, we have shown the
following theorem.
\begin{theoremR}{theorem:distribution}
W.h.p., within $(\sqrt{n}+D)n^{o(1)}$ rounds of the \Congest model, we
can sample a tree $\cT$ from a distribution of $n^{1+o(1)}$ (virtual)
rooted spanning trees on $G$ with the following properties.
\begin{compactitem}
\item For any cut of $G$ of capacity $C$, the capacity of the cut in $\cT$ is
at least $C$.
\item For any cut of $G$ of capacity $C$, the expected capacity of the cut in
$\cT$ is at most $\alpha C$, where $\alpha\in n^{o(1)}$.
\item The distributed representation of $\cT$ is given by a hierarchy of cluster
graphs $\cG_i=(\cV_i,\cE_i,\cL_i,\mathfrak{T}_i,\psi_i)$,
\ $i\in \{0,\ldots,i_0\}$, $i_0\in o(\log
n)$, on network graph $G$, with the following properties.
\begin{compactitem}
\item The spanning trees of the clusters of $\cG_i$ have depth
$\tilde{O}(\sqrt{n})$.
\item $|\cV_{i_0}|=n^{1/2+o(1)}$.
\item $\cG_i$ is the (rooted) tree resulting from $\cT$ by contracting the
clusters of $\cG_i$.
\item For $i>0$, $\cG_i$ is also a cluster graph on network graph $\cG_{i-1}$.
\item For $i>0$, each cluster $c_i\in \cV_i$ of $\cG_i$, interpreted as cluster
graph on $\cG_{i-1}$, contains a unique \emph{portal cluster} $p(c_i)\in
\cV_{i-1}$ of $\cG_{i-1}$ that is incident\footnote{Note that the corresponding
physical edges in $G$ may still connect to different sub-clusters of $c_i$.} to
all edges of $\cG_i$ containing $c_i$. That is, $\cG_{i-1}$ is a $|\cV_i|$-tree
with core $p(\cV_i)$.
\end{compactitem}
\end{compactitem}
\end{theoremR}
The first two properties of each $\cT$ stated in the theorem imply that we can
use them to construct a good congestion approximator $R$. More precisely,
\lemmaref{lemma:approximator} implies the following corollary.
\begin{corollary}\label{cor:iterations}
Sampling a collection $\TT$ of $O(\log n)$ virtual trees given by
\theoremref{theorem:sample} and using them as congestion approximator $R$ in the
way specified in \sectionref{subsec:almostRoute} implies that the total number
of iterations of \algorithmref{alg:AlmostRoute} is $n^{o(1)}$.
\end{corollary}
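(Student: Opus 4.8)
The plan is to chain together three facts that are already established: the distributional guarantees of \theoremref{theorem:distribution}, the congestion-approximator bound of \lemmaref{lemma:approximator}, and Sherman's iteration bound for $\mathsf{AlmostRoute}$ in terms of the quality of the approximator.

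First I would observe that \theoremref{theorem:distribution} furnishes exactly the hypothesis of \lemmaref{lemma:approximator}: it produces a distribution over $n^{1+o(1)}\subseteq\poly n$ rooted spanning trees of $G$ such that every cut of capacity $C$ has capacity at least $C$ in each sampled tree and at most $\alpha C$ in expectation, where $\alpha\in n^{o(1)}$. Applying \lemmaref{lemma:approximator} to a collection $\TT$ of $O(\log n)$ independent samples and letting $R$ be the matrix whose rows are the single-edge-induced cuts of the trees in $\TT$ (exactly as described in \sectionref{subsec:almostRoute}), we obtain that w.h.p.\ $R$ is a $2\alpha^2$-congestion approximator of $G$. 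Since $\alpha\in n^{o(1)}$, also $2\alpha^2\in n^{o(1)}$; write $\alpha':=2\alpha^2$.

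Next I would invoke the analysis of $\mathsf{AlmostRoute}$: as recalled in \sectionref{subsec:almostRoute}, Sherman~\cite{Sherman13} proves that with an $\alpha'$-congestion approximator the \textbf{repeat} loop of \algorithmref{alg:AlmostRoute} terminates after $\tilde{O}(\eps^{-3}\alpha'^2\log n)$ iterations. The invocations issued by \algorithmref{alg:Flow} use the constant accuracy parameter $\eps=\frac12$, so the bound simplifies to $\tilde{O}(\alpha'^2)$. Plugging in $\alpha'=2\alpha^2\in n^{o(1)}$ and absorbing the $\polylog n$ factors hidden in $\tilde{O}(\cdot)$ into $n^{o(1)}$ yields a total of $n^{o(1)}$ iterations, which is the claim.

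Since every ingredient is already in place, I do not expect a genuine obstacle here; the proof is essentially bookkeeping. The two points that warrant a moment's care are (i) verifying that the $n^{1+o(1)}$ trees of \theoremref{theorem:distribution} indeed meet the $\poly n$ requirement of \lemmaref{lemma:approximator}, so that the union bound over all $n^{1+o(1)}\cdot(n-1)$ candidate cuts succeeds w.h.p.; and (ii) confirming that the squaring of the approximation ratio incurred by \lemmaref{lemma:approximator} (the price for not using the entire distribution), together with the extra constant factor and Sherman's quadratic dependence on the approximator quality, still leaves the iteration count in $n^{o(1)}$ — which it does, as $n^{o(1)}$ is closed under a constant number of multiplications and squarings.
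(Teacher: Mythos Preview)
Your proposal is correct and matches the paper's reasoning: the paper simply states that the corollary follows from \lemmaref{lemma:approximator} applied to the distribution of \theoremref{theorem:distribution}, and your write-up spells out exactly this chain (distribution guarantees $\Rightarrow$ $2\alpha^2$-approximator via \lemmaref{lemma:approximator} $\Rightarrow$ Sherman's $\tilde{O}(\eps^{-3}\alpha'^2)$ iteration bound with $\alpha'\in n^{o(1)}$). The two checkpoints you flag---that $n^{1+o(1)}\subseteq\poly n$ and that $n^{o(1)}$ is closed under squaring and polylog factors---are the only things to verify, and both are immediate.
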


All that remains now is to show that the distributed representation of each
sampled $\cT\in \TT$ allows to simulate a convergecast and a downcast on $\cT$
in $(\sqrt{n}+D)n^{o(1)}$ rounds: then we can implement the key subroutines (1)
and (2) (i.e., compute $\vy$ and $\boldsymbol{\pi}$) outlined in
\sectionref{subsec:almostRoute} with this time complexity, and by
\corollaryref{cor:iterations} the total number of rounds of the computation is
bounded by $(\sqrt{n}+D)n^{o(1)}$.

Fortunately, the recursive structure of the decomposition is very specific. The
cluster graphs of the different levels of recursion are nested, i.e., the
clusters of the $(i-1)^{th}$ level of recursion are subdivisions of the clusters
of the $i^{th}$ level. What is more, each cluster is a subtree of the virtual
tree and is spanned by a tree of depth $\tilde{O}(\sqrt{n})$ in $G$ (cf.\
\figureref{fig:hierarchical}) Hence, while the physical graph edges representing
the virtual tree edges are between arbitrary nodes within the clusters they
connect, we can (i) identify each cluster on each hierarchy level with the root
of the subtree induced by its nodes, (ii) handle such subtrees recursively (both
for convergecasts and downcasts), (iii) on each level of recursion but the last,
perform the relevant communication by broadcasting or upcasting on the
underlying cluster spanning trees in $G$ of depth $\tilde{O}(\sqrt{n})$, and
(iv) communicate over a BFS tree of $G$ on the final level of recursion, where
merely $n^{1/2+o(1)}$ clusters/nodes of the virtual tree remain.
\begin{corollary}\label{cor:virtual_upcast_convergecast}
On each virtual tree $\cT\in \TT$, we can simulate convergecast and upcast
operations in $\tilde{O}(\sqrt{n}+D)$ rounds.
\end{corollary}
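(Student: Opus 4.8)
The plan is to make points (i)--(iv) preceding the corollary precise, using the nested cluster hierarchy $\cG_0,\dots,\cG_{i_0}$ of \theoremref{theorem:distribution}. Four features are exploited: there are only $i_0=o(\log n)$ levels; the cluster partitions are nested, so every level-$i$ cluster is a disjoint union of level-$(i-1)$ clusters and is itself a connected subtree of $\cT$; every cluster is spanned in $G$ by a tree of depth $\tilde{O}(\sqrt{n})$ that is compatible with the hierarchy (its restriction to each sub-cluster is that sub-cluster's spanning tree, and the $\cT$-edges between sub-clusters appear in it as $\psi$-images); and $\cG_{i_0}$ is a tree on only $n^{1/2+o(1)}$ clusters. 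The general principle: any cast along $\cT$ decomposes into casts \emph{internal} to clusters, each run on a depth-$\tilde{O}(\sqrt{n})$ spanning tree in $G$ and composed over the $o(\log n)$ levels, plus one \emph{global} step at the top, where $\cG_{i_0}$ and any $n^{1/2+o(1)}$ relevant aggregates are gathered at the root of a BFS tree of $G$, processed locally, and broadcast back, all in $(\sqrt{n}+D)n^{o(1)}$ rounds, exactly as at the end of the proof of \theoremref{theorem:distribution}. (One must \emph{not} cast along $\cG_{i_0}$ directly, as its depth may be $n^{1/2+o(1)}$ while each of its hops costs $\tilde{O}(\sqrt{n})$ rounds of intra-cluster relaying; routing through a BFS tree of $G$ is what keeps the top level cheap.)

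For the convergecast --- each $\cT$-edge $(v,\hat v)$ is to learn $\sum_{u\in\cT_v} x_u$ --- I would run two passes. Up-pass: for every cluster $c$ at every level, perform an ordinary convergecast on $c$'s depth-$\tilde{O}(\sqrt{n})$ spanning tree in $G$ with the original node values $x_u$; this simultaneously produces the cluster total $\Sigma_c:=\sum_{u\in c} x_u$ and, for every $\cT$-edge $e$ that is a boundary edge between two sub-clusters of $c$, the sum of $x_u$ over the portion of $\cT_v$ lying inside $c$. Because clusters on a common level are vertex-disjoint, this costs $\tilde{O}(\sqrt{n})$ per level and $\tilde{O}(\sqrt{n})$ over all levels. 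Global step: gather the small tree $\cG_{i_0}$ and the totals $\Sigma_c$ of its clusters at the BFS root, compute there all subtree sums of $\cG_{i_0}$, and broadcast them back, in $\tilde{O}(\sqrt{n}+D)$ rounds. Down-pass: proceeding from the top of the hierarchy downward, push into each cluster the subtree sum lying below the $\cT$-edge by which it attaches to its parent cluster, together with the already-computed subtree sums of its child clusters; inside the cluster these external contributions are routed to the relevant physical endpoints by a broadcast on its depth-$\tilde{O}(\sqrt{n})$ spanning tree and added to the within-cluster partial sums from the up-pass, yielding the true $\sum_{u\in\cT_v} x_u$. Summing, $2i_0\cdot\tilde{O}(\sqrt{n}) + \tilde{O}(\sqrt{n}+D) = \tilde{O}(\sqrt{n}+D)$ since $i_0=o(\log n)$. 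The \emph{downcast} used for the node potentials $\boldsymbol\pi$ in \sectionref{subsec:almostRoute} (each node accumulates the prices along its root path) is the exact mirror image of this convergecast and satisfies the same bound.

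The (pipelined) upcast --- every node holds some $O(\log n)$-bit messages to be delivered to the root of $\cT$ --- is handled bottom-up in the same spirit: inside each cluster $c$, collect all messages currently held in $c$ at the node of $c$ that is the $c$-side endpoint of the physical edge representing $c$'s parent $\cT$-edge, by an ordinary pipelined upcast on $c$'s depth-$\tilde{O}(\sqrt{n})$ spanning tree in $G$; ship the resulting bundle across that physical edge, regard $c$ as a super-node holding it, and recurse one level up; at the top level route all surviving messages to the BFS root of $G$. Each level adds $\tilde{O}(\sqrt{n})$ plus the number of messages crossing it, so for a batch of $K$ messages the total is $\tilde{O}(\sqrt{n}+D+K)$, which is $\tilde{O}(\sqrt{n}+D)$ for all the $K\in n^{o(1)}$ batches that occur in our use of the congestion approximator.

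The step requiring the most care is the bookkeeping of the down-pass: at each level and for each internal $\cT$-edge $(v,\hat v)$ of a cluster $c$, the node $v$ must know precisely which of $c$'s child $\cT$-edges (toward child clusters in that level's contracted tree) lie below it in $\cT$, hence which child-cluster subtree sums must be added; those sums, arriving at scattered physical endpoints inside $c$, have to be delivered and positioned correctly before the internal addition, and this must be kept consistent across all $o(\log n)$ levels while every per-level cost remains $\tilde{O}(\sqrt{n})$ and only a single top-level step costs $\tilde{O}(\sqrt{n}+D)$, so that the $\tilde{O}(\cdot)$ notation absorbs the level count. Granting this, everything else reduces to standard convergecast, broadcast, and pipelined up-/downcast on trees of depth $\tilde{O}(\sqrt{n})$ (inside clusters) or on a BFS tree of $G$ (at the top level), all with the claimed complexity.
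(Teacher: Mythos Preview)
Your proof is correct and expands the paper's sketch (points (i)--(iv) immediately preceding the corollary) in the natural way; the paper itself supplies no further detail beyond that paragraph. One simplification you overlook: the portal property of \theoremref{theorem:distribution} --- all $\cG_\ell$-edges incident to a level-$\ell$ cluster $c$ attach, at the $\cG_{\ell-1}$ level, to the single portal sub-cluster $p(c)$ --- forces $p(c)$ to be the root of the tree-on-sub-clusters inside $c$, so for any $\cT$-edge internal to $c$ the subtree $\cT_v$ lies entirely within $c$; hence the ``external child-cluster contributions'' in your down-pass are always zero, and the bookkeeping you single out as most delicate is in fact vacuous --- a single up-pass on the level-$i_0$ clusters' spanning trees plus the global BFS step already yields every $y_e$.
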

\theoremref{theorem:main_informal} now follows from Sherman's results on the
number of iterations of the gradient descent algorithm~\cite{Sherman13}, the
discussion in \sectionref{subsec:almostRoute}, and
Corollaries~\ref{cor:iterations} and~\ref{cor:virtual_upcast_convergecast}.

\todo{Add conclusions/open problems/future work?}

\bibliographystyle{abbrv}
\bibliography{../flow}

\end{document}